\documentclass[11pt]{article}
\usepackage[margin=1in]{geometry}
\usepackage{authblk}
\PassOptionsToPackage{final}{microtype}
\PassOptionsToPackage{backend=biber,style=ieee,citestyle=numeric-comp,maxcitenames=1,mincitenames=1,natbib=true}{biblatex}
\PassOptionsToPackage{colorlinks}{hyperref}
\input{preamble.tex}

\def\tmp#1#2#3{%
  \definecolor{Hy#1color}{#2}{#3}%
  \hypersetup{#1color=Hy#1color}}
\tmp{link}{HTML}{800006}
\tmp{cite}{HTML}{2E7E2A}
\tmp{file}{HTML}{131877}
\tmp{url} {HTML}{8A0087}
\tmp{menu}{HTML}{727500}
\tmp{run} {HTML}{137776}
\def\tmp#1#2{%
  \colorlet{Hy#1bordercolor}{Hy#1color#2}%
  \hypersetup{#1bordercolor=Hy#1bordercolor}}
\tmp{link}{!60!white}
\tmp{cite}{!60!white}
\tmp{file}{!60!white}
\tmp{url} {!60!white}
\tmp{menu}{!60!white}
\tmp{run} {!60!white}

\title{Toward Circuit-Aware Mutation Testing for Quantum Programs:\\Exact Visibility and Execution Planning}

\author[1]{Andriy Miranskyy}
\author[2]{Jos{\'e} Campos}
\affil[1]{%
\textit{Department of Computer Science, Toronto Metropolitan University}\\
Toronto, Canada\\
\texttt{avm@torontomu.ca}}
\affil[2]{%
\textit{Faculdade de Engenharia, Universidade do Porto}\\
\textit{LASIGE, Faculdade de Ci\^encias, Universidade de Lisboa}\\
Porto, Portugal\\
\texttt{jcmc@fe.up.pt}}

\date{}
\begin{document}
\maketitle
\begin{abstract}
Mutation testing assesses test suites by introducing program changes and checking whether tests detect them.
In quantum programs, a surviving mutant may leave program behavior unchanged, produce a change hidden by the chosen input or measurement, or remain undetected because statistical evidence is insufficient.
We develop a circuit-aware framework to distinguish these causes and guide execution planning and test changes.
The framework covers ideal, noiseless programs and mutations that replace one gate or subcircuit.
The method traces how a mutation can affect measured outputs and combines established mathematical checks with statistical estimates.
For supported cases, we establish conditions under which a test cannot reveal a mutation and exact formulas for its effect when the required state information is available.
Under stated assumptions, we bound the mutation score while retaining surviving and unresolved mutants, and identify a restricted circuit family whose mutation effects admit efficient classical classification.
The deliverables include a decision procedure, mathematical guarantees, worked examples of test changes and their implementation requirements, and a small-circuit reference implementation for differential expectation tests.
These results help testers decide what to execute, what to change, and what remains unknown.
The evidence is limited to proofs and executable examples under the ideal model.

\end{abstract}
\section{Introduction}
Mutation testing evaluates whether a test suite can detect plausible faults.
A mutation is a program change introduced to model such a fault, and a mutant is the modified program that results.
A surviving mutant may preserve the declared program semantics or may change behavior without violating the suite's assertions.
Whether such survival reveals a testing weakness depends on how the program contract and the oracle treat the behavioral difference.
In quantum programs, measurement, finite shots, and the potentially enormous state space further complicate this distinction.
A \emph{shot} is one execution from input preparation through measurement; repeated shots estimate the program's output probabilities or expected measurement outcome.

Consider a one-qubit program that starts in state $\ket{0}$ (the quantum representation of bit 0), applies a rotation $R_y(\pi/4)$, and measures the output bit.
A sign-error mutant replaces that rotation by $R_y(-\pi/4)$.
For this input, both programs return bit 1 with probability $\sin^2(\pi/8)\approx0.1464$, so an assertion over those computational-basis frequencies cannot distinguish them, regardless of the shot count.
The output states nevertheless differ: an $X$-basis measurement, labeling the $-1$ outcome as bit 1, gives probabilities approximately $0.1464$ and $0.8536$, respectively.
If the program's entire interface is this fixed input and computational-basis output bit, the two implementations are observationally equivalent at that interface.
If the output state is available to subsequent quantum operations, changing the measurement can expose a meaningful state difference.

\subsection{Scope and Claims}
\label{sec:scope-claims}
Interpreting mutant survival requires separating semantic difference, visibility to a selected test, and evidence obtainable within an execution budget.
Insufficient evidence establishes neither equivalence nor invisibility.
Throughout the formal development, the circuit is a state-transforming program whose output may in principle be examined by different measurements, and the scalar oracle compares its expectation with the original program's expectation.
An internal state difference need not violate a classical output contract, and a differential expectation gap need not violate an independently specified acceptance assertion.

Earlier work shows why classical testing and debugging do not transfer wholesale: intermediate measurements can disturb program state, finite-shot outputs are probabilistic, and classical simulation does not scale \cite{miranskyy2019testing,miranskyy2020bugfree,miranskyy2021testingdebugging}.
The recent quantum-software-engineering roadmap therefore identifies efficient test oracles, scalable test optimization, mutant subsumption, and fault representativeness as open problems \cite{murillo2025roadmap}; our circuit-local analysis contributes exact visibility certificates and mutation-guided measurement design to the first two directions.
A recent perspective similarly links quantum test adequacy to fault-based sensitivity and statistical power at a fixed shot budget \cite{abreu2026testing}, motivating explicit treatment of both mutation-induced observable gaps and the resources needed to detect them.
This paper focuses first on the ideal, noiseless setting.
We model a quantum program as a sequence of reversible quantum gates on a fixed register, followed by a measurement used by the test oracle.
The ideal model lets us separate a change to program behavior from the ability of a particular test to observe that change.
This separation guides execution planning.
Practical application on quantum hardware will require device-specific noise and drift models.
The ideal, noiseless analysis in this paper provides a baseline for understanding which mutation effects the declared tests can detect, against which changes caused by hardware noise can subsequently be assessed.

Quantum circuits with mixed states can represent non-unitary gates, including intermediate measurements and noise \cite{aharonov1998mixed}.
Mid-circuit measurements, resets, classical control, non-unitary channels, leakage out of a qubit's two-dimensional computational space, and device noise are outside the formal model.
On noisy hardware, a mutation that is semantically equivalent in the ideal circuit model may still change execution through added gates, calibration effects, routing, unintended interactions between qubits (crosstalk), or other device-level artifacts.

The analysis starts from declared mutants.
The space of possible quantum mutations is infinite: one may perturb angles by arbitrary amounts, replace gates in many ways, alter controls, permute wires, or modify subcircuits.
When analyzing these mutants and planning their execution, we keep \emph{realism}, \emph{non-equivalence}, \emph{detectability}, and \emph{non-redundancy} distinct.
Here, \emph{realistic} means resembling plausible programming, modeling, or compilation mistakes; \emph{non-equivalent} means changing the ideal output state for at least one input in the declared domain; \emph{detectable} means reaching a selected test's declared detection threshold; and \emph{non-redundant} means not duplicating the effect of another already selected mutant.
These properties serve different purposes and must not be collapsed into a single adequacy claim.

For test-suite evaluation, the mutant population or its sampling design must be fixed independently of the suite being evaluated; otherwise, selecting mutants because that same suite produces a large gap makes a high mutation score partly true by construction.
For execution planning with a fixed suite, by contrast, test-conditional information may legitimately be used to avoid executions that are provably irrelevant or unaffordable, provided that the resulting detection yield is not reported as an unbiased adequacy score.

The local-unitary model treats one gate or subcircuit replacement at a time.
Higher-order mutants are outside its direct scope, although the exact relative-unitary identity applies to any mutated subcircuit represented by one unitary replacement.
A quantum compiler, or \emph{transpiler}, rewrites a circuit into gates and qubit mappings for a target device.
A source-level edit may expand into multiple gate-level changes after transpilation; such a case must be represented as a subcircuit replacement or analyzed at the source-level semantic boundary.

\subsection{Objectives and Contributions}
Our \textbf{goal} is to provide a mathematically justified basis for acting on declared quantum mutants under explicit program semantics and test-oracle rules.
The practical decisions are whether to execute a mutant against the current suite, whether a permitted change of test can reveal it, and how to report the result when evidence is incomplete.

The contribution is a testing procedure that combines established quantum response and statistical results into three kinds of decisions.
\begin{enumerate}
\item \textbf{Diagnose surviving mutants.}
We distinguish semantic equivalence from input-dependent invisibility, measurement-dependent invisibility, finite-strength cancellation, and insufficient evidence.
For fixed-axis rotations, two state expectations determine every strength with zero gap and the maximum possible gap for the selected test.
For pure stabilizer states, Pauli mutation axes and observables, and a Clifford suffix, the response has one of three shapes that can be determined in polynomial time.
\item \textbf{Plan executions with sound reporting.}
A constant-cost angle check can certify that a rotation is below a differential expectation or total-variation threshold before any state calculation.
For native interval assertions, a bound smaller than the original value's acceptance margin certifies survival; larger changes require evaluating the actual acceptance rule.
The planner combines valid certificates and uncertainty bounds into a mutation-score interval that retains the weights of survivors and unresolved mutants.
\item \textbf{Guide test-oracle improvement.}
State-side and observable-side certificates identify when the input or measurement must change.
The Helstrom observable supplies the unrestricted optimum for a fixed input, against which permitted measurements and their implementation costs can be assessed.
\end{enumerate}
The reference implementation \cite{quantummutantplanner} and worked examples make the decisions reproducible.

\Cref{sec:framework} introduces the quantum model through inputs, assertions, and mutation effects before stating the formal results.
\Cref{sec:core-workflow} explains the execution decisions and score accounting, and \cref{sec:core-examples} develops the examples and practitioner guidance.
\Cref{sec:core-validation} states the validation requirements and limitations.
\Cref{sec:related-work} situates the framework within mutation-testing research and identifies the established mathematical tools it reuses, and \cref{sec:conclusion} summarizes the engineering contribution.
The appendices retain the detailed derivations, algorithms, and estimator comparisons for readers implementing or checking the framework.

\section{Analyzing Mutants and Tests}
\label{sec:framework}
This section states the results needed to understand and apply the framework.
The detailed statistical argument, exact structured derivations, expanded discussion, and selection pseudocode are given in \cref{app:detailed-framework}, while the general perturbation proof is given in \cref{app:local-perturbation-derivation}.

\subsection{Reading the Quantum Model as a Software Test}
\label{sec:quantum-reading-guide}

Our reference-based expectation-value test prepares the same input for the original and mutant circuits, measures the same output quantity, and compares their expected values.
We call this a \emph{differential expectation-value test} because detection depends on the difference between those values, using the original program as the reference and a declared threshold to decide which differences count.
The quantum notation describes what is prepared and measured; the test oracle supplies the rule for judging the result.
We assume familiarity with matrices and basic probability and introduce the quantum concepts used by the analysis here.

A pure state $\ket{\psi}$ is a unit-length complex vector, and $\bra{\psi}$ is its conjugate transpose.
For $n$ qubits the vector has $2^n$ entries, called \emph{amplitudes}, indexed by output bit strings; measuring in the \emph{computational basis} returns a bit string with probability equal to the squared magnitude of its entry.
The states $\ket{0}$ and $\ket{1}$ are the two computational-basis states of one qubit, while $\ket{+}=(\ket{0}+\ket{1})/\sqrt{2}$ is a superposition.
A \emph{density operator} $\rho$ represents either a pure state, $\rho=\ket{\psi}\bra{\psi}$, or a probabilistic mixture of states; it is a positive semidefinite matrix with trace one.
The circuit state mathematically describes the system; a test obtains information about it through measurement.

A gate is represented by a \emph{unitary} matrix $G$: $G^\dagger G=GG^\dagger=I$, where $\dagger$ denotes conjugate transpose and $I$ is the identity.
It updates a density operator to $G\rho G^\dagger$; composing gates gives the circuit matrix $U$.
A \emph{channel} describes the input-to-output transformation of states; in our unitary model this is $\rho\mapsto U\rho U^\dagger$.
Multiplying $U$ by a scalar global phase $e^{i\chi}$ leaves this transformation unchanged, so different circuit matrices can transform every input state in the same way.
A \emph{relative phase} is a phase difference between basis amplitudes; changing it can alter later measurement probabilities.

A joint state is a \emph{product state} if it factors into states of its subsystems; it is \emph{entangled} if it cannot be expressed even as a probabilistic mixture of such product states.
An entangling gate can create entanglement from a product input.
For example, the controlled-NOT gate $\mathrm{CNOT}_{c,t}$ flips target bit $t$ when control bit $c$ is 1 and leaves it unchanged when $c$ is 0, extending linearly to superpositions.

An \emph{observable} $O$ specifies a measurement and its numerical outcome labels and is represented by a Hermitian matrix, $O=O^\dagger$.
The expectation $\Tr(O\rho_{\mathrm{out}})$ is the mean outcome on output state $\rho_{\mathrm{out}}$, where the trace $\Tr$ is the sum of diagonal matrix entries.
For example, measuring $Z$ labels computational-basis outcomes 0 and 1 by $+1$ and $-1$, respectively, so its expectation is $p(0)-p(1)$.
Measuring the projector $\ket{1}\bra{1}$ instead uses labels 0 and 1, so its expectation is simply $p(1)$.
The Pauli observables $X$, $Y$, and $Z$ are the three standard one-qubit measurement axes, each with outcomes $\pm1$; changing the axis generally requires additional gates before computational-basis readout.
The gates $R_x$, $R_y$, and $R_z$ rotate a qubit about these respective axes by the supplied angle.
A Pauli string is a tensor product of $I$, $X$, $Y$, and $Z$ across qubits, measuring the product of the corresponding local outcomes; for example, $Z_1Z_2$ measures a two-qubit correlation.
The tensor-product symbol $\otimes$ combines subsystem operators or states, and subscripts identify the qubits on which an operator acts.

Changing the measurement is therefore part of changing the test, and it can expose a mutant that survives another measurement on the same input.
The sign-reversal example in the introduction requires precisely this change, provided that the declared interface permits it.
Throughout, \emph{population-level} expectation or detection refers to the exact output probabilities before finite sampling; a \emph{mutant population} separately denotes the declared set of program variants.
The formal model uses a finite-dimensional register, bounded Hermitian observables, and original and mutated unitary circuits acting on that same register.

\subsection{Program Semantics and Permitted Observations}
\label{sec:semantic-contract}

Whether a mutant is equivalent to the original depends on the allowed inputs and on what counts as observable behavior.
Write $U$ and $U_m$ for the original and mutant circuit unitaries, and $\mathcal D_{\mathrm{in}}$ for the declared set of allowed input states.
Under \emph{state-transforming semantics}, equivalence means that the two circuits produce the same output state for every allowed input:
\begin{equation}
\label{eq:domain-state-equivalence}
U\rho U^\dagger=U_m\rho U_m^\dagger\qquad\text{for every }\rho\in\mathcal D_{\mathrm{in}}.
\end{equation}
When the domain contains every density operator, this is channel equivalence and, for the unitary circuits considered here, is equivalent to $U_m=e^{i\phi}U$.
On a restricted input domain, equal output states do not require globally equivalent channels.

As with a classical API, equivalence depends on what callers are allowed to observe.
Let $\mathcal F_{\mathrm{int}}$ denote the set of output measurements permitted by the program's interface.
For example, an interface that returns only a bit measured in the computational basis does not let callers choose an $X$-basis measurement instead.
Two programs are \emph{observationally equivalent at that interface} if, for every allowed input in $\mathcal D_{\mathrm{in}}$ and every permitted measurement in $\mathcal F_{\mathrm{int}}$, they assign the same probability to each possible outcome.
Equality of the complete output distributions allows individual runs to return different results.
For separately prepared invocations, this equality also persists under independent repetition.
Coherent control of the program, in which a quantum control selects whether it runs while preserving superpositions, or entangled inputs across multiple invocations require a separately declared interface contract.
Equal output states guarantee this agreement for any measurement.
If all measurements are permitted, agreement of their outcome distributions also guarantees equal output states; with restricted measurements, different states can produce identical permitted observations.
The introductory sign-reversal example illustrates the latter case: on its fixed input, the original and mutant have different output states but the same computational-basis bit probabilities.

Comparing only an average can hide a further difference.
For a measurement with possible outcomes $-1$, $0$, and $+1$, always returning $0$ and returning $-1$ or $+1$ with equal probability both give mean zero, despite having different distributions.
For an observable with only two distinct outcomes, however, the mean determines both outcome probabilities.
Matching output distributions is also different from satisfying the same acceptance condition.
For example, a contract requiring bit 1 with probability at least $0.9$ accepts both $0.92$ and $0.98$, although those probabilities describe different behavior.

The mathematical analysis below counts a mutant as non-equivalent when it changes the output state for at least one allowed input.
This state-transforming semantics is appropriate for a quantum subroutine whose output qubits feed later quantum operations, because those operations may expose a state difference.
For a top-level program whose interface exposes only classical bits, compare the declared bit-string distributions instead.
Under that interface, the introductory sign-reversal mutant is equivalent on its fixed input and should not be counted as a fault that the interface's tests ought to detect.

The interface defines which inputs and observations are allowed; the current test suite chooses particular cases and assertions within those limits.
A study using interface-relative equivalence must therefore fix the input domain and permitted measurements independently of the suite being evaluated.
It must explain which mutants it excludes as equivalent and report the resulting mutant population separately from the population defined by state differences.
Otherwise, a study could exclude a mutant merely because the current tests cannot observe it, concealing a testing weakness.
Adding a test within the permitted interface can address such a weakness.
Allowing a previously forbidden measurement, such as an $X$-basis measurement at an interface that exposes only computational-basis bits, extends the evaluation contract and evaluates behavior beyond the original interface.

\subsection{Mutation Signal and Measurement Cost}

For a fixed input and measurement, the mutation signal is the change in the expected result.
Repeated measurements estimate this change, with uncertainty that depends on the number of shots and the range of possible outcomes.
Write the ideal circuit around mutation site $j$ as
$$
U
=
U_{>j}G_jU_{<j},
$$
where $U_{<j}$ contains the gates before $G_j$, $U_{>j}$ contains the gates after it, and the rightmost operation acts first.
For input state $\rho$ and observable $O$, define the original and mutant expectations by
$$
\mu(\rho,O)
=
\Tr\!\left(OU\rho U^\dagger\right),
\qquad
\mu_m(\rho,O)
=
\Tr\!\left(OU_m\rho U_m^\dagger\right).
$$
The population-level signal available to an expectation-value test is
\begin{equation}
\label{eq:core-observable-gap}
\Delta_m(\rho,O)
=
|\mu(\rho,O)-\mu_m(\rho,O)|.
\end{equation}
The gap $\Delta_m$ measures the change for this particular input and observable; equivalence requires agreement across the declared input and observation families.
With more than two possible measurement outcomes, equal expectations can still hide different outcome distributions; the distribution assertions below retain their full decision rule.
The spectrum of a Hermitian observable $O$ is the set of its eigenvalues, which are the possible outcomes of an ideal direct measurement of $O$.
Its spectral range
$$
R_O
=
\lambda_{\max}(O)-\lambda_{\min}(O)
\geq0
$$
is therefore the width from the smallest to the largest possible measurement outcome.
For example, the Pauli observable $Z$ has eigenvalues $-1$ and $+1$, so $R_Z=2$, whereas the projector $\ket{1}\bra{1}$ has eigenvalues $0$ and $1$, so its spectral range is one.
We require $R_O>0$: if $R_O=0$, then $O$ is proportional to the identity and no unitary mutant can change its expectation.
Let $C(\rho,O,m)>0$ denote the cost of one paired original--mutant comparison: one independent shot from each circuit, including preparation and measurement.
Hoeffding's bound gives a sufficient sampling cost for estimating a nonzero gap to a fixed fraction of its size; \cref{sec:estimation-efficiency} derives the bound and an optional signal-to-cost ranking score.
Deciding whether a gap crosses a detection threshold depends on its distance $|\Delta_m-\gamma|$ from that threshold.
Execution planning must also account for preprocessing, pilot measurements, and the cost of implementing each measurement.

\subsection{Test Collections, Detection, and Adequacy}
\label{sec:test-oracle}

A differential test counts a change as detection when the expectation gap reaches a threshold chosen before inspecting the mutant's effect.
Write a suite of these differential expectation-value tests as $\mathcal T=\{t_r=(\rho_r,O_r,\gamma_r)\}_{r=1}^{N_{\mathrm{test}}}$.
Each test specifies an allowed input $\rho_r\in\mathcal D_{\mathrm{in}}$, an observable $O_r$, and a positive threshold $\gamma_r$.
The mutant passes this population-level rule exactly when its expectation differs from the original by strictly less than $\gamma_r$.
The threshold specifies the smallest mean difference counted as detection; the probability of making a wrong statistical decision is a separate quantity.
Using the original as a reference does not establish that the original program is correct.

An existing test may judge outputs by a different rule, such as requiring an output probability to lie in a specified interval.
We call that supplied pass/fail condition a \emph{native assertion}.
If its acceptance region is $\mathcal A_r$, an exact native assertion kill requires the original value to lie inside $\mathcal A_r$ and the mutant value to lie outside it.
For example, expectations $0$ and $0.15$ both satisfy $\mathcal A_r=[-0.2,0.2]$, although their differential gap exceeds $\gamma_r=0.1$.
Conversely, expectations $0.19$ and $0.21$ cross that acceptance boundary with a gap below $0.1$.
The triple $(\rho_r,O_r,\gamma_r)$ represents a native assertion only when the two acceptance rules agree.
Otherwise, retain the native rule, including how it treats boundary values and the requirement that the original passes.
The gap formulas can still certify some native outcomes, as follows, but a native acceptance rule must not be replaced by a differential threshold.

\begin{proposition}[Certificates for native expectation assertions]
\label[proposition]{prop:native-assertion-certificates}
Fix a native acceptance region $\mathcal A_r$ and establish that the original expectation $\mu_r=\mu(\rho_r,O_r)$ lies in it.
If $\mathcal A_r=(\ell_r,u_r)$, define the positive acceptance margin
\begin{equation}
\label{eq:native-acceptance-margin}
h_r=\min\{\mu_r-\ell_r,u_r-\mu_r\}.
\end{equation}
Any valid upper gap bound $\overline\Delta_{m,r}$ gives the sufficient survival certificate
\begin{equation}
\label{eq:native-margin-skip}
\Delta_m(\rho_r,O_r)\leq\overline\Delta_{m,r}<h_r
\quad\Longrightarrow\quad
\mu_m(\rho_r,O_r)\in\mathcal A_r.
\end{equation}
More generally, suppose the original expectation and signed shift $d_{m,r}=\mu_r-\mu_m(\rho_r,O_r)$ have jointly valid bounds
\begin{equation*}
\mu_r\in[\mu_r^-,\mu_r^+],
\qquad d_{m,r}\in[d_{m,r}^-,d_{m,r}^+].
\end{equation*}
Then the mutant expectation lies in
\begin{equation}
\label{eq:native-mutant-interval}
J_{m,r}=[\mu_r^- - d_{m,r}^+,\mu_r^+ - d_{m,r}^-].
\end{equation}
The inclusion $J_{m,r}\subseteq\mathcal A_r$ certifies survival, and $J_{m,r}\cap\mathcal A_r=\varnothing$ certifies a native kill; otherwise these bounds leave the outcome unresolved.
These set tests use the native assertion's own open or closed boundary conventions.
\end{proposition}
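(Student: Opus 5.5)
The plan is a direct argument from the definitions; no earlier result beyond the definition of $\Delta_m$ in \eqref{eq:core-observable-gap} is needed.

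\textbf{First claim \eqref{eq:native-margin-skip}.} Suppose $\Delta_m(\rho_r,O_r)\le\overline\Delta_{m,r}<h_r$. Then $|\mu_r-\mu_m|<h_r$, so
\[
\mu_m>\mu_r-h_r\ge\ell_r
\qquad\text{and}\qquad
\mu_m<\mu_r+h_r\le u_r .
\]
Both inequalities hold by the definition of $h_r$ in \eqref{eq:native-acceptance-margin}. Hence $\mu_m\in(\ell_r,u_r)=\mathcal A_r$. The inequalities are strict, so the conclusion is unaffected by the open-interval convention. Any closed or half-open interval with the same endpoints contains this open interval, so the certificate remains valid for it. I would note that $h_r>0$ follows from $\mu_r$ lying in the open interval. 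I would also note that the word ``valid'' for $\overline\Delta_{m,r}$ means it holds jointly with the established fact $\mu_r\in\mathcal A_r$.

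\textbf{Interval statement \eqref{eq:native-mutant-interval}.} Write $\mu_m=\mu_r-d_{m,r}$, which is the definition of $d_{m,r}$. If $\mu_r\in[\mu_r^-,\mu_r^+]$ and $d_{m,r}\in[d_{m,r}^-,d_{m,r}^+]$ hold simultaneously, then monotonicity gives
\[
\mu_r^- - d_{m,r}^+\;\le\;\mu_m\;\le\;\mu_r^+ - d_{m,r}^- .
\]
So $\mu_m\in J_{m,r}$.

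\textbf{Set tests.} If $J_{m,r}\subseteq\mathcal A_r$, then $\mu_m\in\mathcal A_r$ and the mutant survives. If $J_{m,r}\cap\mathcal A_r=\varnothing$, then $\mu_m\notin\mathcal A_r$. Combined with the established $\mu_r\in\mathcal A_r$, this meets the stated definition of an exact native kill. Both conclusions are pure set membership of the true value $\mu_m$, so they hold for whatever boundary convention $\mathcal A_r$ carries. In the remaining case, $J_{m,r}$ meets both $\mathcal A_r$ and its complement. Then either outcome is consistent with the bounds, which is the ``unresolved'' clause; it needs no proof beyond that observation.

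\textbf{Main obstacle.} The only delicate point is the meaning of ``jointly valid''. If the bounds come from statistical confidence intervals, the conclusion holds on the event where both intervals are correct. The certificate's error probability is then that of the joint event, for example via a union bound. The plan is to state the result conditionally on that event.
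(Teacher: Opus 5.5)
Your proposal is correct and follows the paper's proof. The margin claim comes from $|\mu_r-\mu_m|<h_r$ keeping $\mu_m$ strictly inside $(\ell_r,u_r)$, and the general enclosure is interval subtraction applied to $\mu_m=\mu_r-d_{m,r}$, after which inclusion or disjointness settles the native predicate (using the established $\mu_r\in\mathcal A_r$ for a kill); your remarks on non-open intervals and on reading joint validity as a simultaneous coverage event are consistent additions.
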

\begin{proof}
A change smaller than $h_r$ cannot reach either endpoint of the open acceptance interval.
The general enclosure follows from $\mu_m=\mu_r-d_{m,r}$ by interval subtraction, after which inclusion or disjointness settles the native predicate.
\end{proof}

Reaching the nearest boundary distance does not establish a kill: for $\mathcal A_r=(0,1)$, the original mean $0.2$ and mutant mean $0.7$ both pass, although their gap $0.5$ exceeds $h_r=0.2$.
If only an upper absolute-gap bound is available, use $d_{m,r}\in[-\overline\Delta_{m,r},\overline\Delta_{m,r}]$ in \cref{eq:native-mutant-interval}; a signed enclosure can give a sharper decision.
Statistical bounds must cover the original-pass condition and the mutant enclosure simultaneously, with the error allocation and data-dependent execution safeguards used below.
In particular, uncertainty in the original mean widens the mutant enclosure; it does not justify increasing the acceptance margin.
These certificates concern the exact native predicate and remain distinct from the realized outcome of a sampled assertion, whose significance, sample size, and acceptance procedure must be retained.

With finitely many shots, a test estimates the expectations from samples and needs a statistical rule for deciding whether the gap reaches the threshold.
The fixed-shot rule in \cref{sec:finite-shot-threshold} distinguishes detected, below-threshold, and unresolved outcomes.
It allocates error probabilities across the declared comparisons to control the chance of any wrong classification in that collection.
Before execution, proofs, deterministic bounds, or intervals estimated from pilot measurements can help decide which pairs to run.
An observed test outcome instead comes from the original and mutant samples under the prespecified statistical rule.
The planning label \emph{execute--predicted above threshold} therefore records a prediction, with an observed kill requiring the sampled test outcome described above.

The suite detects a mutant if at least one of its tests detects it.
For each mutant $m$, collect the indices of those tests in
\begin{equation}
\label{eq:suite-detection-set}
K_{\mathcal T}(m)
=
\{r:\Delta_m(\rho_r,O_r)\geq\gamma_r\}.
\end{equation}
An empty set $K_{\mathcal T}(m)=\varnothing$ means that the mutant survives every test in the suite.
For a prespecified nonempty population $\mathcal M^\star$ of non-equivalent mutants, the population-level mutation score is the fraction detected by the suite:
\begin{equation}
\label{eq:suite-mutation-score}
\mathrm{MS}(\mathcal T;\mathcal M^\star)
=
\frac{1}{|\mathcal M^\star|}
\sum_{m\in\mathcal M^\star}
\bm{1}\!\left\{K_{\mathcal T}(m)\neq\varnothing\right\}.
\end{equation}
Each mutant counts once if at least one test detects it, regardless of how many other tests also detect it.
For example, detecting eight of ten mutants gives a score of $0.8$.
A score of one means that every mutant in this population is detected by at least one test at its declared threshold.
Individual tests may still miss mutants in that population, and faults outside it may still escape the suite.

To use this score to evaluate a suite's adequacy, fix the mutant population $\mathcal M^\star$ independently of the suite $\mathcal T$, including any sampling strata and exclusions.
Selecting mutants because they have large gaps against the same suite would favor faults that the suite already detects and could inflate its apparent adequacy.
Non-equivalent mutants that no test detects must therefore remain in the population and count as survivors.
Exact analysis may identify these survivors without executing them, but they still belong in the mutation-score denominator.

Choose each detection threshold $\gamma_r$ to represent the smallest mean change meaningful to the declared program contract, before inspecting mutant effects.
The threshold is expressed in the observable's units, so its probability interpretation depends on the observable.
For a $\pm1$ observable, $\gamma=0.1$ corresponds to a change of $0.05$ in the probability of either outcome.
For a projector, the same threshold corresponds to a probability change of $0.1$.
These examples explain the units; the threshold for a meaningful fault remains specific to the declared program contract.

The shot budget determines how precisely that mean change can be estimated.
For a fixed mutant--test pair, \cref{sec:finite-shot-threshold} gives the gap-estimation radius $r_m=R_O\sqrt{\log(2/\alpha_{\mathrm{test},m})/n_{\mathrm{shots}}}$.
Here $\alpha_{\mathrm{test},m}$ is the failure probability allocated to the comparison, and $n_{\mathrm{shots}}$ is the number of independent shots taken from each circuit.
To make this radius at most a chosen error tolerance $\epsilon$, allocate $n_{\mathrm{shots}}\geq R_O^2\log(2/\alpha_{\mathrm{test},m})/\epsilon^2$ shots per circuit.
Even with this precision, a gap close to $\gamma_r$ may remain unresolved because its confidence interval crosses the detection threshold.
If the budget is too small to resolve such a case, retain the unresolved outcome and the declared threshold for a meaningful fault.

These thresholds and uncertainty bounds help determine which mutant--test pairs to run.
The planner below handles one pair at a time; optimizing a schedule for the whole suite, sharing original-circuit samples, and reusing measurement bases are outside its scope.
Ranking mutants by large gaps can support smoke testing and fault localization.
\Cref{tab:decision-notation} collects the notation used in these decisions.

We use $\mathcal{O}$ for computational complexity and $O$ for observables.

\begin{table}[!htb]
\caption{Principal notation; differently indexed quantities have distinct roles.}
\label{tab:decision-notation}
\centering
\small
\begin{tabularx}{\linewidth}{@{}l>{\raggedright\arraybackslash}X@{}}
\toprule
\textbf{Symbol} & \textbf{Meaning} \\
\midrule
$\mathcal M$, $M=|\mathcal M|$ & Candidate set and its count; $\mathcal M^\star$ is the declared non-equivalent population. \\
$N_{\mathrm{test}}$, $N_{\mathrm{axis}}$ & Number of tests and of fixed rotation axes. \\
$\tau_j$, $O_j$, $W_m$ & Mutation-site state, backward-propagated observable, and relative unitary. \\
$\Delta_m$, $\delta_m=\Delta_m/R_O$ & Absolute and range-normalized expectation gaps. \\
$R_O$, $C_m$ & Outcome range and paired-comparison cost. \\
$S_m^{\mathrm{state}}$, $S_m^{\mathrm{op}}$ & State-dependent first-order sensitivity and operator-level proxy. \\
$K_{\mathcal T}(m)$ & Set of test indices detecting $m$. \\
$K_{\mathrm{ret}}$, $K_{\mathrm{bucket}}$ & Retained Pauli-term cap and representative cap per bucket. \\
$n_{\mathrm{shots}}$, $N_{\mathrm{copy}}$ & Shots per known circuit and copies in the separate discrimination problem. \\
$F$ & Squared fidelity between quantum states. \\
$\delta_\theta$ & Offset of an Euler angle. \\
$\gamma_r$ & Positive population threshold for test $r$; detection includes equality. \\
$\mathcal A_r$, $h_r$ & Native acceptance region and the original mean's margin inside an open interval. \\
$\delta_{\mathrm{disc}}$ & Error target for equal-prior discrimination of an unknown original or mutant output source. \\
$\alpha_{\mathrm{pilot}}$ & Total failure budget for simultaneous pilot gap intervals. \\
$\alpha_{\mathrm{test}}$ & Total failure budget for simultaneous observed-test gap intervals. \\
\bottomrule
\end{tabularx}
\end{table}

\subsection{From Analysis to Execution Decisions}
\label{sec:workflow-overview}

Analysis can sometimes settle a test's population-level outcome before the original and mutant circuits are executed.
Given the original circuit, candidate mutations, allowed inputs and observations, and a fixed test suite, an execution planner decides which pairs to skip or run and records the evidence for each decision.
For an adequacy evaluation, the mutant population remains the one fixed independently of the suite.
Pairs selected for execution also need a sampling budget and a statistical decision rule, such as the one in \cref{sec:finite-shot-threshold}.

First, try to prove that the mutant and original are equivalent for every allowed input under the semantics declared in \cref{sec:semantic-contract}.
A \emph{semantic-equivalence certificate} is such a proof: it establishes equal output states under state-transforming semantics, or equal complete outcome distributions for every permitted measurement under interface-relative semantics.
For example, applying the same deterministic compilation process to both programs can prove equivalence if it preserves program behavior and produces the same complete circuit representation.
\Cref{app:transpiler-equivalence,app:circuit-equivalence} give the assumptions needed for this check and describe other equivalence checks.
Without a proof of equivalence, this step cannot exclude the mutant.

For each remaining pair, check whether the mutation has no effect on the tested expectation or an effect smaller than the detection threshold.
For example, a changed operation that cannot influence the tested observable through the remaining gates has zero effect on that expectation.
The backward-light-cone argument in \cref{sec:backward-light-cone-proof} proves this structural rule for the selected observable.
If these checks do not settle the pair, evaluate $\Delta_m(\rho_r,O_r)$ exactly when feasible, or obtain an interval that accounts for approximation and estimation errors.
Failure to obtain usable gap evidence leaves the threshold outcome unresolved.

A \emph{certified zero} is a gap proved to be exactly zero under the stated circuit and test assumptions.
The proof can follow from circuit structure, as in the light-cone example above, or from exact mathematical identities.
An \emph{algebraic zero certificate} is a proof of the latter kind: calculations with exact quantities, performed by hand or by symbolic software, establish that the gap equals zero.
The use of symbolic software alone is insufficient; the proof must apply to the declared quantities with their exact values preserved.
A value that rounds to zero or falls below a numerical tolerance remains only numerically compatible with zero, even when evaluated at high precision.

For the differential rule, a pair can be skipped when the exact gap, or the upper end of its interval, is strictly below $\gamma_r$.
If the interval comes from statistical estimation, the skip retains its stated confidence qualification.
A certified zero is below every positive detection threshold.
If the exact gap, or the lower end of its interval, is at least $\gamma_r$, the analysis predicts detection; execution can then provide observed evidence.
A pair whose outcome remains unresolved is also a candidate for execution.
If a resource limit prevents that execution, record the limit and keep the outcome unresolved.
Native assertions follow their own acceptance rules when deciding whether the available evidence supports a skip or predicts detection.
\Cref{sec:core-workflow} gives the detailed planning procedure and its guarantees, while \cref{tab:classification-dimensions} separates what must be recorded about the program, test, execution decision, and evidence.

\begin{table}[!htb]
\caption{Information recorded separately for program semantics, each mutant--test outcome, execution planning, and supporting evidence.}
\label{tab:classification-dimensions}
\centering
\small
\renewcommand{\arraystretch}{1.12}
\begin{tabularx}{\linewidth}{@{}>{\raggedright\arraybackslash}p{0.20\linewidth}>{\raggedright\arraybackslash}X@{}}
\toprule
\textbf{Dimension} & \textbf{Recorded information and meaning} \\
\midrule
Program semantics & Equivalent, non-equivalent, or unresolved under the declared input domain and permitted observations.
Only a proof of equivalence under the declared semantics justifies excluding the mutant from the corresponding non-equivalent population. \\
Test outcome & For differential test $t_r$, the mutant survives that test if $\Delta_m(\rho_r,O_r)<\gamma_r$ and is detected at the population level if $\Delta_m(\rho_r,O_r)\geq\gamma_r$.
Record \emph{unresolved} if the available evidence cannot establish either relation.
A proven zero gap establishes survival under this differential rule.
Native assertions use their own pass/fail rule.
Record survival for the full suite only when survival has been established for every test in the suite. \\
Execution decision & Skip or execute the mutant--test pair, with the reason supplied by the planning policy.
Skipping because of a budget limit does not by itself establish whether the mutant passes the test, although separate analysis may settle that outcome. \\
Evidence & Record the support for the claim: algebra, validated bounds, numerical evaluation, pilot intervals, or observed samples, with any confidence qualification.
Analytical detection and observed kills retain their respective provenance; insufficient evidence leaves the claim unresolved. \\
\bottomrule
\end{tabularx}
\end{table}

For example, a mutant with a certified zero gap on one test can be skipped for that test but detected by another.
It is a suite survivor only when every test is below its threshold; if no test establishes detection and some test outcomes remain unresolved, its suite status remains unresolved.
Both survivors and unresolved mutants remain in the declared mutant population.
Zero expectation gap also need not imply equal output distributions, so a certificate for one mean must not be extended to a distribution assertion without further justification.
Several zero-effect conditions may hold for the same pair, so a mutant can have more than one explanation for surviving a test.

\subsection{Computational-Basis Distribution Assertions}
\label{sec:distribution-oracles}

The same expectation-value framework also applies to tests that check a distribution of computational-basis output bit strings.
For each bit string $x$, the projector $P_x=\ket{x}\bra{x}$ represents an indicator that is one when $x$ occurs and zero otherwise, so its expectation is exactly the probability of that outcome.
The analysis can therefore determine how a mutation changes individual outcome probabilities, while the distribution assertion retains its own statistical procedure and pass/fail rule for judging them together.
Skipping the test on the grounds of zero effect requires a proof covering every probability capable of changing that decision.

Checking a distribution also affects which structural shortcuts are available.
When a test checks outcomes on only a subset of qubits, the backward light cone can exclude mutations that cannot influence those outputs through the remaining gates.
For the full distribution over every qubit, however, each outcome projector acts on the entire register, so its structural backward light cone cannot exclude any mutation on that register.
Other exact checks remain available: for example, a mutation that leaves the output state unchanged for a fixed input also leaves its output distribution unchanged.
For an oracle that declares detection when the total-variation distance reaches a positive threshold, \cref{cor:distribution-rotation-ceiling} also gives an angle-only skip rule that does not require propagating the outcome projectors.
That certificate applies to the exact distance; sampled histogram and chi-square assertions retain their own decision procedures.
The GHZ example in \cref{sec:ghz-distribution} illustrates how changing from a single expectation to a full-distribution assertion changes both the detectable mutants and the available filtering opportunities.

\subsection{Analyzing a Mutant--Test Pair}
\label{sec:core-mutant-test-analysis}

For a chosen input and observable, the output mutation effect can be expressed and calculated at the changed gate.
The calculation resembles backward program analysis: express the final observable at the mutation site, then compare its expectation before and after the local change.
This mathematical rewriting leaves the circuit execution and its output measurement unchanged.
Readers primarily interested in execution planning may proceed to \cref{sec:core-workflow} on a first reading and return here for the conditions supporting its decisions.

To place the state, observable, and mutation at the same point in the circuit, define
$$
\tau_j
=
G_jU_{<j}\rho U_{<j}^\dagger G_j^\dagger,
\qquad
O_j
=
U_{>j}^\dagger O U_{>j},
\qquad
W_m
=
\widetilde G_jG_j^\dagger.
$$
Here $\tau_j$ is the state immediately after the original gate $G_j$, including the preceding circuit prefix $U_{<j}$.
The propagated observable $O_j$ expresses the final measurement at that same point by accounting for the remaining gates in the suffix $U_{>j}$.
The transformation $O\mapsto U_{>j}^\dagger O U_{>j}$ is called \emph{unitary conjugation}.
The \emph{relative unitary} $W_m$ isolates the gate change: $G_j^\dagger$ undoes the original gate and $\widetilde G_j$ supplies its replacement.
Thus $W_m$ acts on the post-gate state $\tau_j$, and the left-multiplication convention is $\widetilde G_j=W_mG_j$.
$W_m$ is not generally the replacement gate itself.
At this common point, the original expectation is $\Tr(O_j\tau_j)$ and the mutant expectation is $\Tr(W_m^\dagger O_jW_m\tau_j)$.
The analyses below compare these two expectations.
Two operators \emph{commute} when exchanging their order leaves their product unchanged; their commutator $[A,B]=AB-BA$ is then zero.

Equality for one input and observable does not establish equality for other tests.
\Cref{tab:equivalence-hierarchy} distinguishes the conditions for one fixed test, every input, every observable, and both unrestricted families.
The unrestricted rows quantify over all density operators at the mutation site, since the unitary prefix through $G_j$ maps the unrestricted circuit input domain onto that set.
For a restricted domain or measurement interface, use the equivalence definition in \cref{sec:semantic-contract}; the unrestricted conditions require proofs over their full input and measurement domains.

\begin{table}[!htb]
\caption{Expectation and state equivalence for one unitary local replacement.
Moving across columns enlarges the observable family; moving down rows enlarges the input family.}
\label{tab:equivalence-hierarchy}
\centering
\small
\renewcommand{\arraystretch}{1.25}
\begin{tabularx}{\linewidth}{@{}>{\raggedright\arraybackslash}p{0.15\linewidth}>{\raggedright\arraybackslash}X>{\raggedright\arraybackslash}X@{}}
\toprule
 & \textbf{Fixed observable $O_j$} & \textbf{Every observable} \\
\midrule
\textbf{Fixed input $\tau_j$}
& Expectation-test equivalence: $\Delta_m(\rho,O)=0$.
& Output-state equivalence: $W_m\tau_jW_m^\dagger=\tau_j$, or $[W_m,\tau_j]=0$. \\
\textbf{Every input}
& Observable equivalence: $[W_m,O_j]=0$.
& Unitary equivalence up to global phase: $W_m=e^{i\phi}I$. \\
\bottomrule
\end{tabularx}
\end{table}

The local and state-side propositions below justify these conditions.
The lower-right condition follows because a unitary commuting with every Hermitian operator is a scalar phase.

Circuit connectivity can sometimes prove zero effect without computing either $O_j$ or $\tau_j$; \cref{sec:core-result-structural-filtering} gives this structural filter.
When it does not settle the pair, \cref{sec:core-result-exact-local} gives the exact gap and the conditions under which exposing a mutation requires changing the input or observable.
\Cref{sec:core-result-exact-structured} specializes the gap to common Pauli-axis mutations, fixed one-qubit replacements, and Clifford mutations.
For other small mutations whose exact effect cannot be evaluated tractably, \cref{sec:core-perturbative-fallback} gives a first-order approximation with an explicit error bound.
\Cref{sec:claim-hierarchy} explains what each kind of evidence permits the planner to conclude.

\FloatBarrier

\subsubsection{Sound Structural Filtering}
\label{sec:core-result-structural-filtering}

A mutation cannot change the tested expectation if the remaining gates provide no path from the changed operation to the measured qubits.
To check for such a path, trace backward from the measured output through the remaining gates, as in a conservative backward slice from a software assertion.
The \emph{support} of an operator is the set of qubits on which its action differs from the identity.
Start with $\supp(O)$ and traverse the suffix backward, adding every wire of each gate that touches the current set.
The resulting set, the \emph{structural backward light cone} $\mathrm{BLC}_j(O)$, uses only circuit connectivity and exact gate supports.
It can therefore be computed before propagating the full observable or estimating the mutation-site state.
The exact algebraic support $\supp(O_j)$ can be expensive to compute, but it is contained in this structural set:
$$
\supp(O_j)
\subseteq
\mathrm{BLC}_j(O).
$$
Therefore,
\begin{equation}
\label{eq:core-light-cone-discard}
\supp(W_m)\cap\mathrm{BLC}_j(O)=\varnothing
\quad\Longrightarrow\quad
\Delta_m(\rho,O)=0.
\end{equation}
If the mutation acts entirely outside the cone, the pair can therefore be skipped during fixed-suite execution planning.
The support-containment and zero-effect implications are proved in \cref{sec:backward-light-cone-proof}.
The zero-effect conclusion is exact under the unitary circuit model even when the structural cone contains more wires than the true operator support.
Overlap with the cone only leaves open the possibility of an effect: commutation or state-dependent cancellation can still make the gap zero.
\Cref{fig:backward-light-cone} illustrates the backward structural growth and distinguishes an in-cone mutation from operations on an excluded wire.

\begin{figure}[!htb]
\centering
\begin{quantikz}[
    row sep={0.68cm,between origins},
    column sep={0.90cm,between origins},
    wire types={q,q,q,q},
    thin lines
]
\lstick{$q_1$} & \gate[style={fill=blue!18,draw=blue!60!black}]{m}\gategroup[3,steps=3,style={draw=none,fill=blue!8,inner xsep=5pt,inner ysep=1pt,yshift=3pt},background]{} & & \gate[2]{G_{12}} & & & & & \\
\lstick{$q_2$} & & & \gategroup[2,steps=3,style={draw=none,fill=blue!8,inner xsep=5pt,inner ysep=1pt,yshift=3pt},background]{} & & \ctrl{1} & & & \\
\lstick{$q_3$} & & & & & \targ{}\gategroup[1,steps=3,style={draw=none,fill=blue!8,inner xsep=5pt,inner ysep=1pt,yshift=3pt},background]{} & & \meter[style={fill=blue!14},label style={yshift=-3pt}]{O} & \\
\lstick{$q_4$} & \gate[style={fill=gray!15,draw=gray!65}]{m'} & & & \gate{V} & & & &
\end{quantikz}
\caption{Four wires $q_1$--$q_4$ carry gates from left to right.
The blue backward cone starts at $O$ on $q_3$, adds $q_2$ at the $q_2$-control/$q_3$-target CNOT and $q_1$ at $G_{12}$, and excludes $q_4$.
Thus, $m$ needs analysis, whereas $m'$ and operations on $q_4$ such as $V$ cannot affect $O$ in the ideal unitary model.}
\label{fig:backward-light-cone}
\Description{A four-wire quantum circuit with a shaded backward cone from a measurement on the third wire through gates connecting the first three wires.
Mutation m lies inside this cone; mutation m-prime and gate V on the fourth wire lie outside it.}
\end{figure}

\subsubsection{Exact Local Mutation Effect}
\label{sec:core-result-exact-local}

When connectivity alone does not settle a pair, the local gap formula determines whether the replacement changes the selected expectation.
It also identifies when changing the input alone or the observable alone cannot expose the mutation.
Evaluating the formula requires the propagated observable $O_j$ and the mutation-site state $\tau_j$, or the relevant expectations in that state.
Obtaining those quantities can require exponential work; \cref{thm:stabilizer-visibility} identifies a particular family with polynomial-time classical analysis.
Evaluating an exact formula in ordinary floating-point arithmetic also differs from certifying its value at a decision boundary.

The first proposition holds the observable fixed and asks whether any input can give a nonzero gap.
The second holds the input fixed and asks whether any observable can do so.
These are applications of trace cyclicity and norm duality to the two choices available when augmenting a test.

\begin{proposition}[Exact local replacement]
\label{thm:core-exact-local}
For any unitary replacement acting on a common subsystem, or with both operations embedded into the union of their supports,
\begin{equation}
\label{eq:core-exact-local}
\Delta_m(\rho,O)
=
\left|
\Tr\!\left[
\left(O_j-W_m^\dagger O_jW_m\right)\tau_j
\right]
\right|.
\end{equation}
The maximum of this gap over all density operators at the mutation site is
$$
\sup_{\tau}
\left|
\Tr\!\left[
\left(O_j-W_m^\dagger O_jW_m\right)\tau
\right]
\right|
=
\left\|O_j-W_m^\dagger O_jW_m\right\|_\infty.
$$
The mutation is invisible to $O_j$ for every state if and only if
$$
W_m^\dagger O_jW_m=O_j,
\qquad
\text{equivalently}
\qquad
[W_m,O_j]=0.
$$
The prefix through $G_j$ is a bijection on density operators, so the maximization also ranges over all circuit inputs when arbitrary inputs are allowed.
\end{proposition}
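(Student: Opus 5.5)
The plan is to rewrite both expectations at the mutation site using trace cyclicity, and then obtain the two extremal statements from norm duality and a commutator identity.

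\textbf{Step 1: the gap identity.} Write $U=U_{>j}G_jU_{<j}$ and $U_m=U_{>j}\widetilde G_jU_{<j}=U_{>j}W_mG_jU_{<j}$. After embedding both gates into the union of their supports, these are operators on one common register. By cyclicity of the trace,
$$\mu(\rho,O)=\Tr\!\left(U_{>j}^\dagger OU_{>j}\,G_jU_{<j}\rho U_{<j}^\dagger G_j^\dagger\right)=\Tr(O_j\tau_j).$$
In the same way, $\mu_m(\rho,O)=\Tr(O_jW_m\tau_jW_m^\dagger)=\Tr(W_m^\dagger O_jW_m\tau_j)$. Subtracting and taking absolute values gives \cref{eq:core-exact-local}. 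The embedding matters only so that $W_m=\widetilde G_jG_j^\dagger$ is defined. Tensoring a gate with identities preserves unitarity and leaves the circuit unchanged, so no generality is lost.

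\textbf{Step 2: the supremum.} Set $D=O_j-W_m^\dagger O_jW_m$. This operator is Hermitian, because $O_j$ is Hermitian and unitary conjugation preserves Hermiticity.

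For the upper bound, write $\tau=\sum_k p_k\ket{\psi_k}\bra{\psi_k}$. Then
$$|\Tr(D\tau)|\le\sum_k p_k|\bra{\psi_k}D\ket{\psi_k}|\le\|D\|_\infty.$$

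For attainment, choose a unit eigenvector of $D$ whose eigenvalue has modulus $\|D\|_\infty$. The register is finite-dimensional, so such an eigenvector exists, and the supremum is in fact a maximum. This is the usual trace-norm/operator-norm duality, restricted to the density-operator set and made exact because $D$ is Hermitian.

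\textbf{Step 3: invisibility for every state.} The gap vanishes for all $\tau$ exactly when the supremum is zero, that is, when $D=0$. Conversely, if $D=0$ then every gap vanishes. The condition $D=0$ means $W_m^\dagger O_jW_m=O_j$. Left-multiplying by the unitary $W_m$ turns this into $O_jW_m=W_mO_j$, which is $[W_m,O_j]=0$. The step reverses by multiplying by $W_m^\dagger$.

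\textbf{Step 4: ranging over circuit inputs.} The map $\rho\mapsto V\rho V^\dagger$ with $V=G_jU_{<j}$ unitary sends density operators to density operators. Its inverse is conjugation by $V^\dagger$, so it is a bijection on the set of density operators. Hence the supremum over $\tau$ equals the supremum over arbitrary circuit inputs $\rho$.

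\textbf{Expected obstacle.} No step is deep. The only point needing care is the attainment in Step 2. It has to use the Hermiticity of $D$, so that a pure eigenstate realizes $\|D\|_\infty$ and the absolute value of the expectation matches the largest eigenvalue modulus. It also has to use finite dimension, so that the supremum is achieved rather than merely approached.
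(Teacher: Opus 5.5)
Your proposal is correct and follows the paper's proof essentially step for step. Both arguments use trace cyclicity at the mutation site with $U_m=U_{>j}W_mG_jU_{<j}$, the bound $|\Tr(D\tau)|\le\|D\|_\infty$ attained by a pure eigenvector of a largest-modulus eigenvalue of the Hermitian difference $D=O_j-W_m^\dagger O_jW_m$, and the equivalence $D=0\iff[W_m,O_j]=0$; the differences are cosmetic, since you derive the bound from a spectral decomposition of $\tau$ rather than citing Schatten--H\"older duality, and you prove the prefix-bijection claim, which the paper only asserts.
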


The proof, including attainment of the operator-norm maximum, is given in \cref{sec:exact-local-replacement}.
Here $\|D\|_\infty$ is the operator norm; for a Hermitian difference $D$, it is the largest absolute eigenvalue and gives the largest possible absolute expectation over input states.
The next result asks the complementary testing question: with the input fixed, how much could changing the measurement expose?
\begin{proposition}[State-side dual and optimal observable]
\label[proposition]{prop:state-side-dual}
For fixed $\tau_j$ and $W_m$, define the traceless Hermitian state difference
\begin{equation}
\label{eq:state-side-difference}
\Delta\tau_j
=
\tau_j-W_m\tau_jW_m^\dagger.
\end{equation}
Then
\begin{equation}
\label{eq:state-side-gap}
\Delta_m(\rho,O)
=
\left|\Tr(O_j\Delta\tau_j)\right|,
\end{equation}
and the maximum over Hermitian propagated observables with $\|O_j\|_\infty\leq1$ is
\begin{equation}
\label{eq:state-side-trace-norm}
\max_{\substack{O_j=O_j^\dagger\\\|O_j\|_\infty\leq1}}
\Delta_m(\rho,O)
=
\|\Delta\tau_j\|_1,
\end{equation}
attained by $O_j^\star=\operatorname{sgn}(\Delta\tau_j)$, with $\operatorname{sgn}(0)=0$ on the zero eigenspace.
The mutation is invisible to every observable for this input if and only if $\Delta\tau_j=0$, equivalently $[W_m,\tau_j]=0$.
\end{proposition}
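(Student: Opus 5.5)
The plan is to rewrite the gap at the mutation site as in \cref{thm:core-exact-local}, move the relative unitary onto the state by trace cyclicity, and then apply the standard duality between the operator norm and the trace norm. First, since the mutant circuit is $U_{>j}W_mG_jU_{<j}$, the mutant expectation equals $\Tr(W_m^\dagger O_jW_m\tau_j)$. Cyclicity of the trace rewrites this as $\Tr(O_jW_m\tau_jW_m^\dagger)$. Subtracting it from $\mu(\rho,O)=\Tr(O_j\tau_j)$ gives $\mu-\mu_m=\Tr(O_j\Delta\tau_j)$, which is \cref{eq:state-side-gap}. The difference $\Delta\tau_j$ is Hermitian because both $\tau_j$ and $W_m\tau_jW_m^\dagger$ are Hermitian. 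It is traceless because unitary conjugation preserves the trace.

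Next comes the duality. Diagonalize $\Delta\tau_j=\sum_k\lambda_k\ket{e_k}\bra{e_k}$ with an orthonormal eigenbasis. For any Hermitian $O_j$ with $\|O_j\|_\infty\le1$, each diagonal entry satisfies $|\bra{e_k}O_j\ket{e_k}|\le1$. Hence
$$
|\Tr(O_j\Delta\tau_j)|=\Bigl|\sum_k\lambda_k\bra{e_k}O_j\ket{e_k}\Bigr|\le\sum_k|\lambda_k|=\|\Delta\tau_j\|_1.
$$
The bound is attained by $O_j^\star=\operatorname{sgn}(\Delta\tau_j)=\sum_k\operatorname{sgn}(\lambda_k)\ket{e_k}\bra{e_k}$. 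This operator is Hermitian with norm at most one, and it gives $\Tr(O_j^\star\Delta\tau_j)=\sum_k|\lambda_k|$. Setting $\operatorname{sgn}(0)=0$ on the kernel is harmless, since the kernel contributes nothing.

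The step needing the most care is that the maximum is taken over propagated observables $O_j$, not over final observables $O$. This is legitimate because $O\mapsto U_{>j}^\dagger OU_{>j}$ is a norm-preserving bijection on Hermitian operators. So $O^\star=U_{>j}O_j^\star U_{>j}^\dagger$ is an admissible final observable, and the maximum is attained rather than merely a supremum. I also need the mutation to act on the register in the embedded sense of \cref{thm:core-exact-local}, so that $W_m$ and $\tau_j$ live on the same space.

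Finally, the invisibility characterization follows from the trace-norm formula. Invisibility to every observable means the maximum is zero. This holds if and only if $\|\Delta\tau_j\|_1=0$, which holds if and only if $\Delta\tau_j=0$. The equivalence with commutation uses $W_m$ unitary. $\Delta\tau_j=0$ means $W_m\tau_jW_m^\dagger=\tau_j$. Right-multiplying by $W_m$ gives $W_m\tau_j=\tau_jW_m$, that is $[W_m,\tau_j]=0$, and the converse follows by right-multiplying by $W_m^\dagger$. Rescaling then covers observables of arbitrary norm: the vanishing of $\Tr(O_j\Delta\tau_j)$ for all $O_j$ in the unit ball implies it vanishes for all Hermitian $O_j$.
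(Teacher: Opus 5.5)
Your proof is correct and follows the paper's argument. You use trace cyclicity to move $W_m$ onto the state, trace-norm duality with $\operatorname{sgn}(\Delta\tau_j)$ as the attaining observable, the norm-preserving bijection $O\mapsto U_{>j}^\dagger O U_{>j}$ to realize the optimum at the output, and unitarity of $W_m$ for the commutator equivalence. The one difference is that you prove the Hermitian case of the trace-norm variational characterization directly by diagonalizing $\Delta\tau_j$, whereas the paper cites it from the literature.
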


The specialization of standard trace-norm duality to the mutation site is given in \cref{sec:state-side-dual-proof}.
For the Hermitian difference $\Delta\tau_j$, the trace norm $\|\Delta\tau_j\|_1$ is the sum of its absolute eigenvalues.
The sign observable assigns $+1$, $-1$, or $0$ to the positive, negative, or zero eigenvalues of that difference, respectively.
It supplies a best possible measurement benchmark for the fixed input, whose construction and implementation may still be expensive.
Dividing the gap by the observable's outcome range connects this measurement benchmark to how well the two output states can be distinguished.
Let $\rho_{\mathrm{out}}=U\rho U^\dagger$ and $\rho_{\mathrm{out}}^{(m)}=U_m\rho U_m^\dagger$, and write $D_{\mathrm{tr}}(\sigma,\omega)=\|\sigma-\omega\|_1/2$.
This quantity is the \emph{trace distance} between the states, and it bounds the normalized gap:
\begin{equation}
\label{eq:normalized-gap-trace-distance}
\delta_m(\rho,O)
=
\frac{\Delta_m(\rho,O)}{R_O}
\leq
D_{\mathrm{tr}}\!\left(\rho_{\mathrm{out}},\rho_{\mathrm{out}}^{(m)}\right)
=
\frac{1}{2}\|\Delta\tau_j\|_1.
\end{equation}
For nonzero $\Delta\tau_j$, equality is attained by the Helstrom observable $O_j^\star=\operatorname{sgn}(\Delta\tau_j)$, propagated to the output.
Its minimum and maximum eigenvalues are $-1$ and $+1$, so its spectral range is two; a nontrivial kernel adds the intermediate outcome zero without changing that range.
If $\Delta\tau_j=0$, every nonconstant observable attains the zero normalized gap and trace distance.
If an unknown output is equally likely to come from either circuit, the Holevo--Helstrom theorem gives the best single-shot probability of identifying its source as $(1+D_{\mathrm{tr}})/2$ \cites[Sec.~2.1, especially Eqs.~(2.5)--(2.9)]{helstrom1971quantumdetection}[Theorem~3.4 and Eq.~(3.11)]{watrous2018theory}.
This optimum assumes unrestricted measurements; the permitted interface may prevent a test from attaining it.
The derivation and the resulting observable-independent efficiency ceiling are given in \cref{sec:state-side-dual-proof}.

\Cref{thm:core-exact-local} therefore requires two things for a nonzero gap: the mutation must change the propagated observable, and that change must have nonzero expectation in $\tau_j$.
If $[W_m,O_j]=0$, changing the input cannot expose the mutation through this observable.
If $[W_m,\tau_j]=0$, changing the observable cannot expose it on this input.
If neither commutator vanishes, the selected expectation can still be unchanged, so determining the gap requires evaluating that expectation difference.

\subsubsection{Exact Effects for Common Gate Changes}
\label{sec:core-result-exact-structured}

Common gate changes let us evaluate the local effect without assuming that the change is small.
For a Pauli-axis rotation, two state expectations determine the gap at every angle; fixed one-qubit and Clifford replacements also admit exact constructions below.
These formulas simplify the observable change, although obtaining the required state expectations can still be expensive.

First consider a relative mutation that rotates about an axis $A$ by an angle $\eta$ in radians:
$$
W_m
=
e^{-i\eta A/2},
\qquad
A=A^\dagger,
\qquad
A^2=I.
$$
The condition $A^2=I$ means that $A$ is its own inverse, or an \emph{involution}.
Every Pauli string satisfies it, so the axis may involve several qubits, as in joint rotations $R_{XX}$ and $R_{ZZ}$.
The formulas apply to every Hermitian involution.
The \emph{Bloch sphere} represents pure one-qubit states as points on a unit sphere whose coordinates are their $X$, $Y$, and $Z$ expectations.
On one qubit, these involutions include every Bloch-sphere axis $A=\bm{n}\cdot\bm{\sigma}$ with $\|\bm{n}\|_2=1$, where $\bm{\sigma}=(X,Y,Z)$ and $\bm{n}$ specifies the axis direction.
Decompose the propagated observable into the parts that commute and anticommute with $A$:
\begin{equation*}
\begin{aligned}
O_{j,+}
&=
\frac{O_j+AO_jA}{2},
&
O_{j,-}
&=
\frac{O_j-AO_jA}{2}.
\end{aligned}
\end{equation*}
Anticommutation means $AO_{j,-}=-O_{j,-}A$.
The commuting part is unchanged by the rotation, so only the anticommuting part can contribute to the mutation effect.
Because $A^2=I$, its exponential reduces to sine and cosine terms, giving the conjugation identity below \cite[Sec.~III.A, Theorem~1, and Eq.~(12)]{schuld2019evaluating}.
Parameter-shift methods use this same sinusoidal structure to recover expectation derivatives from evaluations at shifted gate angles, while analytic optimization reconstructs and optimizes the response from three evaluations \cite{ostaszewski2021structure,nakanishi2020sequential}.
The corollary uses the same response to find angles that leave the original expectation unchanged and angles that produce the largest gap.

\begin{proposition}[Exact involutory-axis rotation effect]
\label{thm:core-exact-pauli-rotation}
For every mutation strength $\eta$,
\begin{equation}
\label{eq:core-exact-pauli-conjugation}
e^{i\eta A/2}O_je^{-i\eta A/2}
=
O_{j,+}
+
\cos(\eta)O_{j,-}
-
i\sin(\eta)O_{j,-}A.
\end{equation}
Consequently,
\begin{equation}
\label{eq:core-exact-pauli-gap}
\Delta_m(\rho,O)
=
\left|
\Tr\!\left(
\left[
(1-\cos\eta)O_{j,-}
+
i\sin\eta\,O_{j,-}A
\right]
\tau_j
\right)
\right|.
\end{equation}
\end{proposition}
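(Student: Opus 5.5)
The plan is to split the conjugation into commuting and anticommuting parts and then substitute the result into the exact local gap of \cref{thm:core-exact-local}.

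\textbf{Step 1: reduce the exponentials.} Because $A^2=I$, the power series gives
$$
e^{\pm i\eta A/2}=\cos(\eta/2)I\pm i\sin(\eta/2)A.
$$

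\textbf{Step 2: record the algebraic relations.} By definition, $O_j=O_{j,+}+O_{j,-}$. Using $A^2=I$, one checks that $AO_{j,\pm}=\pm O_{j,\pm}A$. The commuting part therefore commutes with both exponentials, so it is unchanged by conjugation. This accounts for the term $O_{j,+}$ in \cref{eq:core-exact-pauli-conjugation}.

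\textbf{Step 3: conjugate the anticommuting part.} Move $O_{j,-}$ through the left factor. Anticommutation gives $e^{i\eta A/2}O_{j,-}=O_{j,-}e^{-i\eta A/2}$. Hence
$$
e^{i\eta A/2}O_{j,-}e^{-i\eta A/2}=O_{j,-}e^{-i\eta A},
$$
and Step 1 with $\eta/2$ replaced by $\eta$ gives $O_{j,-}e^{-i\eta A}=\cos(\eta)O_{j,-}-i\sin(\eta)O_{j,-}A$. Adding the commuting part yields \cref{eq:core-exact-pauli-conjugation}. The only point needing care is the sign convention, since the sign of the $\sin\eta$ term depends on it. With $W_m=e^{-i\eta A/2}$, we have $W_m^\dagger O_jW_m=e^{i\eta A/2}O_je^{-i\eta A/2}$, which is exactly the operator on the left of \cref{eq:core-exact-pauli-conjugation}.

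\textbf{Step 4: substitute into the exact gap.} \Cref{thm:core-exact-local} gives the gap as the absolute trace of $(O_j-W_m^\dagger O_jW_m)\tau_j$. Subtracting the conjugated operator from $O_{j,+}+O_{j,-}$ cancels $O_{j,+}$ and leaves
$$
(1-\cos\eta)O_{j,-}+i\sin\eta\,O_{j,-}A.
$$
Taking the trace against $\tau_j$ and its absolute value gives \cref{eq:core-exact-pauli-gap}.

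A side check confirms the formula is well posed. The operator $iO_{j,-}A$ is Hermitian, since $(iO_{j,-}A)^\dagger=-iAO_{j,-}=iO_{j,-}A$. The trace is therefore real, consistent with $\Delta_m$ being a real gap. The main (mild) obstacle is keeping the operator ordering and sign conventions straight in Step 3; everything else is routine.
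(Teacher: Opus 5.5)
Your proof is correct and follows the paper's own argument: the same split into commuting and anticommuting parts, the involutory exponential identity, and substitution into the exact local gap of \cref{thm:core-exact-local}. Pushing $e^{i\eta A/2}$ through $O_{j,-}$ to obtain $O_{j,-}e^{-i\eta A}$ is simply a compact way of carrying out the paper's component-wise multiplication.
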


The proof is given in \cref{sec:exact-structured-mutations}.
The identity holds for any angle, including cases where the linear term near zero vanishes but higher-order terms produce an effect.
For the common same-axis replacement $R_A(\theta)\rightarrow R_A(\theta')$, the relative mutation has $\eta=\theta'-\theta$.
Thus, the same theorem exactly covers rotation-gate removal with $\eta=-\theta$, a sign flip with $\eta=-2\theta$, and fixed large-angle changes such as $\eta=\pi/2$.
The size of the angle alone does not require an approximation or full-circuit simulation.

For a fixed site, axis, and test, all dependence on the input state enters through two expectation values.
They multiply the cosine and sine contributions to the signed change; define them as
\begin{equation}
\label{eq:exact-pauli-two-coefficients}
a_m
=
\Tr(O_{j,-}\tau_j),
\qquad
b_m
=
i\Tr(O_{j,-}A\tau_j).
\end{equation}
Both coefficients are real because $O_{j,-}$ and $iO_{j,-}A$ are Hermitian.

\begin{corollary}[Finite-strength visibility and optimal strength]
\label[corollary]{cor:finite-strength-visibility}
For a fixed-axis involutory relative rotation,
\begin{equation}
\label{eq:finite-strength-gap}
\Delta_m(\eta)
=
\left|(1-\cos\eta)a_m+\sin\eta\,b_m\right|.
\end{equation}
Consequently,
\begin{equation}
\label{eq:finite-strength-maximum}
\max_{\eta\in\mathbb R}\Delta_m(\eta)
=
|a_m|+\sqrt{a_m^2+b_m^2},
\end{equation}
and the mutation is invisible for every strength if and only if $a_m=b_m=0$.
If $(a_m,b_m)\neq(0,0)$, its complete zero set modulo $2\pi$ is
\begin{equation}
\label{eq:finite-strength-zero-set}
\eta
\equiv
0
\quad\text{or}\quad
\eta
\equiv
\eta_{0,m}
:=
2\operatorname{atan2}(-b_m,a_m)
\pmod{2\pi}.
\end{equation}
The two residues are distinct exactly when $b_m\neq0$; in particular, $a_m=0$ and $b_m\neq0$ gives the nontrivial zero $\eta_{0,m}\equiv\pi$.
The phase form, maximizing strengths, restricted-domain optimization, and proof of the zero set are given in \cref{sec:exact-structured-mutations}.
\end{corollary}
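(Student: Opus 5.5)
The plan is to start from \cref{thm:core-exact-pauli-rotation}. Taking the trace of \cref{eq:core-exact-pauli-gap} against $\tau_j$ and substituting the definitions \cref{eq:exact-pauli-two-coefficients} gives the signed change $f(\eta)=(1-\cos\eta)a_m+\sin\eta\,b_m$. Here $\Tr(iO_{j,-}A\tau_j)=b_m$ by definition, so \cref{eq:finite-strength-gap} is immediate. Both coefficients are real, so $f$ is a real trigonometric polynomial of degree one. I would then put it in phase form,
\begin{equation*}
f(\eta)=a_m-r_m\cos(\eta-\varphi_m),\qquad r_m=\sqrt{a_m^2+b_m^2},
\end{equation*}
where $\varphi_m$ satisfies $r_m\cos\varphi_m=a_m$ and $r_m\sin\varphi_m=b_m$; that is, $\varphi_m=\operatorname{atan2}(b_m,a_m)$. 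This follows by expanding $r_m\cos(\eta-\varphi_m)=a_m\cos\eta+b_m\sin\eta$.

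For the maximum, note that $f$ ranges over the interval $[a_m-r_m,a_m+r_m]$ as $\eta$ varies, and both endpoints are attained. Hence $\max_\eta|f|=\max\{|a_m-r_m|,|a_m+r_m|\}=|a_m|+r_m$, which is \cref{eq:finite-strength-maximum}; the last equality uses $r_m\ge 0$. The invisibility claim follows at once. If $a_m=b_m=0$, then $f\equiv0$. Conversely, $f\equiv0$ forces the maximum $|a_m|+r_m$ to vanish, so $a_m=b_m=0$. Equivalently, the functions $1-\cos\eta$ and $\sin\eta$ are linearly independent.

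For the zero set with $(a_m,b_m)\neq(0,0)$, I would use half-angle identities rather than the phase form. Write $1-\cos\eta=2\sin^2(\eta/2)$ and $\sin\eta=2\sin(\eta/2)\cos(\eta/2)$. Then
\begin{equation*}
f(\eta)=2\sin(\eta/2)\bigl[a_m\sin(\eta/2)+b_m\cos(\eta/2)\bigr].
\end{equation*}
The first factor vanishes exactly when $\eta\equiv0\pmod{2\pi}$. The bracket is $r_m\sin(\eta/2+\psi)$ for a suitable phase $\psi$. It therefore vanishes exactly when $\eta/2\equiv\beta\pmod\pi$, where $\beta$ is any angle with $(\cos\beta,\sin\beta)$ proportional to $(a_m,-b_m)$; this gives $a_m\sin\beta+b_m\cos\beta=0$. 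The choice $\beta=\operatorname{atan2}(-b_m,a_m)$ qualifies, and it yields $\eta\equiv2\beta\pmod{2\pi}$, which matches \cref{eq:finite-strength-zero-set}. Since $\eta/2$ is determined only modulo $\pi$ here, $\eta$ is determined modulo $2\pi$, giving exactly one residue.

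The main obstacle is bookkeeping: checking that the two residues coincide exactly when $b_m=0$, and the special case $a_m=0$. If $b_m=0$, then $a_m\neq0$, so $\beta\in\{0,\pi\}$ and $2\beta\equiv0$; the residues coincide, as they must, since then $f=a_m(1-\cos\eta)$. If $b_m\neq0$, then $\beta\notin\pi\mathbb Z$, so $2\beta\not\equiv0\pmod{2\pi}$ and the residues are distinct. When $a_m=0$ and $b_m\neq0$, we get $\beta=\mp\pi/2$, so $2\beta\equiv\pi$, consistent with $f=b_m\sin\eta$. One further care point is the convention for $\operatorname{atan2}$ at $(0,0)$, which is excluded by hypothesis, and its branch choice. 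Any branch differs by a multiple of $2\pi$ in $\beta$, and hence changes $2\beta$ by a multiple of $4\pi$, so the residue modulo $2\pi$ is well defined.
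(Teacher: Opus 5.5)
Your proposal is correct and follows essentially the paper's route: a phase form of the real degree-one trigonometric polynomial gives the maximum and the all-strength invisibility criterion, and the half-angle factorization $2\sin(\eta/2)\bigl[a_m\sin(\eta/2)+b_m\cos(\eta/2)\bigr]$ gives the zero set. Your residue bookkeeping is just a more explicit version of the paper's. One slip needs fixing: with $\varphi_m=\operatorname{atan2}(b_m,a_m)$, the expression $a_m-r_m\cos(\eta-\varphi_m)$ equals $f(-\eta)$, not $f(\eta)$, so the phase form should read $a_m-r_m\cos(\eta+\varphi_m)$. This is harmless for the statement as posed, because you use only the range of $f$ over $\mathbb R$ and take the zeros from the correct factorization. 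However, it would flip the sign of any maximizing strength or nontrivial zero read off from that form.
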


Once $a_m$ and $b_m$ have been computed, a testing tool can reuse them for every declared angle change at the same site and axis under the same test.
Their joint vanishing means that no angle changes this expectation; otherwise, the zero set identifies the particular angles at which the effect cancels.
The maximizing angle can help with smoke testing or fault localization when the strength is controllable, but maximizing the gap does not generally model a plausible fault.
For an adequacy study, the mutation operator must fix $\eta$ independently of the suite.
The formulas then diagnose whether that declared strength lies near $\eta_{0,m}$ or whether $a_m=b_m=0$ makes every strength invisible.

Even without the state coefficients, an upper bound can sometimes establish that the gap is below the test threshold.
Write the Hermitian observable difference as
\begin{equation}
\label{eq:structured-observable-difference}
D_{A,j}(\eta)=(1-\cos\eta)O_{j,-}+i\sin\eta\,O_{j,-}A.
\end{equation}
The exact norm identity proved in \cref{sec:bounded-amplification} bounds its expectation for every state:
\begin{equation}
\label{eq:exact-rotation-gap-envelope}
\Delta_m(\eta)
\leq
2|\sin(\eta/2)|\,\|O_{j,-}\|_\infty.
\end{equation}
Some state at the mutation site attains this bound, although the selected test state may give a smaller gap or exact cancellation.
Let $c_O=(\lambda_{\max}(O)+\lambda_{\min}(O))/2$.
Because the anticommuting component is unchanged by the scalar shift $O_j\mapsto O_j-c_OI$, it satisfies $\|O_{j,-}\|_\infty\leq R_O/2$, and \cref{eq:exact-rotation-gap-envelope} immediately yields $\delta_m\leq|\sin(\eta/2)|$.
Thus the angle and outcome range alone can certify a below-threshold gap.
\begin{proposition}[Rotation threshold gate]
\label[proposition]{prop:rotation-threshold-gate}
For $W_m=e^{-i\eta A/2}$ with $A=A^\dagger$ and $A^2=I$, an observable of range $R_O>0$, and a differential threshold $\gamma>0$,
\begin{equation}
\label{eq:rotation-threshold-gate}
R_O|\sin(\eta/2)|<\gamma
\quad\Longrightarrow\quad
\Delta_m<\gamma.
\end{equation}
The check takes $\mathcal{O}(1)$ scalar operations when the angle and outcome range are supplied, with exact or validated arithmetic required to certify the strict inequality.
\end{proposition}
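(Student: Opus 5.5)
The plan is to chain the exact envelope \cref{eq:exact-rotation-gap-envelope} with a range bound on the anticommuting component, and then read off the strict inequality. By \cref{thm:core-exact-pauli-rotation}, the gap equals $|\Tr(D_{A,j}(\eta)\tau_j)|$ with $D_{A,j}(\eta)$ as in \cref{eq:structured-observable-difference}. For any density operator, $|\Tr(D\tau)|\le\|D\|_\infty$. So the first step is to establish $\|D_{A,j}(\eta)\|_\infty = 2|\sin(\eta/2)|\,\|O_{j,-}\|_\infty$, or at least the inequality $\le$.

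For this I would rewrite the difference as $O_j - e^{i\eta A/2}O_je^{-i\eta A/2}$. Using $AO_{j,-}=-O_{j,-}A$, one gets $e^{i\eta A/2}O_{j,-}e^{-i\eta A/2}=O_{j,-}e^{-i\eta A}$. Hence
\[
D_{A,j}(\eta)=O_{j,-}\bigl(I-e^{-i\eta A}\bigr).
\]
Since $A^2=I$, the factor $I-e^{-i\eta A}$ is a normal operator whose eigenvalues are $1-e^{\mp i\eta}$. Both have modulus $2|\sin(\eta/2)|$, so the factor equals $2|\sin(\eta/2)|$ times a unitary. Submultiplicativity, with unitary invariance of $\|\cdot\|_\infty$, then gives the envelope. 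The paper states this identity in \cref{sec:bounded-amplification}, and I would simply invoke it.

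Next I would bound $\|O_{j,-}\|_\infty\le R_O/2$. Replacing $O$ by $O-c_OI$ leaves $O_{j,-}$ unchanged, because the identity commutes with $A$ and conjugation preserves it. Unitary conjugation by $U_{>j}$ preserves the spectrum, so $\|O_j-c_OI\|_\infty=R_O/2$. The map $X\mapsto (X-AXA)/2$ averages two operators of norm at most $R_O/2$, so its output also has norm at most $R_O/2$. Combining the two steps gives
\[
\Delta_m\le R_O|\sin(\eta/2)|<\gamma .
\]

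The complexity claim is immediate: the test evaluates one sine, one absolute value, one product and one comparison. The main subtlety, rather than a real obstacle, is the certification caveat. A floating-point comparison near equality does not prove the strict inequality, so I would state that interval or exact arithmetic must give an upper enclosure of $R_O|\sin(\eta/2)|$ lying strictly below $\gamma$. Only then does the implication apply to the true quantities.
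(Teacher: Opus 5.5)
Your proof is correct and follows the paper's route: first the exact envelope $\Delta_m\le 2|\sin(\eta/2)|\,\|O_{j,-}\|_\infty$, then the centering argument giving $\|O_{j,-}\|_\infty\le R_O/2$. The only differences are cosmetic. You obtain the norm identity by factoring $D_{A,j}(\eta)=O_{j,-}(I-e^{-i\eta A})$ with the second factor a scaled unitary, whereas the paper squares $D_{A,j}(\eta)$ using anticommutation to get $D_{A,j}(\eta)^2=4\sin^2(\eta/2)\,O_{j,-}^2$; and your averaging step spells out a bound the paper states without detail.
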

If the upper bound equals $\gamma$, this rule alone leaves the threshold decision unresolved.
The rule requires no execution budget: it bounds the population gap itself.
The independent-copy lower bound below addresses a different question---how many outputs are needed to identify an unknown original or mutant source at a prescribed error probability.
Both scalar rules use the periodic quantity $|\sin(\eta/2)|$, or the canonical effective angle in $[0,\pi]$; retain the original signed offset when evaluating the fixed-axis gap formulas.
For $H_m=A/2$, the operator proxy introduced below satisfies $S_m^{\mathrm{op}}=\|O_{j,-}\|_\infty$.
When a Clifford suffix propagates a unit-norm Pauli observable to a single Pauli string, this proxy is just zero or one according to commutation, so it cannot rank sites beyond that distinction.

To determine what changing the measurement could achieve on the same input, apply the rotation analysis to the state difference instead.
This also supplies the trace-distance bound, an observable attaining it, and the basis for the independent-copy lower bound.

\begin{corollary}[State-side involutory rotation]
\label[corollary]{cor:rotation-state-side}
If $W_m=e^{-i\eta A/2}$ with $A=A^\dagger$ and $A^2=I$, define
\begin{equation*}
\tau_{j,-}^{(A)}
=
\frac{\tau_j-A\tau_jA}{2}.
\end{equation*}
Then
\begin{equation}
\label{eq:state-side-pauli-difference}
\Delta\tau_j
=
(1-\cos\eta)\tau_{j,-}^{(A)}
-
i\sin\eta\,\tau_{j,-}^{(A)}A.
\end{equation}
The opposite sine sign relative to the observable-side expression in \cref{eq:core-exact-pauli-gap} reflects the two conjugations: the state by $W_m$ and the observable by $W_m^\dagger$.
Its trace norm is exactly
\begin{equation}
\label{eq:state-side-pauli-trace-norm}
\|\Delta\tau_j\|_1
=
2|\sin(\eta/2)|\,\|\tau_{j,-}^{(A)}\|_1.
\end{equation}
Apart from the unchanged strengths $\eta\equiv0\pmod{2\pi}$, the rotation is therefore invisible to every observable for this input if and only if $\tau_{j,-}^{(A)}=0$, equivalently $[A,\tau_j]=0$.
\end{corollary}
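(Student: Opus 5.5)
The approach is to mirror the observable-side computation of \cref{thm:core-exact-pauli-rotation} on the state side, using only $A^2=I$.
Since $A$ is a Hermitian involution, $W_m=e^{-i\eta A/2}=\cos(\eta/2)I-i\sin(\eta/2)A$.
Split $\tau_j=\tau_{j,+}^{(A)}+\tau_{j,-}^{(A)}$ with $\tau_{j,\pm}^{(A)}=(\tau_j\pm A\tau_jA)/2$.
The plus part commutes with $A$ and the minus part anticommutes with it.
The plus part is therefore invariant under conjugation by $W_m$.
For the minus part we have $A\tau_{j,-}^{(A)}=-\tau_{j,-}^{(A)}A$, so
\begin{equation*}
W_m\tau_{j,-}^{(A)}W_m^\dagger=\tau_{j,-}^{(A)}e^{i\eta A},
\end{equation*}
which equals $\cos\eta\,\tau_{j,-}^{(A)}+i\sin\eta\,\tau_{j,-}^{(A)}A$.
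Subtracting this from $\tau_j$ gives \cref{eq:state-side-pauli-difference} directly.
The sign remark in the statement follows by comparing with the $W_m^\dagger$ conjugation in \cref{eq:core-exact-pauli-conjugation}.

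For the trace norm, I would factor the difference instead of diagonalizing it.
Write $\Delta\tau_j=\tau_{j,-}^{(A)}\bigl[(1-\cos\eta)I-i\sin\eta\,A\bigr]$.
The bracket equals $I-e^{i\eta A}$.
Since $e^{i\eta A}$ is unitary, $I-e^{i\eta A}=-2i\sin(\eta/2)\,e^{i\eta A/2}A$.
Here $e^{i\eta A/2}A$ is a product of unitaries, hence unitary.
The trace norm is invariant under multiplication by a unitary, so
\begin{equation*}
\|\Delta\tau_j\|_1=2|\sin(\eta/2)|\,\|\tau_{j,-}^{(A)}\|_1,
\end{equation*}
which is \cref{eq:state-side-pauli-trace-norm}.
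The only real obstacle is getting the half-angle factorization right.
I would check it by verifying $I-e^{i\eta A}=e^{i\eta A/2}\bigl(e^{-i\eta A/2}-e^{i\eta A/2}\bigr)$ together with $e^{-i\eta A/2}-e^{i\eta A/2}=-2i\sin(\eta/2)A$.
Both identities follow from the cosine--sine expansion valid for involutions.

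Finally, the invisibility claim follows from \cref{prop:state-side-dual}: the mutation is invisible to every observable exactly when $\Delta\tau_j=0$, that is, when $\|\Delta\tau_j\|_1=0$.
If $\eta\not\equiv0\pmod{2\pi}$, then $\sin(\eta/2)\neq0$, so invisibility is equivalent to $\tau_{j,-}^{(A)}=0$.
This says $A\tau_jA=\tau_j$.
Multiplying on the left by $A$ and using $A^2=I$ shows this is equivalent to $[A,\tau_j]=0$.
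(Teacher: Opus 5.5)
Your proof is correct. The derivation of the difference formula and the invisibility criterion match the paper; the routes differ only at the trace-norm step.

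\textbf{The paper's route.} The paper notes that $-i\tau_{j,-}^{(A)}A$ is Hermitian, anticommutes with $\tau_{j,-}^{(A)}$, and has the same square. The cross terms therefore cancel in $(\Delta\tau_j)^2=4\sin^2(\eta/2)\bigl(\tau_{j,-}^{(A)}\bigr)^2$. Taking positive square roots gives the operator identity $|\Delta\tau_j|=2|\sin(\eta/2)|\,|\tau_{j,-}^{(A)}|$, and its trace is the claim. This is the same squaring trick the paper uses for the observable-side operator-norm identity.

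\textbf{Your route.} You instead factor $\Delta\tau_j=-2i\sin(\eta/2)\,\tau_{j,-}^{(A)}\,e^{i\eta A/2}A$ and use unitary invariance of the trace norm. This avoids the Hermiticity and anticommutation checks on the auxiliary operator and the uniqueness of positive square roots. It also reuses the relation $W_m\tau_{j,-}^{(A)}W_m^\dagger=\tau_{j,-}^{(A)}e^{i\eta A}$ from your first step.

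\textbf{What each buys.} Your factorization shows directly that every singular value scales by $2|\sin(\eta/2)|$, so any unitarily invariant norm follows at once. For instance, writing $D_{A,j}(\eta)=O_{j,-}(I-e^{-i\eta A})$ gives the observable-side operator-norm identity the same way. The paper's version, in exchange, produces the operator-modulus identity explicitly and keeps the state-side and observable-side computations visibly parallel.
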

The derivation is given in \cref{sec:state-side-dual-proof}.
The two zero conditions answer complementary testing questions.
If $O_{j,-}=0$, the chosen observable has zero gap for every input and strength.
If $\tau_{j,-}^{(A)}=0$, the chosen input has zero gap for every observable and strength.

\begin{corollary}[Test-independent involutory-rotation ceiling]
\label[corollary]{cor:test-independent-rotation-ceiling}
Let $A=A^\dagger$ satisfy $A^2=I$ and have nonzero $+1$ and $-1$ eigenspaces.
For Pauli strings this excludes only $A=\pm I$.
For the relative rotation $W_m=e^{-i\eta A/2}$, every input and nonconstant observable satisfy
\begin{equation}
\label{eq:test-independent-rotation-ceiling}
D_{\mathrm{tr}}\!\left(\rho_{\mathrm{out}},\rho_{\mathrm{out}}^{(m)}\right)
\leq
|\sin(\eta/2)|,
\qquad
\delta_m(\rho,O)
\leq
|\sin(\eta/2)|.
\end{equation}
Both ceilings are tight.
For the discrimination problem, consider $N_{\mathrm{copy}}$ independent executions with inputs fixed before any outcomes are observed.
The inputs may differ across executions, and each may be entangled with its own untouched reference system, but the joint input--reference states must be products across executions.
Even with an arbitrary joint measurement on all output--reference copies, attaining any fixed nonzero advantage over random guessing under equal priors requires $N_{\mathrm{copy}}=\Omega(1/\sin^2(\eta/2))$ as $\eta\to0$.
This lower bound holds for every such predetermined product-input strategy.
\end{corollary}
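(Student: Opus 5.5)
The plan splits the statement into three parts: the two ceilings, their tightness, and the copy lower bound.

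\textbf{Ceilings.} I would take both ceilings from \cref{cor:rotation-state-side} and \cref{eq:normalized-gap-trace-distance}. By unitary invariance of the trace norm, $D_{\mathrm{tr}}(\rho_{\mathrm{out}},\rho_{\mathrm{out}}^{(m)})=\tfrac12\|\Delta\tau_j\|_1$. By \cref{eq:state-side-pauli-trace-norm} this equals $|\sin(\eta/2)|\,\|\tau_{j,-}^{(A)}\|_1$. It then suffices to show $\|\tau_{j,-}^{(A)}\|_1\le1$. This follows from the triangle inequality, $\|(\tau_j-A\tau_jA)/2\|_1\le(\|\tau_j\|_1+\|A\tau_jA\|_1)/2=1$, since $A$ is unitary and $\tau_j$ is a density operator. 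The observable ceiling $\delta_m\le D_{\mathrm{tr}}$ is exactly \cref{eq:normalized-gap-trace-distance}. The same bound also appears directly in the text after \cref{eq:exact-rotation-gap-envelope}.

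\textbf{Tightness.} I would use the two nonzero eigenspaces. Pick unit vectors $\ket{u_\pm}$ with $A\ket{u_\pm}=\pm\ket{u_\pm}$, set $\ket{\phi}=(\ket{u_+}+\ket{u_-})/\sqrt2$, and choose the circuit input as the preimage of $\ket\phi\bra\phi$ under the bijective prefix. Then $A\ket\phi$ is orthogonal to $\ket\phi$. Hence $\tau_{j,-}^{(A)}=(\ket\phi\bra\phi-A\ket\phi\bra\phi A)/2$ is half a difference of orthogonal rank-one projectors, with trace norm exactly $1$, so the first ceiling is attained. For the second ceiling, I would take the Helstrom observable $\operatorname{sgn}(\Delta\tau_j)$ propagated to the output. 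When $\sin(\eta/2)\neq0$ it attains equality in \cref{eq:normalized-gap-trace-distance}, as stated after that equation. When $\sin(\eta/2)=0$, both sides vanish for any nonconstant observable.

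\textbf{Lower bound on copies.} This is the main obstacle, because of the reference systems and the input choices that vary across copies. I would argue through fidelity. With a reference system $E$, copy $k$ produces pure (or, by purification and monotonicity, without loss of generality pure) outputs $(U\otimes I)\ket{\psi_k}$ and $(U_m\otimes I)\ket{\psi_k}$. Their overlap is $|\bra{\psi_k}(W\otimes I)\ket{\psi_k}|$ at the mutation site, where $W=\cos(\eta/2)I-i\sin(\eta/2)A$. Writing $p_k=\bra{\psi_k}(P_+\otimes I)\ket{\psi_k}$ in terms of the eigenprojectors of $A$, this overlap equals $|\cos(\eta/2)-i\sin(\eta/2)(2p_k-1)|$. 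Its square is $\cos^2(\eta/2)+\sin^2(\eta/2)(2p_k-1)^2\ge\cos^2(\eta/2)$, so each copy has squared fidelity $F_k\ge\cos^2(\eta/2)$, uniformly over inputs. Since the joint inputs are products across executions, fidelity is multiplicative: $F\ge\cos^{2N_{\mathrm{copy}}}(\eta/2)$. The Fuchs--van de Graaf inequality then gives $D_{\mathrm{tr}}\le\sqrt{1-F}\le\sqrt{1-(1-\sin^2(\eta/2))^{N_{\mathrm{copy}}}}\le\sqrt{N_{\mathrm{copy}}\sin^2(\eta/2)}$. By Holevo--Helstrom the advantage is $D_{\mathrm{tr}}/2$. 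A fixed advantage $\epsilon>0$ therefore forces $N_{\mathrm{copy}}\ge4\epsilon^2/\sin^2(\eta/2)$, which is $\Omega(1/\sin^2(\eta/2))$.

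Two points need care. First, the bound must cover arbitrary joint measurements, which is immediate because trace distance governs the optimal joint measurement. Second, mixed inputs must be handled by convexity or purification into the reference system.
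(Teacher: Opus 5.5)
Your proposal is correct and follows essentially the paper's route: bound $\|\tau_{j,-}^{(A)}\|_1\le1$ using the density operators $\tau_j$ and $A\tau_jA$, prove tightness with the equal superposition of $\pm1$ eigenvectors and the Helstrom observable, and obtain the per-copy bound $F\ge\cos^2(\eta/2)$ from the purification overlap $|\cos(\eta/2)-i\sin(\eta/2)\langle A\otimes I\rangle|$, followed by multiplicativity, Fuchs--van de Graaf, and Helstrom. The only difference is at the end: you relax $\cos^{2N_{\mathrm{copy}}}(\eta/2)$ with Bernoulli's inequality to get the explicit bound $N_{\mathrm{copy}}\ge4\epsilon^2/\sin^2(\eta/2)$, whereas the paper inverts $\sqrt{1-\cos^{2N_{\mathrm{copy}}}(\eta/2)}\ge2\epsilon$ exactly to obtain a slightly sharper logarithmic form.
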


The tightness construction, fidelity proof, and boundary with entangled multi-use channel discrimination are given in \cref{sec:state-side-dual-proof}.
The tightness construction permits arbitrary input-state preparation; a tool restricted to computational-basis or simple product inputs can realize a strictly smaller ceiling.
For example, $R_z(\eta)$ acting directly on either computational-basis state changes no output state, so the attainable trace distance on those inputs is zero even when $|\sin(\eta/2)|>0$.

\begin{corollary}[Distribution rotation ceiling]
\label[corollary]{cor:distribution-rotation-ceiling}
Let $W_m=e^{-i\eta A/2}$ with $A=A^\dagger$ and $A^2=I$, and apply the same fixed output measurement to the original and mutant circuits.
Write its effects as $E_x\geq0$ with $\sum_xE_x=I$, and its outcome probabilities as $p(x)=\Tr(E_x\rho_{\mathrm{out}})$ and $p_m(x)=\Tr(E_x\rho_{\mathrm{out}}^{(m)})$.
Each \emph{measurement effect} $E_x$ is the positive semidefinite operator specifying the probability of outcome $x$; this allows measurements more general than basis projectors.
Then
\begin{equation}
\label{eq:distribution-rotation-ceiling}
D_{\mathrm{TV}}(p,p_m)=\frac12\sum_x|p(x)-p_m(x)|
\leq D_{\mathrm{tr}}(\rho_{\mathrm{out}},\rho_{\mathrm{out}}^{(m)})
\leq|\sin(\eta/2)|.
\end{equation}
Consequently, for a differential distribution test with detection rule $D_{\mathrm{TV}}\geq\gamma_{\mathrm{TV}}>0$,
\begin{equation}
\label{eq:tv-threshold-gate}
|\sin(\eta/2)|<\gamma_{\mathrm{TV}}
\quad\Longrightarrow\quad
D_{\mathrm{TV}}(p,p_m)<\gamma_{\mathrm{TV}}.
\end{equation}
\end{corollary}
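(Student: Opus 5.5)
The plan is to prove \cref{eq:distribution-rotation-ceiling} as a chain of two inequalities. The second inequality is already available, and the first is the standard fact that a measurement cannot increase trace distance. The threshold gate \cref{eq:tv-threshold-gate} then follows by transitivity.

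First, for the right-hand inequality I will invoke \cref{cor:test-independent-rotation-ceiling}, which gives $D_{\mathrm{tr}}(\rho_{\mathrm{out}},\rho_{\mathrm{out}}^{(m)})\leq|\sin(\eta/2)|$. Its hypothesis that both eigenspaces of $A$ are nonzero deserves a check. If $A=\pm I$, then $W_m$ is a global phase, so the trace distance is zero and the bound holds trivially. Alternatively, one can argue directly: by \cref{eq:normalized-gap-trace-distance} and \cref{eq:state-side-pauli-trace-norm}, the trace distance equals $|\sin(\eta/2)|\,\|\tau_{j,-}^{(A)}\|_1$. Moreover, $\|\tau_{j,-}^{(A)}\|_1\leq\frac12(\|\tau_j\|_1+\|A\tau_jA\|_1)=1$, since $A$ is unitary and $\tau_j$ is a density operator. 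This direct route avoids the hypothesis entirely, and I will use it.

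Second, for the left-hand inequality, I will write $\Delta=\rho_{\mathrm{out}}-\rho_{\mathrm{out}}^{(m)}$, which is Hermitian and traceless. Then $p(x)-p_m(x)=\Tr(E_x\Delta)$. Let $S=\{x:p(x)\geq p_m(x)\}$ and $E_S=\sum_{x\in S}E_x$, so that $0\leq E_S\leq I$. Because $\sum_x(p(x)-p_m(x))=0$, the total variation satisfies $D_{\mathrm{TV}}=\sum_{x\in S}(p(x)-p_m(x))=\Tr(E_S\Delta)$. For any $0\leq E\leq I$, decompose $\Delta=\Delta_+-\Delta_-$ into positive and negative parts. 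Then $\Tr(E\Delta)\leq\Tr(\Delta_+)=\frac12\|\Delta\|_1$, where the equality uses tracelessness. This gives $D_{\mathrm{TV}}\leq D_{\mathrm{tr}}$. Equivalently, the step is the duality already used in \cref{prop:state-side-dual}, applied to the two-outcome observable $2E_S-I$ with $\|2E_S-I\|_\infty\leq1$.

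Third, for \cref{eq:tv-threshold-gate}, the chain gives $D_{\mathrm{TV}}\leq|\sin(\eta/2)|<\gamma_{\mathrm{TV}}$, so the strict inequality is preserved. No step is a real obstacle; the only subtlety is the degenerate case $A=\pm I$, which the direct trace-norm bound handles.
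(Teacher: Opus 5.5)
Your proposal is correct and follows essentially the same route as the paper: the same sum-of-effects operator $E_S$ with $0\leq E_S\leq I$ bounds $D_{\mathrm{TV}}$ by half the trace norm of the traceless state difference, and the rotation ceiling closes the chain by transitivity. The one difference is that you rederive the ceiling directly from $\|\tau_{j,-}^{(A)}\|_1\leq1$, which is the paper's own appendix argument for \cref{cor:test-independent-rotation-ceiling}; this absorbs the degenerate case $A=\pm I$, which the paper instead dispatches separately in one line.
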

\begin{proof}
Let $E_+$ be the sum of effects for outcomes with $p(x)>p_m(x)$.
Since the probabilities each sum to one, $D_{\mathrm{TV}}=\Tr[E_+(\rho_{\mathrm{out}}-\rho_{\mathrm{out}}^{(m)})]$.
The inequality $0\leq E_+\leq I$ bounds this trace by half the trace norm of the traceless state difference, and \cref{cor:test-independent-rotation-ceiling} supplies the rotation ceiling.
For $A=\pm I$, both distances vanish directly.
\end{proof}

This rule covers computational-basis distributions and their marginals without enumerating outcomes.
It does not imply that a sampled histogram or chi-square test will accept: a population distance and a finite-sample rejection are different quantities.
Nor does a small TV bound uniformly control Pearson chi-square divergence when reference probabilities can approach zero.
For $0<\epsilon<1$, the distributions $p=(\epsilon^3,1-\epsilon^3)$ and $q=(\epsilon,1-\epsilon)$ have $D_{\mathrm{TV}}=\epsilon-\epsilon^3\to0$ as $\epsilon\to0$, while $\sum_x(q(x)-p(x))^2/p(x)\sim1/\epsilon$.
More generally, a specified statistic satisfying $|T(\sigma)-T(\omega)|\leq L_TD_{\mathrm{tr}}(\sigma,\omega)$ admits the bound $L_T|\sin(\eta/2)|$ on its change; deciding acceptance still requires that statistic's threshold or acceptance region.
For a native expectation interval, combining \cref{eq:rotation-threshold-gate,eq:native-margin-skip} gives the sufficient condition $R_O|\sin(\eta/2)|<h_r$ for survival, with no converse detection rule.

The same rotation formulas also cover any fixed one-qubit unitary replacement, even when the original and replacement gates rotate about different axes.
The relative unitary determines the effective axis and angle; a global phase has no effect on the output state.
\begin{corollary}[Fixed one-qubit unitary replacement]
\label[corollary]{cor:fixed-one-qubit-replacement}
Let $G_j$ and $\widetilde G_j$ be one-qubit unitaries and let $W_m=\widetilde G_jG_j^\dagger$.
If $W_m=e^{i\chi}I$, set $\eta_{\mathrm{eff}}=0$ and leave $A_{\mathrm{eff}}$ undefined; the replacement is channel-equivalent to the original operation.
Otherwise, there exist a global phase $\chi$, an effective angle $\eta_{\mathrm{eff}}\in(0,\pi]$, and a unit vector $\bm{n}\in\mathbb R^3$ such that
\begin{equation}
\label{eq:fixed-one-qubit-axis-angle}
W_m
=
e^{i\chi}e^{-i\eta_{\mathrm{eff}}A_{\mathrm{eff}}/2},
\qquad
A_{\mathrm{eff}}
=
n_xX+n_yY+n_zZ,
\qquad
A_{\mathrm{eff}}^2=I.
\end{equation}
The global phase cancels from conjugation, so \cref{eq:core-exact-pauli-conjugation,eq:core-exact-pauli-gap,eq:exact-rotation-gap-envelope} apply exactly to this fixed replacement with $A=A_{\mathrm{eff}}$ and $\eta=\eta_{\mathrm{eff}}$.
If the replacements form a declared fixed-axis family, \cref{cor:finite-strength-visibility} also gives that family's complete zero set and maximum.
An arbitrary multiparameter path need not retain one effective axis, so each fixed replacement on such a path must instead be reduced separately.
\end{corollary}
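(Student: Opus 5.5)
The proof reduces to the standard axis--angle decomposition of $\mathrm{SU}(2)$, followed by the observation that the global phase drops out of every conjugation.

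\emph{Step 1: remove the phase.} Since $W_m$ is unitary, $\det W_m=e^{2i\chi_0}$ for some real $\chi_0$, and $V=e^{-i\chi_0}W_m$ lies in $\mathrm{SU}(2)$. If $W_m$ is a scalar $e^{i\chi}I$, then $W_m\tau W_m^\dagger=\tau$ for every $\tau$. The channel is then unchanged, which settles the first case with $\eta_{\mathrm{eff}}=0$.

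\emph{Step 2: write $V$ in axis--angle form.} Every element of $\mathrm{SU}(2)$ can be written $V=c_0I-i(c_xX+c_yY+c_zZ)$ with real coefficients satisfying $c_0^2+\|\bm c\|_2^2=1$. To see this, note that $V^\dagger=V^{-1}=\operatorname{adj}V$ forces $V=\begin{pmatrix}\alpha&-\bar\beta\\\beta&\bar\alpha\end{pmatrix}$ with $|\alpha|^2+|\beta|^2=1$, and then read off the Pauli coefficients. In the non-scalar case $\bm c\neq0$, so set $\bm n=\bm c/\|\bm c\|_2$ and choose $\eta\in(0,2\pi)$ with $\cos(\eta/2)=c_0$ and $\sin(\eta/2)=\|\bm c\|_2>0$. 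Because $(\bm n\cdot\bm\sigma)^2=I$ by the Pauli anticommutation relations, the series for the exponential collapses to $e^{-i\eta A/2}=\cos(\eta/2)I-i\sin(\eta/2)A$ with $A=\bm n\cdot\bm\sigma$, and this equals $V$.

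\emph{Step 3: normalize the angle.} If $\eta\in(\pi,2\pi)$, use $e^{-i\eta A/2}=-e^{-i(2\pi-\eta)A/2}$. Replace $\eta$ by $2\pi-\eta\in(0,\pi)$ and absorb the sign into the phase, taking $\chi=\chi_0+\pi$. This changes neither $A$ nor its involutive property, and it yields $\eta_{\mathrm{eff}}\in(0,\pi]$, which proves \cref{eq:fixed-one-qubit-axis-angle}. Alternatively one may flip $\bm n$; either choice is valid, and the statement only asserts existence.

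\emph{Step 4: transfer the earlier results.} Since $W_m^\dagger O_jW_m=e^{i\eta A/2}O_je^{-i\eta A/2}$ and $W_m\tau_jW_m^\dagger$ both lose the phase $e^{i\chi}$, the mutant expectation and all derived quantities coincide with those of the pure rotation $e^{-i\eta_{\mathrm{eff}}A_{\mathrm{eff}}/2}$. Here $A_{\mathrm{eff}}$ is a Hermitian involution, so \cref{thm:core-exact-pauli-rotation} and the envelope \cref{eq:exact-rotation-gap-envelope} apply verbatim. For a declared family with a fixed axis, \cref{cor:finite-strength-visibility} applies directly.

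The remark about multiparameter paths needs no proof beyond an example. Take the path $R_z(t)R_x(t)$: the axis $\bm n$ computed in Step 2 varies with $t$. So there is no single pair of coefficients $(a_m,b_m)$ covering the whole path, and each point must be reduced separately.

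The only step requiring care is Step 3, the sign and angle bookkeeping at the boundary case $\eta=\pi$ and the choice of branch. Everything else is routine.
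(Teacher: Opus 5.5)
Your route follows the paper's outline. You normalize $W_m$ into $SU(2)$, read off the Pauli expansion $c_0I-i\bm c\cdot\bm\sigma$, take $\bm n=\bm c/\|\bm c\|_2$, collapse the exponential using $A^2=I$, and let the phase cancel from conjugation. The gap is in Step~3, the step you flagged yourself.

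The identity $e^{-i\eta A/2}=-e^{-i(2\pi-\eta)A/2}$ is false. Since $e^{-i\pi A}=-I$, the correct relation is
$e^{-i\eta A/2}=e^{-i\pi A}e^{i(2\pi-\eta)A/2}=-e^{-i(2\pi-\eta)(-A)/2}$.
So reducing the angle to $2\pi-\eta$ forces the axis to flip, $\bm n\mapsto-\bm n$, together with $\chi=\chi_0+\pi$. For example, take $A=Z$ and $\eta=3\pi/2$. Then $e^{-3i\pi Z/4}=\operatorname{diag}(e^{-3i\pi/4},e^{3i\pi/4})$, while your right-hand side $-e^{-i\pi Z/4}=\operatorname{diag}(e^{3i\pi/4},e^{-3i\pi/4})$ is its inverse, not a phase multiple of it.

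Flipping $\bm n$ is also not an \emph{alternative} to absorbing the sign. Because $e^{-i(2\pi-\eta)(-A)/2}=-e^{-i\eta A/2}$, the flip and the phase shift are both needed.

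The error matters for how the corollary is used. Your decomposition represents $W_m^\dagger$ up to phase. In terms of the original $\eta\in(\pi,2\pi)$ and the coefficients $a_m,b_m$ computed with axis $A$, substituting it into \cref{eq:core-exact-pauli-gap} gives $|(1-\cos\eta)a_m-\sin\eta\,b_m|$ instead of $|(1-\cos\eta)a_m+\sin\eta\,b_m|$. That is the wrong gap whenever both coefficients are nonzero.

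The paper avoids this bookkeeping by fixing the sign before reading off the axis. It chooses the square root of $\det W_m$ so that the $SU(2)$ representative has nonnegative trace, $c_0\geq0$. Because $V\mapsto-V$ negates both $c_0$ and $\bm c$, this one choice orients $\bm n$ correctly, and $\eta_{\mathrm{eff}}=2\operatorname{atan2}(\|\bm c\|_2,c_0)$ lies in $(0,\pi]$ automatically. The only remaining ambiguity is $\bm c\mapsto-\bm c$ at $c_0=0$, i.e.\ $\eta_{\mathrm{eff}}=\pi$. There the two choices differ only by a global phase, and a sign convention fixes them for canonicity.

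With Step~3 repaired either way, the rest of your argument goes through. This includes the phase cancellation in Step~4 and the varying-axis example for multiparameter paths.
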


The constructive axis--angle reduction is given in \cref{sec:exact-structured-mutations}.

Clifford replacements allow a different exact construction: update each Pauli term of the propagated observable separately.
A Clifford operation maps each Pauli string to another Pauli string, possibly with a sign change, under conjugation.
This property lets the tool transform a list of Pauli terms without creating a larger list at that step.
If $O_j=\sum_P c_PP$ and the relative mutation $W_m$ is Clifford, the Clifford normalizer property \cite[Sec.~3 and Table~1]{gottesman1999heisenberg} implies that for each Pauli string there are a Pauli string $P'_P$ and a sign $s_P\in\{-1,1\}$ such that
$$
W_m^\dagger P W_m
=
s_PP'_P.
$$
Therefore,
$$
O_j-W_m^\dagger O_jW_m
=
\sum_P c_P(P-s_PP'_P)
$$
can be constructed exactly term by term without a perturbative expansion.
If $W_m$ is a Pauli up to global phase, this expression reduces to twice the part of $O_j$ that anticommutes with that Pauli.
Evaluating the realized gap from \cref{eq:core-exact-pauli-gap} or the Clifford replacement identity still requires the corresponding expectations in $\tau_j$.
Exact formulas are also available when the mutation generator can be decomposed into its eigenvalues and eigenspace projectors, including every Hermitian generator with two distinct eigenvalues.
Computing those projectors and the required state expectations may still be expensive.
The Clifford derivation and spectral formula are given in \cref{sec:exact-structured-mutations}.

\subsubsection{Bounded Approximation for Other Small Mutations}
\label{sec:core-perturbative-fallback}

Small non-Pauli gate replacements ordinarily have a tractable local spectral decomposition and should use the exact branch in \cref{sec:core-result-exact-structured}.
The fallback below extends the analysis to more general subcircuit changes when exact evaluation is unavailable but the required commutators, state expectations, and remainder bound remain tractable.
Many-qubit generators with more than two eigenvalues are candidate cases; no scalable implementation or measured benefit for this fallback is established here.

The generator $H_m$ is the Hermitian matrix specifying the direction of the gate change, while the real parameter $\eta$ controls its strength.
The first-order approximation keeps the term proportional to $|\eta|$ in the gap and bounds the error from the remaining terms.
For a left-multiplied perturbation,
$$
\widetilde G_j
=
e^{-i\eta H_m}G_j,
$$
define the sensitivity of the selected input and observable by
\begin{equation}
\label{eq:core-state-sensitivity}
S_m^{\mathrm{state}}(\rho,O)
=
\left|
i\Tr\!\left([H_m,O_j]\tau_j\right)
\right|
\end{equation}
and the operator-level proxy, which does not use the selected input, by
\begin{equation}
\label{eq:core-operator-sensitivity}
S_m^{\mathrm{op}}(O)
=
\|[H_m,O_j]\|_\infty.
\end{equation}
The state-dependent quantity $S_m^{\mathrm{state}}$ is the actual first-order coefficient for the selected test; $S_m^{\mathrm{op}}$ bounds that coefficient over all states.
At the declared strength, the finite-change gap $\Delta_m$ may also include higher-order contributions.
The expectation derivative underlying the fallback is the standard parameterized-circuit derivative \cite[Sec.~III, especially Eqs.~(3), (5), and (6)]{schuld2019evaluating}.
The quantitative remainder bound is
\begin{equation}
\label{eq:core-perturbative-bound}
\left|
\Delta_m(\rho,O)
-
|\eta|S_m^{\mathrm{state}}(\rho,O)
\right|
\leq
\frac{\eta^2}{2}
\|[H_m,[H_m,O_j]]\|_\infty.
\end{equation}
The Taylor derivation and the error bound, valid at finite strength, are given in \cref{app:local-perturbation-derivation}.
Let $s_m=|\eta|S_m^{\mathrm{state}}$ be the first-order gap estimate and $B_m=\eta^2\|[H_m,[H_m,O_j]]\|_\infty/2$ its error bound.
The exact gap must then lie in
\begin{equation}
\label{eq:perturbative-gap-interval}
\Delta_m\in I_m^{\mathrm{det}}=[\max(0,s_m-B_m),\;\min(R_O,s_m+B_m)].
\end{equation}
The clipping uses the universal bound $0\leq\Delta_m\leq R_O$.
If the upper endpoint is strictly below $\gamma$, the gap is certified below threshold.
If the lower endpoint is at least $\gamma$, the analysis establishes population-level detection; otherwise the decision remains unresolved.
This interval gives a deterministic bound on approximation error, with its guarantee independent of statistical confidence levels.
Its guarantee assumes exact values of the quantities used to compute it or validated bounds on their numerical errors.
Ordinary floating-point evaluation must retain its numerical provenance.

The interval can settle a threshold decision even when the first-order estimate alone is unreliable.
Use $S_m^{\mathrm{op}}$ only to bound the worst-case first-order response, and use $|\eta|S_m^{\mathrm{state}}$ for ranking only when the first-order term is nonzero and dominates the remainder.
If $S_m^{\mathrm{state}}=0$, the result supplies only an $\mathcal{O}(\eta^2)$ upper bound, and a second-order calculation is needed to predict a nonzero leading effect.
If $[H_m,O_j]=0$, the effect is exactly zero for every $\eta$.

\subsubsection{Evidence Supporting the Decisions}
\label{sec:claim-hierarchy}

The same execution decision can rest on a proof, a bounded approximation, or a statistical estimate, each supporting a different strength of claim.
The framework therefore records the execution-decision status separately from the evidence type.
For fixed-suite planning, the statuses in \cref{alg:efficient-mutant-selection} distinguish semantic-equivalence, certified-zero, and below-threshold skips (plus an optional discrimination-policy skip) from predicted-above-threshold and threshold-unresolved execution decisions.
The six evidence categories below record what was actually obtained.

The \emph{exact} category means mathematically exact under the stated assumptions; computing such a result may still be expensive.
The local identities and commutation conditions in \cref{thm:core-exact-local,prop:state-side-dual}, the rotation ceiling in \cref{cor:test-independent-rotation-ceiling}, and the structural rule in \cref{eq:core-light-cone-discard} require neither a small-mutation approximation nor pilot measurements.
In particular, the commutation and structural rules can certify that a mutant is invisible under their stated scope.
An exact record must also state how the result was obtained: for example, equality after transpilation with pinned settings, a strength bound, a proof of zero gap using circuit structure or exact identities, a structured formula, or simulation.
The distinction between an exact formula and its numerical evaluation still applies; an ordinary floating-point result is not by itself a validated certificate.

The structured analysis in \cref{sec:core-result-exact-structured} exactly describes how Pauli rotations, fixed one-qubit replacements, Clifford replacements, and other spectrally tractable mutations change the propagated observable.
When the corresponding state expectations in $\tau_j$ are unavailable, this result receives the \emph{exact observable-side update only} evidence type and does not by itself determine the realized gap or an above- or below-threshold decision.

The \emph{state-dependent perturbative} category uses the first-order coefficient $S_m^{\mathrm{state}}$ for the selected test when exact evaluation is computationally unavailable, as in \cref{sec:core-perturbative-fallback}.
The remainder bound in \cref{eq:core-perturbative-bound} supplies the deterministic interval in \cref{eq:perturbative-gap-interval}, which can decide a threshold even when a first-order ranking is unreliable.

When the propagated state cannot be computed classically, the pilot measurements in \cref{sec:scalable} provide \emph{pilot-estimated} gaps with simultaneous error bars that include the declared propagation and sampling errors.
Finite pilot data can show that an effect is compatible with a declared visibility tolerance, but they cannot prove that the exact effect is zero.

The \emph{operator proxy} evidence type records $S_m^{\mathrm{op}}$ only as a state-independent worst-case bound.
A large value permits a large first-order response for some state; it establishes neither that response for the selected test nor a large finite-strength gap.
The \emph{unavailable} evidence type records that none of the supported gap, interval, perturbative, or proxy quantities was obtained, leaving the size of the mutation effect unresolved.

\section{Circuit-Aware Execution Planning}
\label{sec:core-workflow}

The clearest opportunities to avoid executions are reusable equivalence proofs, the constant-cost rotation ceiling, structural exclusion, commutation when the propagated observable is cheap, and the exact stabilizer family.
These checks can settle outcomes without constructing a general mutation-site state.
The rule-by-rule scalability conditions are collected in \cref{sec:rule-scalability}.
Their value depends on the oracle and on whether analysis costs less than the execution it replaces; a full-distribution oracle, for example, loses the structural shortcut but retains the TV rotation ceiling.
General dense state evaluation can cost more than simulating the mutant directly, and pilot estimation offers a further, as yet unvalidated, opportunity when shared measurements can amortize its cost.

An execution planner records a skip or execute decision, its evidence, and any unresolved outcome for every declared mutant--test pair.
\Cref{fig:planning-workflow} shows the flow, and \cref{tab:planning-cost-order} gives the reference costs.
The scalar pseudocode in \cref{alg:efficient-mutant-selection} uses the differential expectation rule; \cref{tab:oracle-planning} specifies which certificates transfer to other oracles.
For simultaneous Euler-angle changes, each fixed candidate is reduced separately unless the entire family is known to retain one effective axis.
Euler angles parameterize a one-qubit gate by three ordered rotations about specified axes, up to global phase.
The filter order is a reproducible baseline, not a cost-optimal schedule: transpiler checks are reusable across tests but can cost more than the scalar and structural checks.
Any reordered planner must record its order and preserve the distinction between semantic exclusions and test-relative skips.

For small circuits, the relevant classical baseline is direct statevector evaluation of the same mutant outcomes, including any reusable original-circuit calculation.
A fair comparison charges propagation, equivalence checking, state evaluation, and memory to the planner; it does not compare that work only with the number of shots avoided.
When exact analysis settles an outcome, further execution is needed only if observed evidence is required.
Otherwise, the execute decision schedules known original and mutant circuits under the declared sampling rule, whether the samples come from an ideal simulator or a suitably specified execution model.
Hardware costs and noise require the additional model described in \cref{sec:core-validation}.

\begin{figure}[!htbp]
\centering
\begin{tikzpicture}[font=\small,box/.style={draw,rounded corners,align=center,text width=0.57\linewidth,inner sep=5pt},record/.style={draw,align=center,text width=0.23\linewidth,inner sep=5pt}]
\node[box] (inputs) at (0,0) {Declare semantics, mutant population, oracle, thresholds, and cost model};
\node[box] (equivalence) at (0,-1.25) {Check semantic equivalence};
\node[box] (cheap) at (0,-2.5) {Apply oracle-specific angle, structural, and commutation certificates};
\node[box] (effect) at (0,-3.9) {Evaluate the remaining effects or obtain valid bounds};
\node[box] (execute) at (0,-5.3) {Execute when observed evidence is needed; retain unresolved outcomes if evidence is insufficient};
\node[record] (excluded) at (0.47\linewidth,-1.25) {Record certified semantic exclusions};
\node[record,minimum height=2.8cm] (account) at (0.47\linewidth,-4.3) {Record detected, survivor, and unresolved weights; report score bounds and evidence provenance};
\draw[-latex] (inputs) -- (equivalence);
\draw[-latex] (equivalence) -- (cheap);
\draw[-latex] (cheap) -- (effect);
\draw[-latex] (effect) -- (execute);
\draw[-latex] (equivalence) -- (excluded);
\draw[-latex] (cheap.east) -| (account.north);
\draw[-latex] (effect.east) -- (account.west |- effect.east);
\draw[-latex] (execute.east) -- (account.west |- execute.east);
\end{tikzpicture}
\caption{Execution planning after the contract and population are fixed.
Pairs settled by a certificate can bypass later analysis or execution and enter accounting with their evidence.
Only certified semantic equivalence supports exclusion from the corresponding non-equivalent population; test-relative survivors remain in its denominator.}
\Description{A flowchart from declared inputs through equivalence, cheap certificates, effect evaluation, and execution, with separate records for semantic exclusions and retained detected, survivor, or unresolved outcomes.}
\label{fig:planning-workflow}
\end{figure}

\begin{table}[!htb]
\caption{Reference fixed-test execution-planning stages and their costs.
The decisions shown use the differential expectation rule; \cref{tab:oracle-planning} gives the oracle-specific extensions and \cref{sec:planner-growth} details representation growth.}
\label{tab:planning-cost-order}
\centering
\small
\renewcommand{\arraystretch}{1.12}
\begin{tabularx}{\linewidth}{@{}>{\raggedright\arraybackslash}p{0.20\linewidth}>{\raggedright\arraybackslash}X>{\raggedright\arraybackslash}p{0.34\linewidth}@{}}
\toprule
\textbf{Stage} & \textbf{Decision} & \textbf{Required work} \\
\midrule
Compiler comparison and exact proof
& Remove certified semantic matches before test-conditional planning.
& Compare circuits compiled under identical pinned settings; verify matches on the changed support after exact context cancellation, or by bounded full-register exact algebra. Results are reusable across tests. \\
Rotation threshold gate
& If $R_O|\sin(\eta/2)|<\gamma$, label the pair below threshold.
& $\mathcal{O}(1)$ with known strength; a fixed one-qubit replacement also permits local axis--angle reduction. \\
Structural light cone
& Label a site outside the cone as zero-effect for this test.
& Linear in the size of the circuit dependency graph; no state data. \\
Propagation and commutation
& Form $O_j$ and test $[W_m,O_j]=0$.
& A Clifford suffix maps each input Pauli term to one term; retaining at most $K_{\mathrm{ret}}$ terms through $L_{>j}$ suffix gates requires $\mathcal{O}(K_{\mathrm{ret}}L_{>j})$ local term updates. \\
Exact state evaluation
& Evaluate $(a_m,b_m)$ or the exact replacement expectation.
& Numerical statevectors store $2^n$ amplitudes; density matrices store $4^n$ entries. Pauli expectations do not require a dense observable. \\
Prefix pilot
& Estimate the Pauli expectations needed by the remaining candidates.
& Repeated prefix executions; shared-shadow sampling cost grows exponentially in Pauli weight (number of nonidentity factors), as detailed in \cref{sec:scalable}. \\
Mutant execution
& Run candidates predicted above the threshold or unresolved at the declared confidence.
& Prespecified shots from each known circuit, plus test-specific overhead; this allocation differs from the unknown-source copy budget. \\
\bottomrule
\end{tabularx}
\end{table}

\begin{table}[!htbp]
\caption{Certificates available under different oracle rules.
All rows retain the declared semantics, numerical-certification requirements, and evidence provenance.}
\label{tab:oracle-planning}
\centering\small
\begin{tabularx}{\linewidth}{@{}>{\raggedright\arraybackslash}p{0.20\linewidth}XX@{}}
\toprule
\textbf{Oracle} & \textbf{Zero-effect checks} & \textbf{Nonzero-effect decision} \\
\midrule
Differential expectation
& Light cone and commutation for $O_j$; state invariance or an exact zero expectation.
& Compare the gap bounds with $\gamma_r$; the angle ceiling is $R_O|\sin(\eta/2)|$. \\
Differential TV distance
& State invariance or collective equality of outcome probabilities; structural exclusion only for restricted measured outputs.
& Compare TV bounds with $\gamma_{\mathrm{TV}}$; the angle ceiling is $|\sin(\eta/2)|$. \\
Native expectation interval
& The expectation checks establish unchanged acceptance when the original passes.
& Use the acceptance margin for sufficient survival, or test the mutant-mean enclosure against $\mathcal A_r$ (\cref{prop:native-assertion-certificates}). \\
Sampled histogram or chi-square
& Distribution equality preserves the oracle's outcome law under the same sampling procedure, not each realized outcome.
& Retain the statistic, shot count, significance, and acceptance rule; the TV ceiling alone gives no acceptance certificate. \\
\bottomrule
\end{tabularx}
\end{table}

A \emph{prefix pilot} repeatedly executes the original circuit only through the mutation site and measures quantities needed to estimate several candidate effects.
The local-Pauli version of \emph{classical shadows} measures each qubit along an independently and uniformly chosen $X$, $Y$, or $Z$ axis on each pilot shot.
It stores the chosen axes and outcomes as classical data and reuses them to estimate many Pauli expectations.
Three-point interpolation instead reconstructs a rotation response from three circuit configurations, while targeted measurements directly estimate the needed Pauli terms.
Shared shadows can amortize their locality-dependent cost over many target strings, but strength reuse also benefits three-point interpolation and targeted measurements.
\Cref{sec:pilot-cost,sec:measurement-baselines} compare these alternatives and delimit the simpler paired-execution break-even calculation.

An executable Python reference implementation, tests, and a worked notebook are available in the implementation repository \cite{quantummutantplanner}.
The package implements the planning procedures in \cref{alg:efficient-mutant-selection,alg:candidate-effect-evaluation} and a separate fixed-shot classifier.
The package's scalar planner and fixed-shot classifier use differential expectation rules; the TV and native-assertion certificates above are mathematical extensions that they do not implement.
The implementation repository's README documents installation and reproduction of the worked examples.
\Cref{sec:core-validation} details the implementation's coverage, limitations, and remaining validation needs.

\FloatBarrier

\subsection{Preserving Classifications and Mutation Scores}
\label{sec:planner-guarantee}
Skipping executions must preserve what the report says about the declared mutant population.
A certified below-threshold skip cannot hide a population-level detection for that pair, whereas a budget-based skip can leave its outcome unresolved.
The following proposition shows how valid pair-level evidence supports correct classifications and mutation-score bounds when execution is incomplete.

\begin{proposition}[Planning soundness and mutation-score bounds]
\label[proposition]{thm:planner-soundness}
Fix a differential test collection $\mathcal T$ as in \cref{sec:test-oracle}, with prespecified positive thresholds and a nonempty mutant population $\mathcal M^\star$ chosen independently of the evaluated suite.
Assume sound semantic exclusions leaving only non-equivalent mutants, valid deterministic gap bounds and zero certificates, and simultaneous statistical coverage with pilot and observed-test failure budgets $\alpha_{\mathrm{pilot}}$ and $\alpha_{\mathrm{test}}$, preserved under data-dependent execution.
If representatives are used, require certified identical per-test population detection indicators within each class and retain its original multiplicity, as detailed in \cref{sec:planner-soundness-proof}.
For each mutant or representative class, declare detection when some test has a certified lower gap bound at least its threshold, survival when every test has a certified upper gap bound strictly below its threshold, and otherwise leave it unresolved.
Budget-floor skips and unexecuted pairs supply no classification by themselves.
Then, with probability at least $1-\alpha_{\mathrm{pilot}}-\alpha_{\mathrm{test}}$, all resolved classifications are correct, no certified below-threshold skip omits a population-level detection for its pair, and
\begin{equation}
\label{eq:planner-score-bounds}
\frac{W_{\mathrm D}}{W}
\leq
\mathrm{MS}(\mathcal T;\mathcal M^\star)
\leq
\frac{W_{\mathrm D}+W_{\mathrm U}}{W},
\qquad
W=W_{\mathrm D}+W_{\mathrm S}+W_{\mathrm U}=|\mathcal M^\star|,
\end{equation}
where $W_{\mathrm D},W_{\mathrm S},W_{\mathrm U}$ are the sums of class multiplicities in the detected, survivor, and unresolved categories.
For deterministic evidence alone the corresponding claims are unconditional.
\end{proposition}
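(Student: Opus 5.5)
The argument is a union bound followed by deterministic bookkeeping on a single good event. I would first define the event $E_{\mathrm{pilot}}$ on which every pilot-estimated gap interval used by the planner contains its true population gap. I would define $E_{\mathrm{test}}$ in the same way for the observed-test intervals. The simultaneous-coverage assumption, preserved under data-dependent execution, gives $\Pr(E_{\mathrm{pilot}}^c)\le\alpha_{\mathrm{pilot}}$ and $\Pr(E_{\mathrm{test}}^c)\le\alpha_{\mathrm{test}}$. A union bound then gives $\Pr(E)\ge1-\alpha_{\mathrm{pilot}}-\alpha_{\mathrm{test}}$ for $E=E_{\mathrm{pilot}}\cap E_{\mathrm{test}}$.

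The key point to make explicit is the data-dependent execution. The planner chooses which pairs to execute, and hence which intervals exist, after seeing pilot data. Coverage must therefore be simultaneous over every pair the planner could possibly query, or be stated as an anytime-valid guarantee, and not only over the pairs it realized. I would phrase the assumption so that $E_{\mathrm{test}}$ is an event over this full declared family. Then conditioning on the realized execution set cannot break coverage. This is the step I expect to be the main obstacle, because it is where a sloppy allocation of $\alpha$ would fail. For deterministic evidence alone, namely exact gaps, zero certificates (light cone, commutation, \cref{prop:rotation-threshold-gate}) and the intervals of \cref{eq:perturbative-gap-interval}, the bounds hold surely, so $E$ can be replaced by the sure event.

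On $E$, every recorded lower bound $L_{m,r}$ and upper bound $U_{m,r}$ satisfies $L_{m,r}\le\Delta_m(\rho_r,O_r)\le U_{m,r}$. I would then check three things.
\begin{itemize}
\item \emph{Detection is correct.} If some test has $L_{m,r}\ge\gamma_r$, then $\Delta_m\ge\gamma_r$, so $r\in K_{\mathcal T}(m)$ and the suite detects $m$ by \cref{eq:suite-detection-set}.
\item \emph{Survival is correct.} If $U_{m,r}<\gamma_r$ for every $r$, then $K_{\mathcal T}(m)=\varnothing$.
\item \emph{No skip hides a detection.} A certified below-threshold skip has $\Delta_m\le U_{m,r}<\gamma_r$, so it cannot omit a population-level detection for its pair.
\end{itemize}
Budget-floor skips and unexecuted pairs contribute no bound, so they can only leave a mutant unresolved; they never produce a wrong resolution.

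Representatives need separate handling. By hypothesis, members of a class share identical per-test detection indicators. The class classification therefore applies to every member, and counting it with its original multiplicity reproduces the per-mutant sum in \cref{eq:suite-mutation-score}.

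The score bounds then follow by counting. Sound semantic exclusions mean the remaining classes partition exactly the declared population $\mathcal M^\star$, so $W_{\mathrm D}+W_{\mathrm S}+W_{\mathrm U}=|\mathcal M^\star|$. On $E$, every mutant classified as detected is truly detected and every survivor truly survives. Each unresolved mutant contributes either $0$ or $1$ to the numerator of \cref{eq:suite-mutation-score}. Hence the number of detected mutants lies between $W_{\mathrm D}$ and $W_{\mathrm D}+W_{\mathrm U}$, and dividing by $W$ gives \cref{eq:planner-score-bounds}.
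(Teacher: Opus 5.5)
Your proposal is correct and follows essentially the same route as the paper's proof. Both arguments apply a union bound over the pilot and observed-test coverage events, then use the threshold implications on that event, expand certified classes by their multiplicities, and count unresolved weight as contributing between $0$ and $W_{\mathrm U}$. The one difference is how data-dependent execution is handled: you require coverage over the full prespecified family of pairs, whereas the paper uses conditional coverage of fresh observed samples given the planning data. Both are valid ways of meeting the assumption that data-dependent choices preserve the simultaneous guarantees.
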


The complete assumptions and proof are given in \cref{sec:planner-soundness-proof}.
For example, 60 certified detected mutants, 25 certified survivors, and 15 unresolved mutants in a population of 100 support a mutation score between $60/100$ and $75/100$.
Dropping the unresolved mutants would change the denominator and answer a different question.
The score interval is a point exactly when no unresolved population weight remains.

The proposition establishes that combining valid evidence preserves classifications and population accounting.
Pilot-supported or analytically resolved population detections must retain that provenance and must not be relabelled as observed finite-shot kills.

The same accounting argument applies to other fixed oracle rules once their own outcomes are certified.
For a TV-threshold test, use valid TV intervals in place of expectation-gap intervals.
For a native assertion, first establish that the original passes, then use \cref{prop:native-assertion-certificates} or another valid certificate for that acceptance predicate.
Declare a suite kill when some native test certifies a kill, survival only when every native test certifies survival, and otherwise leave the mutant unresolved.
Under the same semantic, multiplicity, and simultaneous-coverage assumptions, \cref{eq:planner-score-bounds} then bounds the score defined by those oracle-specific detection indicators.
The proof uses only the correctness of the resolved indicators: detected weight is certainly killed, survivor weight certainly survives, and unresolved weight can contribute between zero and its full value.
This extension changes the per-test classifiers, not the accounting; in particular, setting $\gamma_r=h_r$ in the scalar planner would not implement native detection correctly.

\subsection{GHZ Example: Decisions and Population Accounting}
\label{sec:ghz-planning-summary}

Prepare the three-qubit Greenberger--Horne--Zeilinger (GHZ) state $(\ket{000}+\ket{111})/\sqrt2$ from $\ket{000}$ using $R_y^{(0)}(\pi/2)$, $\mathrm{CNOT}_{0,1}$, and $\mathrm{CNOT}_{0,2}$.
This entangled state is an equal superposition of all-zero and all-one bit strings, so computational-basis measurements return three perfectly correlated bits.
\Cref{tab:ghz-planning-summary} compares two tests of the same eight author-defined mutants: first the expectation of $Z_1$, then the entire computational-basis output distribution.
The circuit, input, and mutant population stay fixed while the test changes.
All eight mutants change the output state on this input, so all eight remain non-equivalent under the declared state-transforming semantics.
The detailed mutation strengths and certificates are given in \cref{sec:ghz-planning-walkthrough}.

\subsubsection{Single-Qubit Expectation Test}
\label{sec:ghz-expectation}

The first test uses $O=Z_1$ and detects a mutant when its expectation gap reaches $\gamma=0.1$.
This expectation is the probability that qubit $q_1$ is zero minus the probability that it is one; the original GHZ state gives equal probabilities and hence mean zero.
A mutant that preserves this balance survives the test even if it changes the other bits or their correlations with $q_1$.
The $Z_1$ column of \cref{tab:ghz-planning-summary} records why the planner skips or executes each mutant for this test.

The exact gaps are zero for $m_1$--$m_5$, $|\sin(0.01)|<0.1$ for $m_6$, and one for $m_7$ and $m_8$.
Thus $W_{\mathrm D}=2$, $W_{\mathrm S}=6$, and $W_{\mathrm U}=0$, giving the population mutation score $2/8$ before any observed mutant execution.
The planner schedules two candidates; removing the six skipped survivors would instead give the misleading score $2/2$.

\subsubsection{Full Output-Distribution Test}
\label{sec:ghz-distribution}
Now replace the single-qubit expectation test by a comparison of the joint probabilities of all three output bits, keeping the same input and mutants.
Both tests use computational-basis measurements, but the new test retains information about which combinations of bits occur that the $Z_1$ mean discards.
Order bits as $q_0q_1q_2$ and write $p$ and $p_m$ for the original and mutant distributions.
The original assigns probability $1/2$ to each of $000$ and $111$, and zero to every other bit string.
Measure the distribution change by total-variation (TV) distance and declare detection under the separate population rule
\begin{equation}
\label{eq:ghz-tvd}
D_{\mathrm{TV}}(p,p_m)=\frac12\sum_x|p(x)-p_m(x)|\geq\gamma_{\mathrm{TV}},
\qquad \gamma_{\mathrm{TV}}=0.1.
\end{equation}
Because both distributions sum to one, probability lost from some outcomes is gained by others, so the sum of absolute differences counts the redistributed probability mass twice.
The factor $1/2$ corrects this double counting and makes the distance range from zero to one.
This illustrative differential distribution contract is separate from any supplied benchmark's native assertion, which retains its own acceptance rule.
QuMu's normalized histogram discrepancy is the empirical analogue of this distance \cite[Sec.~4.2]{polo2025mutation_based}.
The two thresholds apply to different quantities: $\gamma=0.1$ for $Z_1$ corresponds to a probability change of $0.05$ for either value of $q_1$, whereas $\gamma_{\mathrm{TV}}=0.1$ applies to the joint distribution.
\Cref{tab:ghz-planning-summary} places the two tests side by side; write $s=\sin(0.01)$.

\begin{table}[!htbp]
\caption{The same eight GHZ mutants under the $Z_1$ expectation test ($\gamma=0.1$) and the full-distribution test ($\gamma_{\mathrm{TV}}=0.1$).
Every $Z_1$ skip is a survivor; every execute entry predicts detection.
For distributions, $x,y:a$ assigns probability $a$ to each listed bit string in $q_0q_1q_2$ order; all unlisted probabilities are zero.
All detections are established analytically; the example includes no finite-shot test outcomes.}
\label{tab:ghz-planning-summary}
\centering
\small
\setlength{\tabcolsep}{4pt}
\renewcommand{\arraystretch}{1.12}
\begin{tabularx}{\linewidth}{@{}l>{\raggedright\arraybackslash}p{0.20\linewidth}>{\raggedright\arraybackslash}X>{\raggedright\arraybackslash}p{0.26\linewidth}>{\raggedright\arraybackslash}p{0.14\linewidth}@{}}
\toprule
 & & \textbf{$Z_1$ test (\cref{sec:ghz-expectation})} & \multicolumn{2}{c}{\textbf{Full-distribution test (\cref{sec:ghz-distribution})}} \\
\cmidrule(lr){3-3}\cmidrule(l){4-5}
\textbf{ID} & \textbf{Mutation} & \textbf{Planning decision} & \textbf{Nonzero probabilities} & \textbf{$D_{\mathrm{TV}}$; outcome} \\
\midrule
$m_1$ & Delete $\mathrm{CNOT}_{0,2}$ & Skip: outside light cone & $000,110:1/2$ & $1/2$; detected \\
$m_2$ & Append $X_2$ & Skip: outside light cone & $001,110:1/2$ & $1$; detected \\
$m_3$ & Append $Z_1$ & Skip: observable commutation & $000,111:1/2$ & $0$; survivor \\
$m_4$ & Append $X_1$ & Skip: state-dependent zero & $010,101:1/2$ & $1$; detected \\
$m_5$ & Reverse the $R_y$ angle & Skip: finite-strength zero & $000,111:1/2$ & $0$; survivor \\
$m_6$ & Offset the $R_y$ angle by $0.01$ & Skip: below threshold & $000:(1-s)/2$\newline $111:(1+s)/2$ & $s/2$; survivor \\
$m_7$ & Offset the $R_y$ angle by $\pi/2$ & Execute: predicted detection & $111:1$ & $1/2$; detected \\
$m_8$ & Delete $\mathrm{CNOT}_{0,1}$ & Execute: predicted detection & $000,101:1/2$ & $1/2$; detected \\
\bottomrule
\end{tabularx}
\end{table}

The additional detections are $m_1$, $m_2$, and $m_4$: each preserves the equal probabilities of zero and one on $q_1$ but moves probability to bit strings absent from the original output.
For example, appending $X_1$ in $m_4$ replaces $000$ and $111$ by $010$ and $101$, so the $Z_1$ mean remains zero while the full distributions have disjoint support and total-variation distance one.
The structural shortcut also changes: each full-register outcome projector acts on every qubit, so no mutation site can be excluded by the structural light-cone rule for these projectors.

Equal computational-basis probabilities can still conceal a changed quantum state.
For $m_3$, the diagonal final $Z_1$ commutes with every output projector and therefore certifies the entire distribution unchanged.
Its output is nevertheless $(\ket{000}-\ket{111})/\sqrt2$, so the state-side invariance certificate fails.
State invariance guarantees distribution equality, which can also hold for different states.
The sign reversal $m_5$ has the same probabilities through cancellation for the specified input.
The remaining survivor, $m_6$, changes the probabilities, but its distance $s/2$ stays below $\gamma_{\mathrm{TV}}$.
The TV rotation ceiling already settles this case without calculating those probabilities: $|\sin(0.01/2)|<0.1$.

Thus changing the test raises the population mutation score from $2/8$ to $5/8$ while keeping the state-transforming population fixed.
Changing the program's declared semantics would be a separate step: under an interface restricted to this fixed input and computational readout, $m_3$ and $m_5$ would be interface-equivalent and require a separately declared score denominator.
A sampled histogram or chi-square oracle must still use its own sample-size, significance, and acceptance rules; handling zero expected bins must avoid division by zero.
Neither the analytical outcomes here nor term-wise projector gaps constitute observed finite-shot rejections.

\FloatBarrier
\section{Practical Implications and Representative Examples}
\label{sec:core-examples}

\subsection{One-Qubit Pauli-Axis Rotation}
\label{sec:one-qubit-example}

A changed output state can leave a measurement unchanged, as the GHZ survivors illustrate.
For a single rotation, the exact gap shows how this depends on both the original angle and the mutation strength.
Consider the circuit $U=R_y(\theta)$ with input $\rho=\ket{0}\bra{0}$ and observable $O=Z$.
The original expectation is $\cos\theta$; replacing the gate by $R_y(\theta+\eta)$ changes it to $\cos(\theta+\eta)$.
Thus \cref{thm:core-exact-pauli-rotation} with $A=Y$ gives the exact gap
$$
\Delta_m
=
|\cos\theta-\cos(\theta+\eta)|
$$
and expanding \cref{eq:finite-strength-gap} for small $\eta$ gives the first-order expression
$$
\Delta_m
\approx
|\eta\sin\theta|.
$$
A zero first-order coefficient, as at $\theta=0$ or $\pi$, does not establish a zero exact gap.
The exact expression must still be evaluated at the declared mutation strength.

The outcome range is $R_Z=2$, so dividing the gap by two gives the normalized signal and its upper bound from \cref{cor:test-independent-rotation-ceiling}:
\begin{equation*}
\delta_m
=
\frac{\Delta_m}{2}
=
\left|\sin(\eta/2)\sin(\theta+\eta/2)\right|
\leq
|\sin(\eta/2)|.
\end{equation*}
Equality holds when $\theta+\eta/2\equiv\pi/2\pmod{\pi}$, so a suitable original angle attains the ceiling.
The same small-strength limit also constrains the separate task of identifying which circuit produced an unknown output.
Under the independent-execution assumptions of \cref{cor:test-independent-rotation-ceiling}, any fixed nonzero advantage over random guessing with equal priors requires $\Omega(1/\eta^2)$ executions as $\eta\to0$.

For a fixed original angle, the coefficients in \cref{eq:exact-pauli-two-coefficients} are $a_m=\cos\theta$ and $b_m=\sin\theta$.
The largest gap over all mutation strengths, from \cref{eq:finite-strength-maximum}, is therefore
$$
\max_\eta\Delta_m(\eta)
=
1+|\cos\theta|.
$$
The zero set in \cref{eq:finite-strength-zero-set} includes $\eta_0\equiv-2\theta\pmod{2\pi}$, which is distinct from the unchanged angle when $\sin\theta\neq0$.
This strength reverses the sign of the original angle: $R_y(\theta)\rightarrow R_y(-\theta)$.
The replacement preserves the $Z$ expectation on input $\ket{0}$ and is therefore test-equivalent for this pair, although the two circuit unitaries are generally not equivalent.
\Cref{fig:finite-strength-gap} shows this cancellation and the maximizing strength for the representative choice $\theta=\pi/4$.

\begin{figure}[!htb]
\centering
\begin{tikzpicture}
\begin{axis}[
    width=0.78\linewidth,
    height=5.3cm,
    domain=-pi:pi,
    samples=241,
    axis lines=left,
    xlabel={mutation strength $\eta$},
    ylabel={$\Delta_m(\eta)$},
    xmin=-pi,
    xmax=pi,
    ymin=0,
    ymax=2.05,
    xtick={-pi,-pi/2,0,pi/2,pi},
    xticklabels={$-\pi$,$-\pi/2$,$0$,$\pi/2$,$\pi$},
    ytick={0,0.5,1,1.5},
    grid=major,
    grid style={gray!20},
    tick label style={font=\small}
]
\addplot[blue!70!black,very thick] {abs(cos(deg(pi/4))-cos(deg(pi/4+x)))};
\addplot[only marks,mark=o,mark size=2.4pt,blue!70!black] coordinates {(-pi/2,0)};
\node[anchor=south east,font=\small] at (axis cs:-pi/2,0.03) {$\eta_0=-\pi/2$};
\addplot[only marks,mark=*,mark size=2.2pt,red!70!black] coordinates {(3*pi/4,{1+sqrt(2)/2})};
\node[anchor=south east,xshift=-2pt,yshift=2pt,font=\small] at (axis cs:3*pi/4,{1+sqrt(2)/2}) {$\eta_\star=3\pi/4$};
\end{axis}
\end{tikzpicture}
\caption{Exact finite-strength gap for $R_y(\pi/4)\rightarrow R_y(\pi/4+\eta)$ with input $\ket{0}$ and observable $Z$.
Besides the unchanged angle $\eta=0$, the sign-reversal strength $\eta_0=-\pi/2$ is exactly test-equivalent, while the gap reaches $1+1/\sqrt{2}$ at $\eta_\star=3\pi/4$.}
\label{fig:finite-strength-gap}
\Description{The observable gap as a function of mutation strength eta, from minus pi to pi.
The curve has zeros at eta equals minus pi over two and zero, and reaches one plus one over the square root of two at eta equals three pi over four.}
\end{figure}

The same mechanism applies to other Pauli-axis rotations covered by \cref{thm:core-exact-pauli-rotation}.
Under the same $(\ket{0},Z)$ test, $R_x(\theta)\rightarrow R_x(\theta+\eta)$ gives the identical gap $|\cos\theta-\cos(\theta+\eta)|$.
Changing the measurement to $X$ gives gap $2|\sin\theta|$ for the $R_y$ sign reversal; changing it to $Y$ gives the same gap for the $R_x$ sign reversal.

Changing the measurement cannot help if the mutation leaves the state itself unchanged.
For example, $R_z(\theta)\ket{0}=e^{-i\theta/2}\ket{0}$ changes only a global phase.
Since $[Z,\ket{0}\bra{0}]=0$, the state-side criterion in \cref{prop:state-side-dual} certifies that every $R_z$ angle mutation is invisible to every observable on this input.
Changing the input to $\ket{+}$ removes that obstruction: an $X$ measurement then gives $|\cos\theta-\cos(\theta+\eta)|$, and a $Y$ measurement gives gap $2|\sin\theta|$ for the sign reversal.

\subsection{Fixed Cross-Axis One-Qubit Replacement}

The rotation formula also applies when a replacement changes the gate's axis.
Consider a one-gate circuit with $G_j=R_x(\pi/2)$, replacement $\widetilde G_j=R_z(\pi/2)$, input $\rho=\ket{0}\bra{0}$, and observable $O=Z$.
To express the replacement as a change applied after the original gate, compute the relative unitary $W_m=\widetilde G_jG_j^\dagger$ from \cref{cor:fixed-one-qubit-replacement}:
\begin{equation*}
\begin{aligned}
W_m
&=
R_z(\pi/2)R_x(-\pi/2)
\\
&=
\frac{1}{2}\left[I+i(X+Y-Z)\right]
\\
&=
e^{-i(2\pi/3)A_{\mathrm{eff}}/2},
\end{aligned}
\end{equation*}
where
\begin{equation*}
A_{\mathrm{eff}}
=
\frac{-X-Y+Z}{\sqrt{3}},
\qquad
\eta_{\mathrm{eff}}
=
\frac{2\pi}{3},
\qquad
\chi
=
0.
\end{equation*}
The replacement is therefore one exact rotation about an axis combining $X$, $Y$, and $Z$.
Adding separate $X$- and $Z$-axis effects would not describe this relative operation.
To evaluate its gap, first compute the state after the original gate:
\begin{equation*}
\tau_j
=
R_x(\pi/2)\ket{0}\bra{0}R_x(-\pi/2)
=
\frac{I-Y}{2}
\end{equation*}
The part of the propagated observable that anticommutes with the effective axis is
\begin{equation*}
O_{j,-}
=
\frac{Z-A_{\mathrm{eff}}ZA_{\mathrm{eff}}}{2}
=
\frac{X+Y+2Z}{3}.
\end{equation*}
These expressions give $a_m=-1/3$ and $b_m=-1/\sqrt{3}$ in \cref{eq:exact-pauli-two-coefficients}.
Substituting the coefficients and effective angle into \cref{eq:finite-strength-gap} gives
\begin{equation*}
\Delta_m
=
\left|
\left(1-\cos\frac{2\pi}{3}\right)\left(-\frac{1}{3}\right)
+
\sin\frac{2\pi}{3}\left(-\frac{1}{\sqrt{3}}\right)
\right|
=
1.
\end{equation*}
The circuit outputs provide a direct check: the original $Z$ expectation is zero, while $R_z(\pi/2)\ket{0}=e^{-i\pi/4}\ket{0}$ gives mutant expectation one.
If a replacement instead differs from the original only by a global phase, then $\widetilde G_j=e^{i\chi}G_j$ and $W_m=e^{i\chi}I$.
In that branch of \cref{cor:fixed-one-qubit-replacement}, every test has zero gap and no effective axis is defined.

\subsection{Mutation-Guided Test Augmentation}
\label{sec:core-augmentation}

The rotation examples distinguish two ways to expose a surviving mutant: change the input or change the measured output quantity.
To assess the same program contract, the new test must stay within its declared input domain and permitted measurement interface.
A measurement outside that interface can show that the states differ, but does not by itself reveal a weakness in the native assertions.

For a fixed mutant strength $\eta_m$, the two coefficients in \cref{eq:exact-pauli-two-coefficients} allow candidate inputs and observables to be compared.
For each candidate pair $t=(\rho,O)$, \cref{eq:finite-strength-gap} gives the signed expectation change
\begin{equation*}
d_{m,t}(\eta_m)
=
(1-\cos\eta_m)a_{m,t}+\sin\eta_m\,b_{m,t}.
\end{equation*}
By \cref{eq:suite-detection-set}, the mutant survives the current suite when $|d_{m,t_r}(\eta_m)|<\gamma_r$ for every existing test $t_r$.
A new test $t'$ detects it under the differential rule exactly when $|d_{m,t'}(\eta_m)|$ reaches that test's declared threshold.
A zero gap is the stronger, special case of test-relative invisibility.
For a fixed site and axis, \cref{cor:finite-strength-visibility} makes every strength invisible to the suite exactly when $a_{m,t}=b_{m,t}=0$ for every test $t$.
Even when these coefficients do not all vanish, one strength can be invisible to the entire suite if it lies in every test's zero set from \cref{eq:finite-strength-zero-set}.

For a fixed input, the state difference determines the largest gap among Hermitian observables whose outcomes lie in $[-1,1]$, equivalently whose operator norm is at most one.
Form $\Delta\tau_j$ from \cref{eq:state-side-difference} and choose $O_j^\star=\operatorname{sgn}(\Delta\tau_j)$, as in \cref{prop:state-side-dual}.
This observable assigns $+1$, $-1$, or $0$ to the positive, negative, or zero eigenspaces of the state difference.
Propagating it forward gives the output measurement $O^\star=U_{>j}O_j^\star U_{>j}^\dagger$.
By \cref{eq:state-side-trace-norm}, its gap is exactly $\|\Delta\tau_j\|_1$.
If $\Delta\tau_j=0$, no change of observable can expose the mutant on this input.

Attaining this mathematical optimum can be expensive.
Constructing $O_j^\star$ requires the full mutation-site state and may be intractable in a wide circuit.
Measuring $O^\star$ requires a circuit that changes to its eigenbasis, or an equivalent measurement construction.
Its depth, routing, calibration, and noise can dominate the cost.
The optimum therefore remains an information-theoretic benchmark unless the measurement is permitted, its construction is available, and its cost is included in the test budget.
Even an attainable nonzero gap must reach the declared threshold to count as differential detection; a native acceptance assertion must still be judged by its own rule.

When measurements are restricted, declare an admissible output family $\mathcal F_{\mathrm{out}}$ with $\|O\|_\infty\leq1$ and an implementation cost for each candidate.
Screen each $O\in\mathcal F_{\mathrm{out}}$ using its gap $|\Tr(U_{>j}^\dagger O U_{>j}\Delta\tau_j)|$ and that cost.
Restricting the output to Pauli observables differs from restricting the mutation-site observable to Paulis unless the suffix maps the relevant families into one another.
The best gap in the restricted family is a lower bound on the unrestricted optimum and can be exponentially smaller.
The evaluation identity and this limitation are given in \cref{sec:restricted-pauli-augmentation}.
Both the unrestricted construction and the restricted screen can use original-prefix information and the known relative unitary without first executing the mutant, provided that the needed expectations and observable representations are accessible.

For the sign-reversal mutant $R_y(\theta)\rightarrow R_y(-\theta)$, changing the observable is enough to obtain a nonzero gap when $\sin\theta\neq0$.
Substituting $\eta=-2\theta$ into \cref{eq:finite-strength-gap} gives zero for $(\ket{0},Z)$, while $(\ket{0},X)$ gives
\begin{equation*}
\Delta_m
=
|\sin\theta-(-\sin\theta)|
=
2|\sin\theta|.
\end{equation*}
Measuring $Y$ would still give zero gap because both states have real amplitudes and zero $Y$ expectation.
To measure $X$, add a Hadamard gate $H$ before the usual computational-basis readout, using $H^\dagger ZH=X$.
For the introductory example $\theta=\pi/4$, that readout returns bit 1 with probabilities $(1-1/\sqrt{2})/2$ for the original and $(1+1/\sqrt{2})/2$ for the mutant.
Their difference is $1/\sqrt{2}$, corresponding to the $X$-expectation gap $\sqrt{2}$ under the $\pm1$ outcome labels.

This $X$ test also attains the unrestricted optimum.
At the mutation site, \cref{eq:state-side-difference} gives $\Delta\tau_j=\sin\theta\,X$.
Hence \cref{prop:state-side-dual,eq:state-side-trace-norm} give $O_j^\star=\operatorname{sgn}(\sin\theta)X$ and $\|\Delta\tau_j\|_1=2|\sin\theta|$.
The state difference thus identifies the optimal observable without a candidate search.
To show how the gap changes between the original and optimal measurements, consider the unit-norm axes
\begin{equation*}
O(\varphi)
=
\cos\varphi\,Z+\sin\varphi\,X,
\qquad
0\leq\varphi\leq\frac{\pi}{2}.
\end{equation*}
The sign-reversal gap along this family is
\begin{equation*}
\Delta_m(\varphi)
=
2|\sin\theta\sin\varphi|.
\end{equation*}
For $\theta=\pi/4$, \cref{fig:observable-augmentation-gap} shows the gap increasing continuously from zero for $Z$ to the optimum for $X$ supplied by \cref{prop:state-side-dual}.

\begin{figure}[!htb]
\centering
\begin{tikzpicture}
\begin{axis}[
    width=0.76\linewidth,
    height=5.1cm,
    domain=0:pi/2,
    samples=181,
    axis lines=left,
    xlabel={observable direction $O(\varphi)$},
    ylabel={$\Delta_m(\varphi)$},
    xmin=0,
    xmax=pi/2,
    ymin=0,
    ymax=1.55,
    xtick={0,pi/4,pi/2},
    xticklabels={$Z$,$(Z+X)/\sqrt{2}$,$X$},
    ytick={0,0.5,1,1.41421356},
    yticklabels={$0$,$0.5$,$1$,$\sqrt{2}$},
    grid=major,
    grid style={gray!20},
    tick label style={font=\small}
]
\addplot[blue!70!black,very thick] {sqrt(2)*sin(deg(x))};
\addplot[only marks,mark=o,mark size=2.4pt,blue!70!black] coordinates {(0,0)};
\addplot[only marks,mark=*,mark size=2.2pt,red!70!black] coordinates {(pi/2,{sqrt(2)})};
\end{axis}
\end{tikzpicture}
\caption{Observable-direction gap for the sign reversal $R_y(\pi/4)\rightarrow R_y(-\pi/4)$ under input $\ket{0}$.
The existing $Z$ test is invisible, intermediate axes provide partial separation, and the state-side optimum $X$ attains the trace-norm gap $\sqrt{2}$.}
\label{fig:observable-augmentation-gap}
\Description{The observable gap increases along the measurement direction from zero at the Z axis to the square root of two at the X axis, following the square root of two times sine of the direction angle.}
\end{figure}

\paragraph{Two-qubit measurement construction.}
The optimal observable need not be a single Pauli string, even for two qubits.
Prepare $\ket{\psi}=(\ket{00}+\ket{11})/\sqrt2$ and let the mutant append the phase gate $S=\operatorname{diag}(1,i)$ on the first qubit, giving $\ket{\phi}=(\ket{00}+i\ket{11})/\sqrt2$.
Computational-basis readout gives the same probabilities for both states, although their difference is
\begin{equation}
\label{eq:bell-phase-difference}
D=\ket{\psi}\bra{\psi}-\ket{\phi}\bra{\phi}=\frac{XX-YY-XY-YX}{4}.
\end{equation}
The difference has nonzero eigenvalues $\pm1/\sqrt2$ on the subspace spanned by $\ket{00}$ and $\ket{11}$.
Its sign observable therefore gives the optimum:
\begin{equation}
\label{eq:bell-phase-optimum}
O^\star=\operatorname{sgn}(D)=\frac{XX-YY-XY-YX}{2\sqrt2},
\qquad |\Tr(O^\star D)|=\sqrt2.
\end{equation}
Each individual Pauli string has gap at most one in this example, below the optimum $\sqrt2$.
To measure $O^\star$, use bit order $q_0q_1$ and apply $\mathrm{CNOT}_{0,1}$, then $T=\operatorname{diag}(1,e^{i\pi/4})$ and $H$ on $q_0$, followed by computational readout.
For $B=(HT\otimes I)\mathrm{CNOT}_{0,1}$, direct multiplication gives $BO^\star B^\dagger=\operatorname{diag}(1,0,-1,0)$.
Assign outcomes $00,01,10,11$ the values $+1,0,-1,0$, respectively; the two output means are $+1/\sqrt2$ and $-1/\sqrt2$.
This realizes the optimal gap with one additional CNOT and two one-qubit gates per measured circuit, plus both readouts and classical relabeling.
The zero labels correspond to the zero eigenspace of the state difference; neither ideal state has weight there.
Whether this construction costs less than separate Pauli measurements depends on basis synthesis, connectivity, and execution costs, which this illustration does not measure.
The measurement is admissible only when the program contract permits this basis change.

These tests are constructed from the very mutations they expose.
An adequacy claim for the augmented suite therefore requires evaluation on held-out mutants, representative faults, or another population not used to construct the tests.

\subsection{A Local Mutation Exposed Through Correlation}

A measurement on one qubit can reveal a mutation on another when the suffix couples them.
Consider a mutation site followed by the suffix $\mathrm{CNOT}_{q,r}$ and final observable $O=Z_r$.
Propagating the observable backward as in \cref{thm:core-exact-local} gives
$$
O_j
=
\mathrm{CNOT}_{q,r}^\dagger Z_r\mathrm{CNOT}_{q,r}
=
Z_qZ_r.
$$
Thus the final $Z_r$ expectation reads the two-qubit correlation $Z_qZ_r$ at the mutation site.
For an $R_y$ angle mutation on qubit $q$, \cref{eq:core-exact-pauli-gap} gives
$$
\Delta_m
=
\left|
(1-\cos\eta)\Tr(Z_qZ_r\tau_j)
+
\sin\eta\Tr(X_qZ_r\tau_j)
\right|.
$$
The first-order response depends on the $X_qZ_r$ correlation: the state-dependent sensitivity in \cref{eq:core-state-sensitivity} is $|\Tr(X_qZ_r\tau_j)|$.
If this coefficient vanishes, the $Z_qZ_r$ term can still produce a gap at finite strength.
For example, take $\tau_j=\ket{00}\bra{00}$.
Then $\Tr(Z_qZ_r\tau_j)=1$ and $\Tr(X_qZ_r\tau_j)=0$, giving
\begin{equation*}
\Delta_m(\eta)
=
1-\cos\eta,
\end{equation*}
even though the first-order approximation about $\eta=0$ is zero.
As \cref{fig:correlation-finite-strength-gap} shows, the gap grows quadratically near the unchanged circuit and reaches the full outcome range at a half-turn mutation.

\begin{figure}[!htb]
\centering
\begin{tikzpicture}
\begin{axis}[
    width=0.76\linewidth,
    height=5.1cm,
    domain=-pi:pi,
    samples=241,
    axis lines=left,
    xlabel={mutation strength $\eta$},
    ylabel={$\Delta_m(\eta)$},
    xmin=-pi,
    xmax=pi,
    ymin=-0.08,
    ymax=2.1,
    xtick={-pi,-pi/2,0,pi/2,pi},
    xticklabels={$-\pi$,$-\pi/2$,$0$,$\pi/2$,$\pi$},
    ytick={0,0.5,1,1.5,2},
    grid=major,
    grid style={gray!20},
    tick label style={font=\small},
    legend style={font=\small,draw=none,fill=none,at={(0.5,0.97)},anchor=north}
]
\addplot[blue!70!black,very thick] {1-cos(deg(x))};
\addlegendentry{exact gap $1-\cos\eta$}
\addplot[red!70!black,thick,densely dashed] {0};
\addlegendentry{first-order approximation}
\end{axis}
\end{tikzpicture}
\caption{Exact correlation-mediated gap for $\tau_j=\ket{00}\bra{00}$, an $R_y(\eta)$ mutation on control qubit $q$, a $\mathrm{CNOT}_{q,r}$ suffix, and final observable $Z_r$.
The first-order coefficient vanishes, but the exact gap grows as $1-\cos\eta$ and reaches the full observable range at $|\eta|=\pi$.}
\label{fig:correlation-finite-strength-gap}
\Description{The exact gap, one minus cosine eta, rises from zero at eta equals zero to two at eta equals plus or minus pi.
The first-order approximation is a horizontal line at zero throughout.}
\end{figure}

For $\eta\not\equiv0\pmod{2\pi}$, the mutation lies inside the structural light cone and its relative unitary does not commute with $O_j$.
Neither zero-effect certificate in \cref{eq:core-light-cone-discard,thm:core-exact-local} therefore applies, but these facts alone do not determine the gap.
To see the role of the state, keep the same site, axis, and observable and take $\tau_j=\ket{++}\bra{++}$.
Both $\Tr(Z_qZ_r\tau_j)$ and $\Tr(X_qZ_r\tau_j)$ now vanish, so the exact gap is zero at every strength.
At nontrivial strengths the relative unitary still commutes with neither $O_j$ nor $\tau_j$.
Thus the absence of a zero-effect certificate does not guarantee a nonzero signal for the selected test.

\subsection{Efficient Analysis for Stabilizer Circuits}
\label{sec:stabilizer-family}
The correlation example needs only two state expectations, but computing even those quantities can be expensive for a general circuit.
For pure stabilizer states and a Clifford suffix, they can be obtained efficiently on a classical computer.
A pure \emph{stabilizer state} is specified by $n$ independent commuting Pauli constraints that leave it unchanged.
Products of these signed Pauli constraints form its \emph{stabilizer group}.
A signed stabilizer tableau stores those constraints as a compact binary matrix with sign information, avoiding a vector of $2^n$ amplitudes.
Standard stabilizer simulation supplies the required state information \cite{gottesman1999heisenberg,aaronson2004stabilizer}.
A Clifford suffix also maps a Pauli output observable backward to a single signed Pauli string.
When the mutation axis anticommutes with the propagated observable, the two Pauli observables that determine the response also anticommute.
They cannot both have nonzero expectations in a pure stabilizer state.
This leaves exactly three possible gap curves, even when the mutation itself introduces a non-Clifford rotation.

\begin{theorem}[Three-shape stabilizer visibility]
\label{thm:stabilizer-visibility}
Suppose $\tau_j$ is a pure stabilizer state supplied by a signed stabilizer tableau, the suffix is Clifford, the output observable is a unit-norm Hermitian Pauli string, and $W_m=e^{-i\eta A/2}$ for a Hermitian Pauli string $A$.
Write $P=O_j$, including the sign introduced by Clifford propagation.
If $[P,A]=0$, the gap vanishes for all strengths.
Otherwise $P$ and $Q=iPA$ are anticommuting Hermitian Pauli observables, and their expectations $a_m=\Tr(P\tau_j)$ and $b_m=\Tr(Q\tau_j)$ give the exhaustive classification
\begin{equation}
\label{eq:stabilizer-gap-shapes}
\Delta_m(\eta)=
\begin{cases}
0, & (a_m,b_m)=(0,0),\\
|\sin\eta|, & a_m=0,\quad b_m\in\{-1,1\},\\
1-\cos\eta, & a_m\in\{-1,1\},\quad b_m=0.
\end{cases}
\end{equation}
For an original circuit of $L$ one- and two-qubit Clifford gates on a supplied pure stabilizer input, a fixed Pauli output observable, and $M$ explicitly listed site--axis candidates, all $M$ response curves can be obtained with an $\mathcal{O}(Ln+Mn^3)$ binary-operation bound and no quantum shots.
This bound concerns the symbolic curves; certified comparison with numerical thresholds additionally requires exact or validated angle arithmetic.
\end{theorem}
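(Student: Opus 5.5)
The plan is to specialize \cref{cor:finite-strength-visibility} to the Pauli setting. First, because the suffix is Clifford and $O$ is a unit-norm Hermitian Pauli string, the normalizer property gives $P=O_j=U_{>j}^\dagger OU_{>j}=\pm P'$ for a Pauli string $P'$. Since $A$ is also a Pauli string, $P$ and $A$ either commute or anticommute.

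If they commute, $O_{j,-}=(P-APA)/2=0$. Then \cref{eq:core-exact-pauli-gap} vanishes identically in $\eta$.

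If they anticommute, $O_{j,-}=P$, so the coefficients of \cref{eq:exact-pauli-two-coefficients} become $a_m=\Tr(P\tau_j)$ and $b_m=\Tr(iPA\,\tau_j)=\Tr(Q\tau_j)$. Here $Q=iPA$ satisfies
\[
Q^\dagger=-iAP=iPA=Q
\quad\text{and}\quad
Q^2=-PAPA=PPAA=I,
\]
so $Q$ is a Hermitian Pauli string up to sign. It anticommutes with $P$, because $PQ=iA$ while $QP=iPAP=-iA$.

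The key step is a stabilizer-state fact. For a pure stabilizer state and a Hermitian Pauli string $R$, the expectation $\Tr(R\tau_j)$ equals $\pm1$ when $\pm R$ lies in the stabilizer group, and equals $0$ otherwise. The reason is as follows. If $R$ anticommutes with some stabilizer $g$, then $\Tr(R\tau)=\Tr(Rg\tau g)=-\Tr(R\tau)$. If instead $R$ commutes with the whole group, maximality of the $n$-generator group forces $\pm R$ into it.

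Now suppose both $a_m$ and $b_m$ were nonzero. Then $\pm P$ and $\pm Q$ would both be stabilizers. But these two anticommute, which is impossible inside an abelian group. Therefore $(a_m,b_m)$ lies in $\{(0,0)\}\cup\{(0,\pm1)\}\cup\{(\pm1,0)\}$.

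Substituting each case into \cref{eq:finite-strength-gap} yields $0$, $|\sin\eta|$, or $1-\cos\eta\geq0$, respectively. The remaining sign and absolute value are harmless here, and no further case can occur. This is \cref{eq:stabilizer-gap-shapes}.

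For the complexity claim, propagate the single output Pauli backward through all $L$ Clifford gates once, keeping a symplectic bit vector plus a phase. Each gate costs $\mathcal{O}(1)$ per affected qubit, or $\mathcal{O}(n)$ under a generous bound, for $\mathcal{O}(Ln)$ in total.

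To obtain $O_j$ for every site, the plan is to record the Pauli at each cut during a single backward sweep. This gives all $O_j$ within $\mathcal{O}(Ln)$ total, or $\mathcal{O}(n)$ per stored cut. Likewise, the tableau of $\tau_j$ at each site comes from a forward tableau sweep. For each candidate, one checks commutation of $P$ with $A$ in $\mathcal{O}(n)$, forms $Q$ in $\mathcal{O}(n)$, and evaluates two stabilizer expectations. Each evaluation decides membership of $\pm R$ in the group by solving a binary linear system against the tableau, at $\mathcal{O}(n^3)$ cost by Gaussian elimination, with phase tracking included.

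The main obstacle is the forward sweep. Storing a full tableau at every one of the $L$ sites would cost $\mathcal{O}(Ln^2)$ and break the stated bound. I would try to avoid this by processing the candidates in site order during a single forward pass, querying the current tableau when each candidate's site is reached. Alternatively, one can propagate the queried Paulis $P,Q$ back to the circuit input through the Clifford prefix, at cost absorbed by the backward sweep, and test them against the fixed input tableau. I would also check that the tableau-update cost per gate fits the $\mathcal{O}(Ln)$ term. For generic update formulas this requires the $\mathcal{O}(n)$-per-gate row update of Aaronson--Gottesman.

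Finally, the certified-threshold caveat follows, since the symbolic curves are exact and only the trigonometric comparison needs validated arithmetic.
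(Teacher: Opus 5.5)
Your proposal is correct and follows the paper's proof essentially step for step. It uses the same commute/anticommute split, the same identification $O_{j,-}=P$ with $Q=iPA$, and the same stabilizer fact that Pauli expectations lie in $\{0,\pm1\}$, which rules out $a_m b_m\neq0$. For the cost it also uses one backward Pauli sweep, one forward tableau pass that evaluates each site's candidates before advancing, and $\mathcal{O}(n^3)$ membership tests.

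One caveat concerns your fallback of pulling $P$ and $Q$ back to the input. The pullback of $P$ is site-independent, namely $U^\dagger O U$, so it is indeed absorbed by the backward sweep. The pullback of $Q=iPA$, however, requires conjugating the candidate-specific axis $A$ through the prefix at cost up to $\mathcal{O}(L)$ per candidate. That gives $\mathcal{O}(ML)$ in total, which can exceed $\mathcal{O}(Ln+Mn^3)$, so the in-order forward pass is the route that meets the stated bound.
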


\begin{proof}
Clifford propagation maps the output string to a signed Pauli $P$.
If $P$ commutes with $A$, then $O_{j,-}=0$ and the gap is zero.
Otherwise $O_{j,-}=P$, $Q=iPA$ is Hermitian with $Q^2=I$, and $PQ=-QP$.
A Pauli expectation in a pure stabilizer state is $1$ or $-1$ when the appropriately signed Pauli belongs to the stabilizer group, and zero otherwise.
The two expectations cannot both be nonzero, because that would make the state a simultaneous eigenvector of anticommuting observables.
Substitution in \cref{eq:finite-strength-gap} gives \cref{eq:stabilizer-gap-shapes}.

Group candidates by site and propagate the output Pauli backward once, storing its signed value at each candidate site.
Then propagate the supplied input tableau forward once, evaluating all candidates at each site before advancing.
These shared passes cost $\mathcal{O}(Ln)$: elementary Clifford tableau updates take $\mathcal{O}(n)$ operations \cite[Sec.~III]{aaronson2004stabilizer}, and propagating and storing each signed Pauli takes at most $\mathcal{O}(n)$ operations per gate.
For each candidate, Pauli commutation and multiplication take $\mathcal{O}(n)$ operations, and at most two signed stabilizer-membership tests determine the required expectations.
Each signed stabilizer-membership test can be implemented by Gaussian elimination over $\mathbb F_2$ in $\mathcal{O}(n^3)$ operations.
The $M$ candidate checks therefore cost $\mathcal{O}(Mn^3)$ in total, giving the overall bound $\mathcal{O}(Ln+Mn^3)$.
This bound establishes polynomial cost and allows for faster update and membership-test implementations.
No enumeration of the $2^n$ stabilizer elements or construction of a dense state vector is needed.
\end{proof}

A faster implementation uses a full Aaronson--Gottesman tableau, which stores $n$ destabilizer generators in addition to the $n$ signed stabilizer generators \cite[Sec.~III]{aaronson2004stabilizer}.
Each destabilizer anticommutes with its paired stabilizer generator and commutes with the other stabilizer generators.
For a queried Pauli, anticommutation with any stabilizer generator implies zero expectation; otherwise, its commutation relations with the destabilizers identify the stabilizer product whose sign relative to the queried Pauli determines the expectation.
Scanning these rows and multiplying the selected stabilizer generators takes $\mathcal{O}(n^2)$ binary operations per query without changing the state tableau.
Maintaining the full tableau preserves the $\mathcal{O}(Ln)$ shared-pass cost, so a supplied full input tableau gives an overall bound of $\mathcal{O}(Ln+Mn^2)$.
If only the signed stabilizer generators are supplied, Gaussian elimination and symplectic completion can construct the additional rows once in $\mathcal{O}(n^3)$ operations, giving $\mathcal{O}(n^3+Ln+Mn^2)$ including preprocessing.

For a concrete family, prepare $\ket{\mathrm{GHZ}_n}=(\ket{0}^{\otimes n}+\ket{1}^{\otimes n})/\sqrt{2}$ for any $n\geq2$, and append an identity mutation site after the preparation.
At that site, the pairs $(A,O)=(Z_1,X_1\cdots X_n)$, $(Z_1,Y_1X_2\cdots X_n)$, and $(X_1,Z_1)$ give respectively $1-\cos\eta$, $|\sin\eta|$, and zero.
In the last case the axis $X_1$ acts inside the measured qubit's structural cone and anticommutes with its observable, yet both coefficient expectations vanish.
This gives a state-dependent zero that neither structural exclusion nor observable commutation explains, for arbitrarily large registers.
The same coefficient pairs can be reused for any number of declared strengths without further state analysis or quantum shots.
Evaluating or reporting each strength still incurs scalar arithmetic or output cost.

The state representation is an essential assumption.
A Clifford suffix alone does not supply the mutation-site tableau, and the cost bound does not include recognizing whether an arbitrary preparation yields a stabilizer state.
Even one non-Clifford preparation gate can invalidate the three-shape conclusion: $R_y(\pi/4)\ket{0}$ has nonzero $X$ and $Z$ expectations.
Decomposing a state into stabilizer components may still permit efficient expectation calculations for restricted circuits, but that is a separate computational claim.

The theorem also excludes arbitrary convex mixtures of pure stabilizer states.
For $\tau_j=(\ket{0}\bra{0}+\ket{+}\bra{+})/2$, $P=Z$, and $A=Y$, the coefficients are $a_m=b_m=1/2$ and the gap is $|1-\cos\eta+\sin\eta|/2$.
An arbitrary non-Clifford suffix or a sum of Pauli output terms can likewise produce other gap curves.
Even within the theorem's assumptions, a threshold decision at a boundary requires a symbolic comparison or validated bounds; without them it remains unresolved under \cref{thm:planner-soundness}.

\subsection{Computational Scope and Scalability}
\label{sec:rule-scalability}
The stabilizer theorem shows why register width alone does not determine the cost of these rules.
The local response identities hold for any finite number of qubits; obtaining the state expectations and operator representations needed to use them can nevertheless be expensive.
The reference implementation's four-qubit default applies to dense symbolic proofs, or to the changed support in a local channel proof, rather than to the circuit width admitted by every rule.
Structural exclusion and sparse Pauli commutation can operate on much wider circuits without constructing a full-register matrix.
An exact identity therefore has to be distinguished from an efficient algorithm for evaluating all of its inputs.

\Cref{tab:scalable-operator-rules,tab:scalable-state-rules} separate these costs for the rules developed in this paper.
Here $n$ is the register width, $L$ the relevant circuit length, $k$ the mutation or gate support size, $K$ the represented Pauli-term count, and $M$ the number of declared candidates.
Arithmetic bounds assume fixed precision unless exact symbolic expressions are explicitly involved; their simplification and bit lengths require additional accounting.
The tables describe available mathematical routes, while \cref{tab:evidence-implementation} identifies which routes the reference code implements.
\Cref{sec:planner-growth} gives the representation sizes and configured fallback limits.

\begin{table}[!htbp]
\caption{Scaling of semantic, structural, and observable-side rules.
A cheap operator calculation does not by itself supply the state-dependent gap.}
\label{tab:scalable-operator-rules}
\centering\small
\renewcommand{\arraystretch}{1.12}
\begin{tabularx}{\linewidth}{@{}>{\raggedright\arraybackslash}p{0.21\linewidth}*{2}{>{\raggedright\arraybackslash}X}@{}}
\toprule
\textbf{Rule} & \textbf{Route for larger circuits} & \textbf{Limiting quantity} \\
\midrule
Semantic equivalence and duplicate channels
& Compare compiled representations at the cost of compilation; with verified common context, prove a scalar relative phase on $k$ changed qubits.
& A local dense matrix stores $4^k$ entries.
Dense full-register proofs store $4^n$ entries and symbolic expressions can grow further; compiler equality alone is insufficient for the implemented certificate. \\
Rotation ceiling and optional copy-budget floor
& Scalar arithmetic once the strength, observable range, and budget are known; the TV ceiling needs no full output distribution.
& Deriving the spectral range of an arbitrary observable can itself be expensive.
A fixed one-qubit axis--angle reduction remains local. \\
Structural light cone
& Traverse gate operands and exclude disjoint mutation supports without state simulation.
& A large cone reduces the number of exclusions.
The implementation stores boundary supports with up to $\mathcal O(Ln)$ memory, although a streaming check need not store every boundary. \\
Observable commutation and Clifford updates
& Pauli-axis commutation costs $\mathcal O(Kn)$ using explicit strings.
A Clifford suffix preserves $K$, requiring $\mathcal O(KL)$ local term updates.
& A general suffix can generate $4^n$ terms.
Truncation can control work but cannot prove an exact zero from the retained terms alone. \\
Exact local and spectral replacement rules
& A $k$-qubit relative action expands each Pauli term into at most $4^k$ terms; local eigendecomposition uses conventional $\mathcal O(8^k)$ arithmetic.
& Small $k$ controls the local action, not the cost of representing $O_j$ or obtaining its state expectations. \\
Perturbative interval and operator proxy
& Sparse commutators and coefficient-norm bounds can avoid full matrices when both the generator and $O_j$ have manageable expansions.
& Exact spectral norms or dense nested commutators can be exponential in $n$.
A loose remainder can leave the threshold unresolved, and a proxy is not a realized gap. \\
\bottomrule
\end{tabularx}
\end{table}

\begin{table}[!htbp]
\caption{Scaling of state-dependent decisions, measurements, and accounting.
Sampling bounds concern data acquisition separately from classical simulation.}
\label{tab:scalable-state-rules}
\centering\small
\renewcommand{\arraystretch}{1.12}
\begin{tabularx}{\linewidth}{@{}>{\raggedright\arraybackslash}p{0.21\linewidth}*{2}{>{\raggedright\arraybackslash}X}@{}}
\toprule
\textbf{Rule} & \textbf{Route for larger circuits} & \textbf{Limiting quantity} \\
\midrule
Input-state invariance
& Use a proved state symmetry; for a pure $\tau_j$, state invariance is equivalent to $|\Tr(W_m\tau_j)|=1$.
& Access to that expectation or symmetry is required.
Invariance of a local reduced state alone does not establish invariance of an entangled register. \\
Cancellation, all-strength zeros, and optimal strength
& Obtain $(a_m,b_m)$ once, then evaluate each strength with constant scalar work.
The stabilizer theorem supplies polynomial exact expectation queries under its assumptions.
& Arbitrary state preparation remains expensive.
Numerical or sampled coefficients do not certify an exactly vanishing gap or an exact zero set. \\
Signed pilots and threshold intervals
& For $M_{\mathrm P}$ target strings of weight at most $w$, shared shadows use $\mathcal O(3^w\log(M_{\mathrm P}/\beta)/\zeta^2)$ shots at additive error $\zeta$ and failure budget $\beta$.
& Large Pauli weight, stringent precision, prefix execution cost, or truncation error can dominate.
The built-in simulated pilot also pays classical state-simulation costs. \\
TV and native-assertion rules
& Apply the scalar distance ceiling or acceptance-margin enclosure; restrict explicit distributions to the $r$ measured qubits when the oracle permits it.
& A complete distribution has $2^r$ entries.
Full-register distribution comparison can be exponential even when expectation tests are cheap; a native baseline margin must also be obtained. \\
Helstrom and restricted test augmentation
& Screen a declared finite list of measurable Paulis using the same propagation and expectation machinery.
& All Paulis number $4^n$; a general dense Helstrom construction uses $4^n$ matrix entries.
Pure-state rank-two constructions can use $2^n$ amplitudes instead, still exponential.
Measurement synthesis adds its own cost. \\
Finite-shot classification, ranking, and score bounds
& Once samples and certified indicators are supplied, threshold arithmetic and weighted counting are scalar per record; ranking $M$ candidates costs $\mathcal O(M\log M)$.
& Reading samples, enumerating candidates and tests, and obtaining their certificates must be charged separately.
Polynomial post-processing does not remove those costs. \\
\bottomrule
\end{tabularx}
\end{table}

There are three distinct ways to make state-dependent rules practical beyond small registers.
A supplied product or stabilizer description can make the required expectations tractable, with the explicit stabilizer bounds given in \cref{thm:stabilizer-visibility}.
Alternatively, the union of the original and mutant backward causal cones may contain only $r\ll n$ qubits.
When the corresponding reduced input is supplied, a pure state uses $2^r$ amplitudes and a mixed state uses a density matrix with $4^r$ entries.
For fixed-width gates and polynomial circuit length, $r=\mathcal O(\log n)$ then gives polynomial simulation work; a small mutation support $k$ alone does not imply a small causal cone or a cheaply available reduced input.
Finally, executing the relevant prefixes and measuring the required observables can replace classical state simulation, subject to the sampling, circuit-depth, and measurement costs in \cref{sec:measurement-baselines,sec:scalable}.
The present planner does not automatically choose a tableau backend or reduce state evolution to the causal cone.

State-invariance checks particularly require care about correlations.
For a Bell pair, the reduced state of either qubit is $I/2$ and commutes with a local $Z$, yet that operation changes the global pure state to an orthogonal Bell state.
Thus local reduced-state commutation cannot replace the global condition $[W_m,\tau_j]=0$ for a mixed or entangled register.
The pure-state expectation criterion in \cref{tab:scalable-state-rules} follows because two pure output states coincide up to phase exactly when their overlap has unit modulus.
It requires that overlap, rather than merely unchanged local marginals.

Sparse coefficient bounds also offer a scalable, potentially conservative perturbative route.
For complete Pauli expansions $H_m=\sum_a h_a A_a$ and $O_j=\sum_b c_bP_b$, define $h_1=\sum_a|h_a|$ and $o_1=\sum_b|c_b|$.
Submultiplicativity and the commutator inequality give
\begin{equation}
\label{eq:sparse-perturbative-remainder}
B_m=\frac{\eta^2}{2}\|[H_m,[H_m,O_j]]\|_\infty\leq 2\eta^2 h_1^2o_1.
\end{equation}
This replaces a dense norm calculation in \cref{eq:perturbative-gap-interval} by coefficient sums, but the state-dependent first-order expectation is still needed and the bound can be too loose to decide the threshold.
Using truncated expansions additionally requires allowances for the omitted terms.
The current general-generator fallback uses bounded dense products; the coefficient bound states an available extension rather than an implemented performance result.

Population-wide redundancy and subsumption add costs in the number of candidates and tests.
Identical compiled representations can be grouped by hashing followed by equality checks, whereas unrestricted pairwise equivalence checking can require $\mathcal O(M^2)$ proof attempts.
Once exact detection indicators for $T$ tests are supplied, a direct all-pairs subsumption comparison costs $\mathcal O(M^2T)$ Boolean operations; obtaining those indicators remains a separate analysis or execution cost.
Grouping by syntactic metadata can be inexpensive, but effect-based buckets first require the corresponding profiles and provide no semantic certificate.

Scalability therefore depends on register width, causal-cone size, gate locality, represented term count, state structure, Pauli weight, required precision, and the number of candidates and tests.
A Clifford suffix can keep one Pauli term compact while spreading its weight to $n$, making classical propagation cheap but the stated shadow bound exponential.
Conversely, a wide circuit with a small relevant cone or a tractable state can admit informative exact decisions well beyond the symbolic fallback limit.
A resource-limited or statistically unresolved case retains its population weight; inability to evaluate a rule is not evidence of equivalence or survival.

\subsection{Practitioner Checklist: Do, Do Not, and Report}
\label{sec:practitioner-checklist}

A planning decision must be reported together with its test, supporting evidence, and cost.
Apply the conditions in the order given by \cref{tab:planning-cost-order}, following the evidence rules in \cref{sec:claim-hierarchy}.
The implementation steps appear in \cref{alg:efficient-mutant-selection,alg:candidate-effect-evaluation}.
For each analysis, record:
\begin{itemize}
\item The declared semantics, mutant population, test, threshold, original and replacement operations, union of the qubits they act on, and left-multiplied relative unitary.
Keep a family's signed angle offsets distinct from a fixed replacement's canonical axis and angle; a scalar relative unitary has no defined axis.
\item The decision, evidence type, provenance, uncertainty, and computational limits.
For a fixed-axis analysis intended for reuse, retain $(a_m,b_m)$, the signed response, and its zero set.
\item The measurement implementation and complete preprocessing, pilot, and execution costs, using the alternatives in \cref{sec:measurement-baselines}.
Report population weights, including the weight of unresolved mutants, using \cref{thm:planner-soundness}.
Evaluate proposed test improvements on a population independent of the mutants used to construct them.
\end{itemize}
Use the threshold ceiling as an initial check before more expensive analysis.
For $R_O=2$ and $\gamma=0.1$, canonical angles below $2\arcsin(0.05)\approx0.1000$ radians cannot be detected by that test.
Pilot measurements need a separate cost and error analysis: their error allocations and retained terms must satisfy the sampling and truncation inequalities in \cref{sec:scalable}.
Compare pilot settings with direct execution and reconstruction baselines using total preprocessing, sampling, and execution costs at matched precision and confidence.

\section{Validation Requirements and Limitations}
\label{sec:core-validation}

\subsection{Evidence and Implementation Scope}
\label{sec:implementation-scope}

The evidence comprises proofs, hand-checkable examples, executable demonstrations, and bounded synthetic feasibility checks under the stated model; \cref{tab:evidence-implementation} records the implementation limits.
No corpus-effect results, filter hit rates, measured savings, or held-out adequacy gains are reported here.

During the preparation of this work, the authors used OpenAI Codex \cite{openaicodex} and Anthropic Claude Code \cite{anthropicclaudecode} to assist with manuscript editing and to obtain additional feedback on theorem statements and proofs.
These tools also supported the review and revision of the reference implementation, which was initially written by the authors, with human review of code produced or modified through this process.
Throughout, AI feedback served as an aid to review and revision, rather than as evidence of mathematical or implementation correctness.
The authors retain responsibility for the final manuscript, proofs, and implementation.

\begin{table}[!htb]
\caption{Scope of mathematical results and reference implementation.}
\label{tab:evidence-implementation}
\centering
\small
\renewcommand{\arraystretch}{1.12}
\begin{tabularx}{\linewidth}{@{}>{\raggedright\arraybackslash}p{0.20\linewidth}>{\raggedright\arraybackslash}X>{\raggedright\arraybackslash}X@{}}
\toprule
\textbf{Capability} & \textbf{Established or implemented} & \textbf{Conditions and implementation limits} \\
\midrule
Exact local analysis & Exact identities and structural rules; supported fixed-strength calculations and algebraic zero checks.
& Numerical gaps and bounds use ordinary floating point, without validated numerical threshold certification; reports concern fixed strengths rather than reusable response families. \\
Oracle extensions & TV rotation ceiling and native acceptance-margin and interval certificates.
& Mathematical results only; the implementation's decision rules remain differential expectation rules. \\
Symbolic certificates & Independent channel-equivalence and full-circuit zero proofs for supported explicitly exact data.
& Default symbolic limit: four changed qubits for local channel proofs, or four register qubits for full-circuit proofs; unsupported cases retain fallback provenance. \\
Tractable circuit family & Polynomial-time response classification under the stabilizer theorem's assumptions.
& Sparse Pauli preprocessing is implemented; supplied states and simulated pilots remain dense. The full tableau-based classification algorithm is not implemented. \\
Pilot estimation & Simulated mutation-site sampling with reported error allocation.
& Separate pilots per candidate estimate only the declared fixed-strength response. \\
Outcome accounting & Separate fixed-shot classifier; planning soundness proposition with score bounds; worked examples.
& Soundness covers classifications and score bounds under the proposition's assumptions; execution savings require separate evaluation. \\
\bottomrule
\end{tabularx}
\end{table}

\subsection{Bounded Synthetic Feasibility Probes}
\label{sec:synthetic-feasibility}

The implementation repository \cite{quantummutantplanner} contains \path{scripts/probe_scalability.py}, which exercises fixed constructed families with known answers.
The grid, environment, and per-case measurements from one batch are recorded in \path{scalability-results/reference.json}; \path{scalability-results/README.md} documents how to reproduce it.
These paths are relative to the repository root.
These checks measure implementation feasibility separately from the proposed population and execution-planning studies.
The measurements used an Apple M1 Max, Python 3.13.15, Qiskit 2.5.1, NumPy 2.5.2, and SymPy 1.14.0, with one requested linear-algebra thread and fresh worker processes.
Each completed case reports the median of three analysis-and-check timings, excluding imports and case construction; peak resident memory includes the entire worker.
Symbolic caches were cleared before each repetition.
Each worker had a 20-second total timeout and a polled resident set size (RSS) watchdog with a 768\,MiB stopping threshold.
\Cref{tab:scalability-probes} summarizes the 39 attempted cases: 33 completed, 4 crossed the RSS threshold, and 2 timed out.
A completed case finished all repetitions and result checks; the compiler filter can complete while declining to certify equivalence.
The structural, Clifford-propagation, local-certificate, and compiler-filter families share the register-width grid $n\in\{64,256,1024,4096,8192\}$.
The non-Clifford family instead fixes $n=64$ and varies the independent-rotation count $r\in\{4,8,12,24\}$; the statevector and dense families retain smaller width grids because they allocate exponentially sized representations.
These grids measure probe coverage, not maximum supported widths, and timings on different tasks do not establish execution savings.

The circuit families are deliberately simple, with qubits numbered from $1$ to $n$.
Let $C_n$ denote the ordered chain of $n-1$ gates $\mathrm{CX}_{1,2},\mathrm{CX}_{2,3},\ldots,\mathrm{CX}_{n-1,n}$, listed in execution order.
The structural-cone and Clifford-propagation rows use $C_n$ with output observable $Z_n$, whose backward image is $Z_1\cdots Z_n$.
The local-proof, compiler-filter, and dense-symbolic rows compare the same equivalent pair: a common prefix consisting of $C_n$ followed by $H_1$, then either $\mathrm{SWAP}_{1,n}$ or its three-gate implementation $\mathrm{CX}_{1,n},\mathrm{CX}_{n,1},\mathrm{CX}_{1,n}$.
These three rows time different computations on that pair: a proof after cancelling the common context, the compiler-plus-proof filter, and a full-register symbolic comparison without context cancellation.

The non-Clifford family applies $R_z(\pi/4)$ independently to the first $r$ qubits of a 64-qubit register and propagates $X_1\cdots X_r$ backward, with identity on the remaining qubits.
Its exact expansion contains $2^r$ terms, so this sweep measures term growth at fixed register width.
The statevector-planner row prepares $\ket{\mathrm{GHZ}_n}$ from $\ket{0}^{\otimes n}$ using $H_1$ followed by $C_n$, inserts $R_z(0.4)$ on qubit 1 after preparation, and evaluates the mutant against $X_1\cdots X_n$.
The dense-numerical row uses only three gates, $H_1$, $R_z(0.31)$ on qubit 1, and $\mathrm{CX}_{1,n}$ in that order, with output observable $Z_n$; it materializes the full-register matrix of the backward image $X_1Z_n$.
Thus this last row exercises dense matrix growth even though only two qubits participate in gates.

\begin{table}[!htbp]
\caption{All declared synthetic feasibility cases, grouped by computation and outcome.
$n$ counts register qubits, including idle qubits; $r$ counts independent rotations in the non-Clifford sweep, and a dash in these columns means that parameter does not apply.
Bold $n$ (or $r$ for the rotation sweep) identifies the completed case supplying the median time in seconds and peak worker memory in MiB.
For stopped cases, MiB is the largest sampled RSS rather than a completed worker's peak; these cases have no completed analysis timing.}
\label{tab:scalability-probes}
\centering\small
\renewcommand{\arraystretch}{1.12}
\begin{tabularx}{\linewidth}{@{}>{\raggedright\arraybackslash}p{0.24\linewidth}>{\raggedright\arraybackslash}Xllrr@{}}
\toprule
\textbf{Computation} & $n$ & $r$ & \textbf{Outcome} & \textbf{Seconds} & \textbf{MiB} \\
\midrule
Structural cone & 64, 256, 1024, \textbf{4096} & --- & Completed & $1.26\times10^{-1}$ & 676.0 \\
 & 8192 & --- & Memory stop & --- & 1028.8 \\
\addlinespace
Clifford Pauli propagation & 64, 256, 1024, 4096, \textbf{8192} & --- & Completed & $3.55\times10^{+0}$ & 124.5 \\
\addlinespace
Exact local channel proof & 64, 256, 1024, 4096, \textbf{8192} & --- & Completed & $3.30\times10^{-2}$ & 126.9 \\
\addlinespace
Compiler filter & 64, 256, 1024, 4096, \textbf{8192} & --- & Completed & $1.61\times10^{-1}$ & 213.9 \\
\addlinespace
Non-Clifford propagation & 64 & 4, 8, 12, \textbf{24} & Completed & $1.63\times10^{-1}$ & 113.0 \\
\addlinespace
Statevector planner & 12, 16, 20, \textbf{22} & --- & Completed & $5.57\times10^{-1}$ & 367.7 \\
 & 24 & --- & Memory stop & --- & 879.0 \\
\addlinespace
Dense numerical observable & 4, 6, 8, \textbf{10} & --- & Completed & $2.36\times10^{-1}$ & 285.3 \\
 & 12 & --- & Memory stop & --- & 1263.5 \\
\addlinespace
Dense symbolic channel proof & 4, \textbf{6} & --- & Completed & $1.21\times10^{+0}$ & 112.2 \\
 & 8 & --- & Timed out & --- & 116.5 \\
 & 10 & --- & Timed out & --- & 199.3 \\
 & 12 & --- & Memory stop & --- & 1138.2 \\
\bottomrule
\end{tabularx}
\end{table}

The 8,192-qubit Pauli and local-proof cases invoke sparse library helpers without a state input; they do not demonstrate an end-to-end planner accepting such a register.
The 22-qubit planner case supplies a dense basis-state vector, raises the state limit explicitly, disables equivalence filtering and pilots, and computes one known GHZ response with dense operator fallback disabled.

The stored-cone computation completed through 4,096 qubits with a peak of 676.0\,MiB.
The four memory-stopped cases exceeded the 768\,MiB threshold, with sampled RSS values of 879.0--1,263.5\,MiB.
The watchdog polls approximately every 0.2 seconds, allowing overshoot; sampled maxima are lower bounds on the worker peaks.
These observations establish threshold crossings, not operating-system out-of-memory failures or the memory required to finish.
Full-register symbolic comparisons completed at six qubits after raising the default limit; eight- and ten-qubit workers exceeded the total budget, which includes setup and all three repetitions rather than timing one call alone.

The compiler filter completed at every shared width through 8,192 qubits but declined the known-equivalent SWAP/three-CX pair even though the separate local proof succeeded, illustrating the compiled-match precondition's incompleteness.
For independent non-Clifford rotations, a 256-term cap kept the computation small while discarding enough coefficient mass to make the general gap-error allowance uninformative for the selected Pauli test.
The calculated shadow budget likewise rises from 13,248 shots at weight one to 357,696 at weight four for 256 targets, $\zeta=0.25$, and per-call failure probability $0.05$; these are formula evaluations, not collected measurements.
Together, these checks distinguish compact representation, successful certification, useful precision, and end-to-end state access; none can be inferred from register width alone.

\subsection{Choosing Rules for a Circuit}
\label{sec:rule-selection}

Practical use does not require every analysis rule to run on every circuit.
The applicable subset depends on the oracle, gate structure, mutation support, observable representation, available state representation, and resource budget.
The complexity bounds in \cref{sec:rule-scalability} identify rules whose computations remain tractable under suitable structure: scalar threshold ceilings, structural dependency checks, sparse Clifford propagation, and exact comparisons on small changed supports.
The completed cases in \cref{tab:scalability-probes}, including Pauli propagation and local channel proofs at 8,192 qubits, demonstrate that several implemented computations can operate well beyond the default four-qubit symbolic limit.
Their usefulness on a particular circuit still depends on the stated structural assumptions and on whether they supply a decisive certificate or bound.

A planner should omit full-register dense or symbolic calculations when their exponentially sized representations exceed its time or memory budget.
For a non-Clifford suffix, it should bound Pauli-term growth and retain the discarded coefficient mass in the error allowance; an allowance too large to resolve the threshold cannot justify a decision.
Pilots may likewise be disabled when observable weight makes their shot requirement exceed the budget.
State-dependent rules require an affordable way to obtain the needed expectations; a Clifford suffix alone does not make an arbitrary input state tractable.
Compiler filtering is also optional when its cost cannot be justified by reuse across candidates or tests.

The reference implementation provides equivalence and pilot switches, width limits for local, dense, symbolic, and state calculations, and a retained-term cap; it does not automatically choose these settings from circuit characteristics.
An evaluation should fix the selection policy and resource limits before inspecting mutation effects and report which analyses were omitted or stopped and why.
Disabling or exhausting a check supplies no evidence of equivalence or test survival.
The planner must continue with the remaining applicable rules and, if none decides the outcome, retain the candidate's population weight as unresolved under \cref{thm:planner-soundness}.

\subsection{Empirical Validation Still Required}
\label{sec:required-validation}

Empirical evaluation must distinguish how often the survival mechanisms occur from whether using them improves testing.
An exact census would fix the programs, mutant populations, test suites, and classification rules before inspecting mutation effects, then report how often each mechanism explains survival in that population.
Results must be reported separately for each combination of mutant population and test suite, keeping tool-generated mutants distinct from author-defined mutants and native assertions distinct from companion differential tests.
Algebraically certified zeros, high-precision numerical compatibility with zero, and tolerance-based zero classifications must remain distinguishable.
Such a census describes the declared population; execution savings and improved test adequacy require separate evaluations.
Execution-planning and pilot-estimation benefits require comparisons under a common resource budget that includes preprocessing, measurement, and execution costs.
Test augmentation must be evaluated on mutants not used to construct the added tests.

\subsection{Model and Interpretation Limits}
\label{sec:interpretation-limits}

Seven boundaries govern interpretation.
First, the formal results assume finite-dimensional noiseless unitary circuits and do not cover mid-circuit measurement, reset, leakage, or device noise.
Second, $\Delta_m$ measures a difference of means; for observables with more than two outcomes, equal means do not imply equal outcome distributions.
The population threshold rule concerns reference-based differential tests, and neither a state difference outside the permitted interface nor a gap between two accepted values establishes a native assertion failure.
Third, for generators not covered by an exact structured identity, first-order sensitivity does not predict the leading nonzero effect when its coefficient vanishes.
Fourth, transpiler equality is sound only under the stated semantic-preservation and reproducibility assumptions, while approximate redundancy and bucket selection can omit distinguishable behavior.
Fifth, the formal decomposition represents a single gate or subcircuit replacement and does not by itself analyze interacting higher-order mutants or the mapping from one source edit to several transpiled gate changes.
Sixth, Euler-coordinate offsets are representation dependent, and a simultaneous parameter path need not retain a fixed effective axis; consequently, the fixed-axis zero-set and maximizing-angle formulas apply to each reduced replacement, and applying them to an entire parameter path requires a fixed-axis family.
Seventh, the trace-norm-optimal observable is an information-theoretic benchmark unless its diagonalizing measurement circuit is available and its depth, routing, calibration, and noise costs are included.

Ideal-equivalence certificates do not establish hardware indistinguishability.
Hardware evaluation additionally requires a noise and drift model and measured calibration, routing, and mitigation costs; the noiseless analysis supplies none of these.

\section{Related Work}
\label{sec:related-work}

Prior work supplies mutation generators, test oracles, equivalence checks, and methods for computing quantum responses.
\Cref{tab:contribution-delta} compares the closest approaches with our decisions about mutant visibility, execution, and score accounting.

\subsection{Mutation Generation and Observed Survival}

Classical mutation testing uses seeded program variants to assess and improve test suites, but equivalent mutants, redundant mutants, cost, and the relationship between mutants and real faults are longstanding validity concerns \cite{jia2011survey,papadakis2019advances,just2014mutants}.
Mutation-based assessments can also produce invalid conclusions when their mutant dependencies and selection choices are not controlled \cite{papadakis2016threats}.

Quantum mutation tools generate gate insertions, deletions, and replacements, with some also changing parameters, qubit locations, controls, initialization, or measurements \cite{fortunato2022qmutpy_tool,fortunato2022mutation_qiskit,mendiluze2021muskit,garcia2024qcrmut,polo2025mutation_based,shao2026quanforge}.
Our earlier unit-testing study uses small-angle rotational mutants and compares several test oracles \cite{miranskyy2025feasibility}.
\Cref{app:mutation-tool-coverage} records the individual operator families, test rules, and applicability of our local-unitary analysis; the fraction of generated mutants covered remains unmeasured.

QMutBench and the empirical mutant-survivability study provide candidate validation corpora \cite{mendiluze2026qmutbench,mendiluze2025empirical}, while Bugs4Q and studies of quantum-platform bugs provide evidence for investigating fault plausibility \cite{zhao2023bugs4q,paltenghi2022platformbugs}.

\subsection{Equivalence and Test-Relative Detection}

Classical mutation research already distinguishes equivalent mutants from \emph{stubborn} non-equivalent mutants that survive reasonably thorough testing \cite{yao2014stubborn}.
Our contribution is a circuit-level explanation of some such survivors, including changes hidden by the input or measurement and cancellations at a particular strength; these mechanisms need not be mutually exclusive.
Coverage-based work uses changes in execution coverage to help classify equivalent mutants \cite{schuler2013equivalent}.
That empirical indication differs from a proof of semantic equivalence or an exact zero-effect certificate for a specified quantum test.
The structural backward light cone also follows the program-slicing idea of tracing the dependencies of selected outputs \cite{weiser1984slicing}; exact operator and state calculations refine that dependency test.

Compiler-based equivalent-mutant detection identifies some equivalent or duplicate mutants through optimized representations under explicit compiler assumptions \cite{papadakis2015tce}.
Quantum transpiler-based equivalence, decision diagrams, and ZX-calculus supply complementary semantic checks \cite{campos2026tbe,viamontes2003improving,miller2006qmdd,coecke2008interacting,coecke2011interacting,burgholzer2021advanced,peham2022zx}.
ZX-calculus represents quantum operations as diagrams and uses semantics-preserving rewrite rules to simplify or compare them.
Our analysis also asks why a mutant that changes program behavior can remain invisible to a particular input and measurement.
Information-theoretic testability research connects such fault masking and failed propagation to information loss \cite{clark2012squeeziness,androutsopoulos2014conditional,patel2022testability}.
Dynamic subsumption and dominator-mutant research compare detection across a test collection, which cannot be replaced by a single scalar mutant score \cite{ammann2014minimal,kurtz2016dominator}.

Predictive mutation testing learns from mutant and test features to predict kill or survival without executing each new mutant, trading prediction error against execution cost \cite{zhang2016predictive}.
The present planner addresses a related cost question through exact certificates or bounds with stated simultaneous coverage, retaining unresolved outcomes when neither decision is supported.
Its accounting guarantee therefore depends on valid evidence for the declared oracle, rather than on a classifier's empirical prediction accuracy.

Mutation-based search, QuSBT, and Quito generate tests using mutation, search, or coverage objectives, while QuraTest varies input magnitude, phase, and entanglement \cite{wang2022muttg,wang2022qusbt,wang2021quito,ye2023quratest}.
These methods complement our diagnosis of whether a surviving mutant requires a different input or measurement.
Statistical and projection-based assertions, property-based testing, and metamorphic testing provide other oracle contracts \cite{huang2019statistical,li2020proq,honarvar2020property,paltenghi2023morphq}.
QDiff uses differential comparisons and semantics-preserving transformations to test quantum software stacks \cite{wang2021qdiff}.
Our reference-based oracle measures the effect of a declared program mutation; a difference from the original does not by itself violate a native acceptance assertion.

\subsection{Established Mathematical Methods}

The analysis uses standard results in trace duality, circuit propagation, stabilizer simulation, and quantum statistical decision theory \cite{watrous2018theory,gottesman1999heisenberg,helstrom1976quantum,holevo1973statistical}.
Two-eigenvalue response identities, parameter-shift rules, and Fourier reconstruction supply established methods for differentiating or reconstructing expectations \cite{mitarai2018learning,schuld2019evaluating,wierichs2022general}.
Three-evaluation sinusoidal reconstruction also supports response reuse within a fixed-axis family and provides a competing strategy for estimating mutation effects, as reflected in \cref{tab:contribution-delta}.
Our specializations use these responses to identify mutation strengths with zero or maximal gap for a declared test, while \cref{thm:planner-soundness} composes semantic exclusions, mutant--test evidence, and population weights into a mutation-score guarantee.
Randomized benchmarking and gate-set tomography characterize implemented hardware errors \cite{magesan2011rb,nielsen2021gst}; our visibility analysis concerns ideal circuits within the broader quantum-software-testing problem \cite{ali2022qse}.
Randomized benchmarking estimates average gate errors from the decay of success probabilities as randomized gate sequences grow longer.
Gate-set tomography jointly estimates models of the gates, state preparation, and measurement from experimental data, without assuming perfectly calibrated preparation or readout.

\begin{table}[!htbp]
\caption{Established methods and their testing-specific use in this framework.}
\label{tab:contribution-delta}
\centering\small
\renewcommand{\arraystretch}{1.12}
\begin{tabularx}{\linewidth}{@{}>{\raggedright\arraybackslash}p{0.20\linewidth}*{2}{>{\raggedright\arraybackslash}X}@{}}
\toprule
\textbf{Established method} & \textbf{Standard result or prior capability} & \textbf{Testing-specific use} \\
\midrule
Compiler comparison: transpiler-based equivalence (TBE) \cite{campos2026tbe}
& Certifies some mutants as equivalent when a semantics-preserving transpiler produces identical compiled circuits under fixed settings.
& Analyzes visibility to a specified test after equivalence filtering; retains unresolved cases and population weights in mutation-score bounds. \\
QUMUG \cite{mendiluze2026qumug}
& Searches for gate mutations that survive a supplied test suite.
& Identifies exactly which strengths in a declared fixed-axis family give zero gap, including cancellation; fixes the adequacy population independently of the suite. \\
Quantum Squeeziness \cite{bensoussan2026squeeziness}
& Characterizes information loss under state perturbations and reports an empirical association with mutation scores.
& Computes the signed expectation change for a specified mutant and test, distinguishes causes of zero gap, and applies the detection threshold with explicit evidence requirements. \\
Rotosolve / NFT \cite{ostaszewski2021structure,nakanishi2020sequential}
& Reconstructs and optimizes a sinusoidal expectation using three evaluations.
& Uses the same response to find every strength that preserves the original expectation and the largest absolute change; keeps the adequacy population fixed. \\
Trace distance and Helstrom discrimination \cite{watrous2018theory}
& Bound changes under measurement and characterize optimal state discrimination.
& Give angle-only expectation and TV skip rules, and an unrestricted measurement benchmark for a fixed input. \\
Stabilizer simulation \cite{aaronson2004stabilizer}
& Computes Pauli propagation and stabilizer-state expectations efficiently.
& Classifies the three possible gap curves under the stated Pauli and Clifford assumptions. \\
Simultaneous coverage and weighted counting
& A union bound controls failures; valid indicators bound a weighted total.
& Compose oracle-specific certificates into a mutation-score interval without dropping survivors or unresolved weight. \\
\bottomrule
\end{tabularx}
\end{table}

\FloatBarrier
\section{Conclusion}
\label{sec:conclusion}

The engineering problem addressed here is how to act on a surviving quantum mutant without confusing program equivalence, test-relative invisibility, and insufficient evidence.
The three objectives stated in the introduction connect this problem to concrete testing decisions.

We addressed the first objective, diagnosing surviving mutants, by combining semantic checks, circuit-dependency analysis, and exact local-effect formulas for supported unitary replacements.
These checks distinguish semantic equivalence, input-dependent invisibility, measurement-dependent invisibility, and finite-strength cancellation, while keeping insufficient evidence separate from a certified zero effect.
For the supported fixed-axis rotations, the response classification identifies every zero-gap strength; under the stated stabilizer assumptions, the response curves can be classified in polynomial time.
The practical consequence is an explanation of survival that separates a proved zero gap from a nonzero effect below the detection threshold or a gap that remains unresolved.

We addressed the second objective, planning executions with sound reporting, by specifying which decisions certificates, bounds, and estimates justify.
Under its stated assumptions, the planning guarantee preserves resolved classifications and bounds the mutation score when analysis or execution is incomplete, retaining the population weights of survivors and unresolved mutants.
Skipping an execution therefore does not remove a non-equivalent mutant from adequacy accounting.
The resource comparison includes preprocessing, pilot measurements, and execution, together with simpler estimation alternatives, so an exact formula or reusable pilot does not by itself establish a saving.

We addressed the third objective, guiding test-oracle improvement, by using the state difference to determine whether any observable can expose a mutant on the fixed input.
The unrestricted optimal observable supplies a mathematical benchmark, while restricted measurement screens and explicit measurement circuits connect that benchmark to permitted tests and their implementation costs.
In the running sign-error example, an allowed basis change exposes a state difference that the original output-bit assertion cannot observe.
Under state-transforming semantics, the mutant remains a survivor in the original suite's adequacy denominator; under a fixed computational-basis interface on that input, the programs are observationally equivalent and the added measurement changes the contract being assessed.
Test improvement must therefore respect the declared interface, meet the chosen detection rule, and be evaluated independently of the mutants used to construct it.

Together, these results provide a basis for deciding what to execute, what to change in a test, and what remains unknown.
The contribution is the testing-specific analysis and composed guarantees built from established mathematical tools, supported by proofs, worked examples, and a reference implementation under ideal unitary semantics.
How often the mechanisms explain survival, whether planning reduces total cost, and whether augmentation improves adequacy on independent mutants remain questions for empirical validation.

\section*{Acknowledgments}

This work was supported by the Natural Sciences and Engineering Research Council of Canada (NSERC) under grants RGPIN-2022-03886 and ALLRP 613846-25.
It was also funded by the FCT~-~Fundação para a Ciência e a Tecnologia, I.P., under the projects ref.\ 2025.00002.QUANTUMVLAB.DEUCALION and 2025.08074.CPCA.A1, and under the LASIGE Research Unit, ref.\ UID/00408/2025, DOI \href{https://doi.org/10.54499/UID/00408/2025}{10.54499/UID/00408/2025}.

\printbibliography

\clearpage
\appendix

\crefalias{section}{appendix}
\crefalias{subsection}{appendix}
\crefalias{subsubsection}{appendix}
\crefname{appendix}{Appendix}{Appendices}
\Crefname{appendix}{Appendix}{Appendices}

\section{Detailed Statistical and Local Framework}
\label{app:detailed-framework}
This appendix uses the circuit, oracle, and local mutation notation of \cref{sec:framework}.
It supplies the statistical bounds, local derivations, and implementation algorithms supporting the main text.

\subsection{Planning Soundness: Assumptions and Proof}
\label{sec:planner-soundness-proof}

\Cref{thm:planner-soundness} composes valid evidence into population-level classifications and score bounds under the following conditions.
\begin{enumerate}[label=(\roman*)]
\item Semantic exclusions are sound for the declared state-transforming semantics on $\mathcal D_{\mathrm{in}}$, use a prespecified policy, and leave the target population $\mathcal M^\star$ consisting of non-equivalent mutants.
Any transpiler certificate therefore requires semantic preservation and equality of the complete compiled representation under the pinned settings.
\item Every certificate proving an exactly zero gap is valid, and every deterministic gap interval contains the true gap, including propagation, perturbative-remainder, and numerical errors.
Ordinary floating-point evaluation of an ideal formula alone does not meet this condition at an uncertified threshold comparison.
\item Pilot target sets and error allocations are prespecified with total failure probability at most $\alpha_{\mathrm{pilot}}$ across all sites, tests, and candidates.
Reused samples cover the union of their declared targets simultaneously.
Observed-test intervals have total failure probability at most $\alpha_{\mathrm{test}}$, using fixed allocations and sample sizes or a separately valid sequential procedure.
Data-dependent execution choices preserve these guarantees, for example by using fresh observed samples with conditional coverage given the planning data.
\item Representative classes, if used, have certified identical population detection indicators for every test in $\mathcal T$, and retain their original positive integer multiplicities $w_c$.
Singleton classes are always permitted.
For example, semantics-preserving pinned transpilation can certify equality between two mutants, which implies equal detection indicators across the suite.
The reference implementation does not construct such mutant--mutant classes; approximate buckets do not establish this condition.
\end{enumerate}
Apply the threshold rule only to evidence meeting these conditions.
An exact gap supplies equal lower and upper bounds; an exact zero is a below-threshold certificate because the threshold is positive.
For statistical intervals, the certified classification carries the stated simultaneous confidence guarantee.

\begin{proof}[Proof of \Cref{thm:planner-soundness}]
Take the intersection of every declared pilot-coverage and observed-test-coverage event.
The union bound gives probability at least $1-\alpha_{\mathrm{pilot}}-\alpha_{\mathrm{test}}$ without requiring independence between intervals.
Conditional coverage for fresh observed samples implies the required unconditional bound even when the planner chooses which declared comparisons to execute.
On this event all statistical and deterministic intervals contain their true gaps, so an upper endpoint below $\gamma_r$ implies $\Delta_m(\rho_r,O_r)<\gamma_r$, while a lower endpoint at least $\gamma_r$ implies detection.
The same implications hold for exact gaps and certified zeros.
Thus one detected pair establishes a detected mutant, whereas survival requires a below-threshold conclusion for every test.
A budget certificate concerns independent-copy discrimination and supplies neither implication.
Sound semantic exclusions remove only mutants outside the declared non-equivalent population, and certified class indicators permit expansion back to every original member using $w_c$.
The detected classes therefore contribute exactly $W_{\mathrm D}$ detections, survivor classes contribute zero, and unresolved classes contribute between zero and $W_{\mathrm U}$.
Dividing by $W$ proves \cref{eq:planner-score-bounds}.
\end{proof}

The guarantee requires neither execution completeness nor optimal scheduling, and specifies no particular observed kill count.
Equality of detection indicators is sufficient for the population score; transferring full outcome distributions or sample-level behavior requires the stronger corresponding equivalence.
If semantic equivalence remains unresolved, one may report bounds for the retained candidate population, but cannot silently identify its denominator with the fully non-equivalent population in \cref{eq:suite-mutation-score}.
An empty eligible population has no defined mutation score under that equation.

\subsection{Hoeffding-Based Estimation Efficiency}
\label{sec:estimation-efficiency}

In the noiseless expectation-estimation protocol considered here, the statistical uncertainty comes from finite sampling.
Write $R_O=\lambda_{\max}(O)-\lambda_{\min}(O)\geq0$ for the spectral range of $O$.
If $R_O=0$, then $O$ is proportional to the identity on the modeled register, $\Delta_m=0$ for every unitary mutant, and the observable should be discarded as non-informative.
Suppose that the original and mutant means are estimated from $n_{\mathrm{shots}}$ shots of each circuit, with all $2n_{\mathrm{shots}}$ observations independent within this comparison.
Write $d_m=\mu-\mu_m$ and $\widehat d_m=\widehat\mu-\widehat\mu_m$.
The signed estimator is a sum of $2n_{\mathrm{shots}}$ independent scaled observations, each of range $R_O/n_{\mathrm{shots}}$, so the sum of squared ranges is $2R_O^2/n_{\mathrm{shots}}$.
Applying Hoeffding's bounded-variable inequality directly to this sum and combining its two tails gives \cite[Theorem~2 and Eq.~(2.6)]{hoeffding1963probability}
\begin{equation}
\label{eq:two-sample-hoeffding}
\Pr\!\left(|\widehat d_m-d_m|\geq t\right)\leq2\exp\!\left(-\frac{n_{\mathrm{shots}}t^2}{R_O^2}\right),\qquad t>0.
\end{equation}
For a fixed-separation decision problem in which the alternative satisfies $|d_m|\geq\Delta_{\min}>0$, taking $t=\Delta_{\min}/2$ shows that
\begin{equation*}
n_{\mathrm{shots}}\geq\frac{4R_O^2}{\Delta_{\min}^2}\log\!\left(\frac{2}{\alpha_{\mathrm{test},m}}\right)
\end{equation*}
shots per circuit suffice to separate equality from that alternative with error probability at most $\alpha_{\mathrm{test},m}$ by thresholding $|\widehat d_m|$ at $\Delta_{\min}/2$.
This distribution-independent upper bound gives a sufficient shot count; establishing a lower bound for particular distributions requires a separate argument.
Variance-aware estimators or direct hypothesis tests can require fewer samples, especially when one outcome distribution is nearly deterministic.

\subsubsection{Classification at a Practical Threshold}
\label{sec:finite-shot-threshold}
For $n_{\mathrm{shots}}$ independent shots of each circuit and prespecified failure probability $\alpha_{\mathrm{test},m}\in(0,1)$, put $\widehat\Delta_m=|\widehat\mu-\widehat\mu_m|$ and
\begin{equation}
\label{eq:finite-shot-gap-interval}
r_m=R_O\sqrt{\frac{\log(2/\alpha_{\mathrm{test},m})}{n_{\mathrm{shots}}}},
\qquad
I_m^{\mathrm{stat}}=[\max(0,\widehat\Delta_m-r_m),\;\min(R_O,\widehat\Delta_m+r_m)].
\end{equation}
\Cref{eq:two-sample-hoeffding} and the reverse triangle inequality give $\Pr(\Delta_m\in I_m^{\mathrm{stat}})\geq1-\alpha_{\mathrm{test},m}$.
Declare below threshold if the upper endpoint is strictly below $\gamma$, detected if the lower endpoint reaches $\gamma$, and unresolved otherwise.
This is a fixed-shot decision; repeatedly checking the same fixed-time interval requires a separate sequential-validity argument or prespecified error spending.
Away from the boundary, sufficient classification cost scales with $R_O^2/|\Delta_m-\gamma|^2$; the $R_O^2/\Delta_m^2$ scaling concerns resolution relative to zero.
There is no finite fixed-shot procedure with uniformly nontrivial error control that resolves every gap arbitrarily close to both sides of $\gamma$.
Consequently, detecting a prespecified difference from zero, classifying a practical threshold, and ranking by $E_m$ remain different tasks.
For multiple decisions, choose allocations with $\sum_m\alpha_{\mathrm{test},m}\leq\alpha_{\mathrm{test}}$; the union bound does not require independence between the resulting decisions.
Original-circuit samples may consequently be reused across mutants, provided that the samples within each original--mutant comparison remain independent and the target comparisons and allocations are prespecified.
The reference function \texttt{classify\_fixed\_shot} implements this rule from equal-length outcome samples and known outcome bounds; its observed classifications are separate from planner statuses.

Let $C(\rho,O,m)>0$ be the cost of one paired comparison, meaning one independent sample from the original circuit and one from mutant $m$, including state preparation, circuit execution, and measurement overhead.
One concrete additive baseline is
\begin{equation}
\label{eq:paired-cost-model}
C(\rho,O,m)
=
c_{\mathrm{prep}}^{(0)}+c_{\mathrm{exec}}^{(0)}+c_{\mathrm{meas}}^{(0)}
+c_{\mathrm{prep}}^{(m)}+c_{\mathrm{exec}}^{(m)}+c_{\mathrm{meas}}^{(m)}.
\end{equation}
The examples below set these components equal across mutants, whereas an empirical cost-efficiency claim must measure or otherwise prespecify their variation.
For equal shot counts and a fixed confidence level, define the optional estimation-efficiency score
\begin{equation}
\label{eq:core-efficiency}
E_m(\rho,O)=\frac{\Delta_m(\rho,O)^2}{R_O^2C(\rho,O,m)}=\frac{\delta_m^2}{C(\rho,O,m)},
\qquad \delta_m=\Delta_m/R_O.
\end{equation}
This score ranks signal relative to a sufficient mean-estimation cost; it does not decide a practical detection threshold or measure suite adequacy.
The range normalization is invariant under observable rescaling; unequal shot allocation or reused reference estimates require corresponding cost accounting.

The quadratic dependence of $E_m$ on the gap follows from applying a uniform additive-error estimation bound for bounded observables.
Identifying the detectable signal size with
$$
\epsilon \sim \Delta_m(\rho,O),
$$
gives the sufficient worst-case shot bound
$$
N_m^{\mathrm{mean}}(\rho,O;\alpha_{\mathrm{test},m})
=
\mathcal{O}\!\left(
\frac{R_O^2\log(2/\alpha_{\mathrm{test},m})}{\Delta_m(\rho,O)^2}
\right),
$$
where $N_m^{\mathrm{mean}}$ denotes the number of paired comparisons used to resolve a prespecified nonzero separation by this variance-blind protocol.
The substitution $\Delta_{\min}\leftarrow\Delta_m$ changes the decision problem from a common prespecified alternative to a mutant-specific retrospective resolution calculation.
Therefore, comparing $E_m$ across mutants compares their signal-to-cost values under one protocol; comparing the power of identical tests against one fixed alternative requires retaining a common threshold $\Delta_{\min}$.
When the scientific question is whether every mutant clears the same practical detection boundary, the analysis should retain that common boundary and report power or confidence relative to it.

The per-decision error level also does not automatically extend to a collection of tests and mutants.
For $N_{\mathrm{test}}$ tests and $M$ mutants, the union bound controls family-wise error at $\alpha_{\mathrm{test}}$ if each of the $N_{\mathrm{test}}M$ decisions is controlled at $\alpha_{\mathrm{test}}/(N_{\mathrm{test}}M)$, regardless of dependence.
Reusing original-circuit shots induces correlation among comparisons, so less conservative simultaneous inference must model that shared randomness explicitly.

Thus, $E_m$ is a standardized worst-case ranking score for expectation estimation; optimal hypothesis-testing costs depend on the outcome regime.\footnote{
This observable-gap scaling should be distinguished from fidelity-based state-comparison tests, such as the inverse or swap test.
For pure-state comparisons, the inverse test reverses the reference preparation on the tested output and checks for return to the initial state; the swap test uses an auxiliary control qubit and controlled exchanges of two registers to estimate their state overlap.
Here $F$ denotes squared fidelity, a similarity measure ranging from zero for states with orthogonal supports to one for identical states; for pure states it is the squared magnitude of their inner product.
When the comparison has a nonzero failure probability, the shots needed to observe that failure with constant confidence scale as $\mathcal{O}(1/(1-F))$ near $F=1$ \cite{miranskyy2025feasibility,miranskyy2025cost}.
The two views complement one another by quantifying distinct aspects of state and observable differences.
For a small coherent perturbation with strength $\eta$, the leading nonzero infidelity is typically quadratic: $1-F = \eta^2\,\mathrm{Var}_\psi(H) + \mathcal{O}(\eta^4)$ for pure states under unitary perturbations, unless the variance term vanishes.
Indeed, expanding $F=|\bra{\psi}e^{-i\eta H}\ket{\psi}|^2$ gives the quadratic coefficient $\bra{\psi}H^2\ket{\psi}-(\bra{\psi}H\ket{\psi})^2$ in $1-F$, and $F(-\eta)=F(\eta)$ removes odd powers.
When the chosen observable is first-order visible, $\Tr([H_m,O_j]\tau_j)\neq 0$, the gap is $\Theta(|\eta|)$, and the uniform additive-mean bound is $\mathcal{O}(1/\eta^2)$.
For a general generator, when the first-order coefficient vanishes and no exact structured identity is available, one can conclude from the perturbative coefficient only that $\Delta_m=\mathcal{O}(\!\eta^2)$; if the quadratic coefficient is nonzero, the same uniform bound becomes $\mathcal{O}(1/\eta^4)$.
For nontrivial involutory-axis rotations, \cref{cor:test-independent-rotation-ceiling} sharpens this comparison: an optimized bounded-observable test matches the independent-copy lower bound only in its small-angle exponent; the constants and decision objectives remain separate.
Outside that structured class, no unconditional asymptotic ordering follows from these bounds alone, because direct or variance-aware detection can be faster and because either the fidelity or observable coefficient may vanish.
The light-cone and commutator analysis below characterizes when first-order visibility holds.
}

\subsection{Exact Local Replacement Form}
\label{sec:exact-local-replacement}

Before specializing to small perturbations, an arbitrary unitary gate or subcircuit replacement can be expressed exactly at the mutation site.
\begin{proof}[Proof of \Cref{thm:core-exact-local}]
The argument specializes unitary composition and trace cyclicity to the mutation site.
Embedding both operations into the union of their supports gives $\widetilde G_j=W_mG_j$ and $U_m=U_{>j}W_mG_jU_{<j}$.
With the post-gate state $\tau_j$ and propagated observable $O_j$ defined in the main text,
\begin{equation*}
\mu=\Tr(O_j\tau_j),
\qquad
\mu_m=\Tr(O_jW_m\tau_jW_m^\dagger)=\Tr(W_m^\dagger O_jW_m\tau_j).
\end{equation*}
Their difference proves \cref{eq:core-exact-local} with the declared left-multiplication convention.
For the Hermitian operator $Q_m=O_j-W_m^\dagger O_jW_m$, Schatten norm duality bounds $|\Tr(Q_m\tau)|$ by $\|Q_m\|_\infty$ for every density operator \cite[Eqs.~(1.173)--(1.174)]{watrous2018theory}.
A pure eigenstate for an eigenvalue of largest absolute value attains this bound.
Thus every state gives zero gap exactly when $Q_m=0$, equivalently $[W_m,O_j]=0$.
For one fixed $\tau_j$, the trace can instead vanish by cancellation even when $Q_m\neq0$.
\end{proof}

\subsection{State-Side Dual and Trace-Distance Interpretation}
\label{sec:state-side-dual-proof}

To prove \cref{prop:state-side-dual}, cyclicity of the trace rewrites the exact local gap as $|\Tr(O_j\Delta\tau_j)|$.
The trace-norm variational characterization \cite[Eqs.~(1.173)--(1.174)]{watrous2018theory}, specialized to the Hermitian operator $\Delta\tau_j$, gives \cref{eq:state-side-trace-norm}; the Hermitian contraction $O_j^\star=\operatorname{sgn}(\Delta\tau_j)$ is an attaining choice.
Because $O_j=U_{>j}^\dagger O U_{>j}$ is a norm-preserving bijection on Hermitian observables, the site optimizer corresponds to the final observable $O^\star=U_{>j}O_j^\star U_{>j}^\dagger$.
The equality $\Delta\tau_j=0$ is equivalent to $W_m\tau_jW_m^\dagger=\tau_j$ and hence to $[W_m,\tau_j]=0$.

For the Pauli-rotation specialization, decomposing $\tau_j$ into its commuting and anticommuting parts gives \cref{eq:state-side-pauli-difference}.
The operator $-i\tau_{j,-}^{(A)}A$ is Hermitian, anticommutes with $\tau_{j,-}^{(A)}$, and has the same square.
Consequently,
\begin{equation*}
(\Delta\tau_j)^2
=
4\sin^2(\eta/2)\left(\tau_{j,-}^{(A)}\right)^2.
\end{equation*}
Thus, $|\Delta\tau_j|=2|\sin(\eta/2)|\,|\tau_{j,-}^{(A)}|$, and taking the trace proves \cref{eq:state-side-pauli-trace-norm}.

Let $\rho_{\mathrm{out}}=U\rho U^\dagger$ and $\rho_{\mathrm{out}}^{(m)}=U_m\rho U_m^\dagger$.
Unitary invariance under the suffix gives
\begin{equation}
\label{eq:mutation-trace-distance}
D_{\mathrm{tr}}\!\left(\rho_{\mathrm{out}},\rho_{\mathrm{out}}^{(m)}\right)
=
\frac{1}{2}\left\|\rho_{\mathrm{out}}-\rho_{\mathrm{out}}^{(m)}\right\|_1
=
\frac{1}{2}\|\Delta\tau_j\|_1.
\end{equation}
For any nonconstant Hermitian observable, set $c_O=(\lambda_{\max}(O)+\lambda_{\min}(O))/2$, so the centered and rescaled observable $2(O_j-c_OI)/R_O$ is feasible in \cref{eq:state-side-trace-norm}.
Because $\Tr(\Delta\tau_j)=0$, applying that variational characterization proves \cref{eq:normalized-gap-trace-distance}.
For nonzero $\Delta\tau_j$, the tracelessness of the state difference ensures that $O_j^\star=\operatorname{sgn}(\Delta\tau_j)$ has spectral range two and attains equality.
If $\Delta\tau_j=0$, any nonconstant Hermitian observable attains equality with both sides zero, without using the constant sign observable in a range-normalized ratio.
The corresponding equal-prior single-shot discrimination success probability is $(1+D_{\mathrm{tr}})/2$ by the Holevo--Helstrom theorem \cites[Sec.~2.1, especially Eqs.~(2.5)--(2.9)]{helstrom1971quantumdetection}[Theorem~3.4 and Eq.~(3.11)]{watrous2018theory}.
If $C(\rho,O,m)=C_m$ is held fixed across observables, the same optimum gives
\begin{equation}
\label{eq:maximum-efficiency-trace-distance}
\sup_{O:R_O>0}E_m(\rho,O)
=
\frac{D_{\mathrm{tr}}(\rho_{\mathrm{out}},\rho_{\mathrm{out}}^{(m)})^2}{C_m}.
\end{equation}

For the involutory-axis rotation in \cref{cor:rotation-state-side}, the anticommuting state component satisfies
\begin{equation*}
\left\|\tau_{j,-}^{(A)}\right\|_1
=
\frac{1}{2}\|\tau_j-A\tau_jA\|_1
\leq
1,
\end{equation*}
because $\tau_j$ and $A\tau_jA$ are density operators and the trace distance between two states is at most one.
Combining this inequality with \cref{eq:state-side-pauli-trace-norm,eq:mutation-trace-distance,eq:normalized-gap-trace-distance} proves \cref{eq:test-independent-rotation-ceiling}.
The bound is tight whenever both eigenspaces of $A$ are nonzero: choose eigenvectors $\ket{a_+}$ and $\ket{a_-}$ with eigenvalues $+1$ and $-1$, respectively, and set $\ket{\psi}=(\ket{a_+}+\ket{a_-})/\sqrt{2}$.
Then $\ket{\psi}$ and $A\ket{\psi}$ are orthogonal, so $\|\tau_j-A\tau_jA\|_1=2$ for $\tau_j=\ket{\psi}\bra{\psi}$, and the Helstrom observable attains the normalized-gap ceiling.
Because the prefix through $G_j$ is unitary, this mutation-site state is attainable by a circuit input.
Together with \cref{eq:maximum-efficiency-trace-distance}, it also proves the tight optional score ceiling $\sup_{\rho,O:R_O>0}E_m(\rho,O)=\sin^2(\eta/2)/C_m$ when the comparison cost $C_m$ is test independent.

Extending the one-copy trace-distance ceiling to the independent-copy lower bound requires an additional fidelity argument.
Use the squared fidelity
\begin{equation*}
F(\sigma,\omega)
=
\left\|\sqrt{\sigma}\sqrt{\omega}\right\|_1^2.
\end{equation*}
A \emph{purification} represents a possibly mixed state as part of a larger pure state: ignoring the added reference system recovers the original density operator.
For any purification $\ket{\Psi}$ of $\tau_j$, the state $(W_m\otimes I)\ket{\Psi}$ purifies $W_m\tau_jW_m^\dagger$.
Uhlmann's purification characterization of fidelity therefore gives \cite[Eqs.~(4) and (23)]{uhlmann1976transition}
\begin{align*}
F(\tau_j,W_m\tau_jW_m^\dagger)
&\geq
\left|\Tr(\tau_jW_m)\right|^2
\\
&=
\cos^2(\eta/2)
+
\sin^2(\eta/2)\left[\Tr(A\tau_j)\right]^2
\\
&\geq
\cos^2(\eta/2).
\end{align*}
The suffix is unitary, so the same lower bound holds for the two output states associated with any one circuit input.
This argument also holds for a joint mutation-site state $\tau_{j,S\mathsf R}$ and an untouched reference $\mathsf R$: replace $W_m$ by $W_m\otimes I_{\mathsf R}$ and use $\Tr[\tau_{j,S\mathsf R}(W_m\otimes I_{\mathsf R})]=\Tr(\tau_{j,S}W_m)$ in the same purification bound.
The one-copy trace-distance ceiling likewise applies on $S\mathsf R$, since $A\otimes I_{\mathsf R}$ is an involution.
References may thus be entangled with their own circuit input, while the joint input--reference states must remain products across executions.
For predetermined inputs indexed by $k=1,\ldots,N_{\mathrm{copy}}$, let $\rho_{\mathrm{out},k}$ and $\rho_{\mathrm{out},k}^{(m)}$ be the corresponding original and mutant output states.
Each pair obeys $F(\rho_{\mathrm{out},k},\rho_{\mathrm{out},k}^{(m)})\geq\cos^2(\eta/2)$ independently of the chosen input.
Fidelity is multiplicative over the resulting product states \cite[Eq.~(31)]{fuchs1999cryptographic}.
With the squared-fidelity convention above, the Fuchs--van de Graaf inequality gives \cite[Theorem~1, Eq.~(44)]{fuchs1999cryptographic}
\begin{equation}
\label{eq:independent-copy-trace-distance}
D_{\mathrm{tr}}\!\left(\bigotimes_{k=1}^{N_{\mathrm{copy}}}\rho_{\mathrm{out},k},\bigotimes_{k=1}^{N_{\mathrm{copy}}}\rho_{\mathrm{out},k}^{(m)}\right)
\leq
\sqrt{1-\cos^{2N_{\mathrm{copy}}}(\eta/2)}.
\end{equation}
If an equal-prior discrimination procedure has success probability at least $1/2+\epsilon$ for fixed $\epsilon\in(0,1/2)$, Helstrom's formula requires the left-hand side of \cref{eq:independent-copy-trace-distance} to be at least $2\epsilon$.
Consequently, for $0<|\eta|<\pi$,
\begin{equation}
\label{eq:independent-copy-lower-bound}
N_{\mathrm{copy}}
\geq
\frac{\log\!\left(1/(1-4\epsilon^2)\right)}
{-\log\!\left(\cos^2(\eta/2)\right)}
=
\Omega\!\left(\frac{1}{\sin^2(\eta/2)}\right)
\qquad
(\eta\to0).
\end{equation}
Equivalently, a budget of $N_{\mathrm{copy}}$ independent copies can attain advantage $\epsilon$ only if $|\sin(\eta/2)|\geq\sqrt{1-(1-4\epsilon^2)^{1/N_{\mathrm{copy}}}}$, which is a test-independent necessary affordability condition.
Setting $\epsilon=1/2-\delta_{\mathrm{disc}}$ gives $1-4\epsilon^2=4\delta_{\mathrm{disc}}(1-\delta_{\mathrm{disc}})$ and proves \cref{eq:practitioner-minimum-strength}; for fixed $\delta_{\mathrm{disc}}$, its leading behavior is $2\sqrt{\log(1/[4\delta_{\mathrm{disc}}(1-\delta_{\mathrm{disc}})])/N_{\mathrm{copy}}}+\mathcal{O}(N_{\mathrm{copy}}^{-3/2})$.
The tight input and Helstrom observable above give a bounded two-outcome expectation test with gap proportional to $|\sin(\eta/2)|$, so its Hoeffding upper bound matches this lower bound in its small-angle dependence.
Here ``predetermined'' means that the input to each execution is fixed independently of outcomes from earlier executions, which is exactly what makes the two hypotheses fixed product states; adaptive input selection is outside this argument.
With inputs entangled across parallel uses, the product-input argument no longer applies.
For an involutory axis with both eigenspaces nonzero and canonical angle $|\eta|\in(0,\pi]$, an entangled multi-use construction achieves perfect discrimination with $\lceil\pi/|\eta|\rceil$ channel uses, yielding $\Theta(1/|\eta|)$ scaling compared with the independent-execution $\Theta(1/\eta^2)$ scaling \cite[pp.~177901-3--177901-4, Eqs.~(16), (20), and (23)--(24), including the arbitrary-dimension discussion]{acin2001statistical}.

\subsubsection{Optional Unknown-Source Discrimination Policy}
\label{sec:optional-discrimination-policy}
Ordinary mutation testing executes known program versions and evaluates an assertion.
A different problem is to identify an unknown original or mutant source, given $N_{\mathrm{copy}}$ predetermined independent output copies and equal prior probabilities.
Under the assumptions of \cref{eq:independent-copy-lower-bound}, error at most $\delta_{\mathrm{disc}}\in(0,1/2)$ requires $|\sin(\eta/2)|\geq\sin(\eta_{\mathrm{IT}}/2)$, where
\begin{equation}
\label{eq:practitioner-minimum-strength}
\eta_{\mathrm{IT}}(N_{\mathrm{copy}},\delta_{\mathrm{disc}})=2\arcsin\!\sqrt{1-\left[4\delta_{\mathrm{disc}}(1-\delta_{\mathrm{disc}})\right]^{1/N_{\mathrm{copy}}}}.
\end{equation}
For $N_{\mathrm{copy}}=10^4$ and $\delta_{\mathrm{disc}}=0.05$, this gives approximately $0.0258$ radians.
This necessary condition neither decides whether a differential gap exceeds $\gamma$ nor rules out classical analysis.
The reference implementation disables this policy by default; \texttt{enable\_discrimination\_budget=True} inserts it after the rotation threshold ceiling and records a separate \emph{skip--below budget floor} disposition.
Its error target is separate from the pilot and observed-test failure allocations, and any skipped pair remains unresolved for population accounting unless other evidence settles it.

\subsection{Exact Structured Rotation and Clifford Forms}
\label{sec:exact-structured-mutations}

The exact local replacement form becomes especially explicit when the relative mutation acts structurally on a Pauli expansion.
First consider
$$
W_m
=
e^{-i\eta A/2},
$$
where $A=A^\dagger$ and $A^2=I$.
Define
\begin{equation*}
\begin{aligned}
O_{j,+}
&=
\frac{O_j+AO_jA}{2},
&
O_{j,-}
&=
\frac{O_j-AO_jA}{2}.
\end{aligned}
\end{equation*}
These components satisfy $[A,O_{j,+}]=0$ and $\{A,O_{j,-}\}=0$, where $\{B,C\}=BC+CB$ denotes the anticommutator.
Using the elementary involutory exponential identity,
$$
e^{i\eta A/2}
=
\cos(\eta/2)I
+
i\sin(\eta/2)A
$$
and multiplying the commuting and anticommuting components separately gives
$$
e^{i\eta A/2}O_je^{-i\eta A/2}
=
O_{j,+}
+
\cos(\eta)O_{j,-}
-
i\sin(\eta)O_{j,-}A.
$$
Subtracting this conjugated observable from $O_j=O_{j,+}+O_{j,-}$ and substituting into \cref{eq:core-exact-local} proves \cref{thm:core-exact-pauli-rotation}.

For $H_m=A/2$, anticommutation gives
$$
[A/2,O_j]
=
AO_{j,-}.
$$
Left multiplication by the unitary $A$ preserves the operator norm, so
$$
S_m^{\mathrm{op}}
=
\|O_{j,-}\|_\infty.
$$
For a unit-norm Pauli observable propagated to a single Pauli string, this value is $1$ when the string anticommutes with $A$ and $0$ otherwise, so it supplies no ranking resolution beyond the commutation test in that regime.
More generally, unitary invariance gives $\|O_j\|_\infty=\|O\|_\infty$ at every site, so backward propagation supplies no site-dependent norm scale.
The commutator proxy can vary only through the orientation and anticommuting content of $O_j$ relative to the local generator, with the norm of $O$ preserved by the suffix.
The corresponding first-order state sensitivity is
$$
S_m^{\mathrm{state}}
=
\left|
i\Tr(O_{j,-}A\tau_j)
\right|.
$$
The general definition writes $i\Tr(AO_{j,-}\tau_j)$ instead; anticommutation changes its sign, which disappears under the modulus.
It is the coefficient obtained by expanding the exact sine term; the full finite-angle expression remains available.

For the Pauli expansion $O_j=\sum_Pc_PP$, retain only the Hermitian Pauli strings that anticommute with the Hermitian Pauli axis $A$.
For every such string, $AP$ is an anti-Hermitian Pauli operator, meaning its conjugate transpose is its negative, so define the Hermitian Pauli string $P^*$ and sign $\epsilon(P)\in\{\pm1\}$ by
$$
AP
=
i\epsilon(P)P^*.
$$
Equivalently,
$$
[A/2,P]
=
i\epsilon(P)P^*.
$$
The exact finite-angle gap is then
\begin{equation}
\label{eq:exact-pauli-signed-expansion}
\Delta_m(\rho,O)
=
\left|
\sum_{P:\{P,A\}=0}
c_P
\left[
(1-\cos\eta)\Tr(P\tau_j)
+
\sin\eta\,\epsilon(P)\Tr(P^*\tau_j)
\right]
\right|.
\end{equation}
This expression uses the same propagated Pauli strings as the first-order coefficient but preserves the exact trigonometric dependence and the additional state expectation that becomes relevant when the sine contribution cancels.

To prove \cref{cor:finite-strength-visibility}, set
$$
a_m
=
\Tr(O_{j,-}\tau_j),
\qquad
b_m
=
i\Tr(O_{j,-}A\tau_j).
$$
Both coefficients are real because $O_{j,-}$ and $iO_{j,-}A$ are Hermitian.
Then \cref{eq:core-exact-pauli-gap} gives
$$
\Delta_m(\eta)
=
\left|(1-\cos\eta)a_m+\sin\eta\,b_m\right|.
$$
For $R_m=\sqrt{a_m^2+b_m^2}>0$, choose $\phi_m=\operatorname{atan2}(a_m,b_m)$.
The identities $R_m\sin\phi_m=a_m$ and $R_m\cos\phi_m=b_m$ give
\begin{equation}
\label{eq:finite-strength-phase-form}
(1-\cos\eta)a_m+\sin\eta\,b_m
=
a_m+R_m\sin(\eta-\phi_m).
\end{equation}
The sine ranges over $[-1,1]$, so the maximum absolute value is $|a_m|+R_m$ and is attained by aligning the sine term with the sign of $a_m$.
When $a_m\neq0$, one maximizing choice is $\eta_\star=\phi_m+\operatorname{sgn}(a_m)\pi/2$ modulo $2\pi$.
When $a_m=0$ and $b_m\neq0$, either sign of the quarter-period choice is maximizing.
For a restricted admissible set of strengths, the maximum follows by evaluating \cref{eq:finite-strength-gap} on that set.
The expression vanishes for every $\eta$ only if its sine and cosine coefficients vanish, which is equivalent to $a_m=b_m=0$.
For the finite-strength zero set, factor the signed shift as
\begin{equation*}
(1-\cos\eta)a_m+\sin\eta\,b_m
=
2\sin(\eta/2)
\left[
a_m\sin(\eta/2)+b_m\cos(\eta/2)
\right].
\end{equation*}
The first factor gives $\eta\equiv0\pmod{2\pi}$.
When $(a_m,b_m)\neq(0,0)$, the second factor gives $\eta\equiv2\operatorname{atan2}(-b_m,a_m)\pmod{2\pi}$, which proves \cref{eq:finite-strength-zero-set} without a division-by-zero convention.
This second residue coincides with zero exactly when $b_m=0$.
In the signed Pauli expansion, the same coefficients are
$$
a_m
=
\sum_{P:\{P,A\}=0}c_P\Tr(P\tau_j),
\qquad
b_m
=
\sum_{P:\{P,A\}=0}c_P\epsilon(P)\Tr(P^*\tau_j).
$$

The scope includes common same-axis mutation operators directly.
Writing $R_A(\theta)=e^{-i\theta A/2}$ gives
$$
R_A(\theta')R_A(\theta)^\dagger
=
e^{-i(\theta'-\theta)A/2}.
$$
Gate removal has $\theta'=0$, sign reversal has $\theta'=-\theta$, and a fixed offset has $\theta'=\theta+\eta$.
No step of this derivation assumes that $A$ has weight one.

Now suppose that $W_m$ is Clifford on the mutated subsystem.
The normalizer property used here is the standard Clifford action on Pauli operators \cite[Sec.~3 and Table~1]{gottesman1999heisenberg}.
For every Pauli string $P$, Clifford conjugation yields
$$
W_m^\dagger P W_m
=
s_PP'_P,
$$
where $s_P\in\{-1,1\}$ and $P'_P$ is a Pauli string.
Consequently, for $O_j=\sum_Pc_PP$,
\begin{equation}
\label{eq:exact-clifford-observable-difference}
O_j-W_m^\dagger O_jW_m
=
\sum_Pc_P(P-s_PP'_P).
\end{equation}
Like terms should be combined before either exact expectation evaluation or pilot scoring.
If $W_m=e^{i\phi}B$ for a Pauli string $B$, define
$$
O_{j,-}^{(B)}
=
\frac{O_j-BO_jB}{2}.
$$
Then \cref{eq:exact-clifford-observable-difference} simplifies to
$$
O_j-W_m^\dagger O_jW_m
=
2O_{j,-}^{(B)}.
$$
Thus, a Pauli relative mutation changes exactly the anticommuting component, while a general Clifford relative mutation may also permute Pauli strings.
In either case, constructing the observable difference is exact, but evaluating its expectation still requires state information.

Finally, suppose a relative mutation has the form $W_m=e^{-i\eta H}$ and the spectral decomposition $H=\sum_k\lambda_k\Pi_k$ is available.
Then
$$
W_m
=
\sum_k e^{-i\eta\lambda_k}\Pi_k,
\qquad
W_m^\dagger
=
\sum_k e^{i\eta\lambda_k}\Pi_k,
$$
so direct multiplication gives
\begin{equation}
\label{eq:exact-spectral-conjugation}
e^{i\eta H}O_je^{-i\eta H}
=
\sum_{k,\ell}e^{i\eta(\lambda_k-\lambda_\ell)}\Pi_kO_j\Pi_\ell.
\end{equation}
If $H$ has two distinct eigenvalues $\lambda_+$ and $\lambda_-$, define
$$
\alpha
=
\frac{\lambda_++\lambda_-}{2},
\qquad
\beta
=
\frac{\lambda_+-\lambda_-}{2},
\qquad
A
=
\frac{H-\alpha I}{\beta}.
$$
Then $A=A^\dagger$, $A^2=I$, and $e^{-i\eta H}$ differs by a global phase from $e^{-i(2\beta\eta)A/2}$.
Thus every two-eigenvalue generator reduces exactly to the involutory-axis rotation algebra, although evaluating the projector blocks or state expectations may still be computationally expensive.

For completeness, every fixed one-qubit relative unitary admits the constructive reduction used in \cref{cor:fixed-one-qubit-replacement}.
Here $U(2)$ is the set of two-by-two unitary matrices, and $SU(2)$ is its subset with determinant one, obtained by removing a scalar phase.
Let $W_m\in U(2)$ and choose one square root of $\det(W_m)$ so that
\begin{equation*}
V_m
=
\frac{W_m}{\sqrt{\det(W_m)}}
\in
SU(2)
\end{equation*}
has positive trace whenever its trace is nonzero; the two square-root choices differ only by the globally irrelevant sign of $V_m$.
Every $SU(2)$ matrix has the expansion
\begin{equation}
\label{eq:one-qubit-su2-expansion}
V_m
=
cI-i(b_xX+b_yY+b_zZ),
\qquad
c
=
\frac{1}{2}\Tr(V_m),
\qquad
b_k
=
\frac{i}{2}\Tr(\sigma_kV_m),
\end{equation}
where $(\sigma_x,\sigma_y,\sigma_z)=(X,Y,Z)$, the coefficients $c,b_x,b_y,b_z$ are real, and $c^2+\|\bm{b}\|_2^2=1$.
When $c=0$, fix the remaining sign ambiguity by requiring the first nonzero entry in the ordered tuple $(b_x,b_y,b_z)$ to be positive.
If $\|\bm{b}\|_2=0$, set $\eta_{\mathrm{eff}}=0$ and leave $A_{\mathrm{eff}}$ undefined; then $W_m$ is a global phase and the replacement is channel-equivalent to the original operation.
Otherwise, define
\begin{equation}
\label{eq:one-qubit-effective-axis-angle}
\eta_{\mathrm{eff}}
=
2\operatorname{atan2}(\|\bm{b}\|_2,c),
\qquad
\bm{n}
=
\frac{\bm{b}}{\|\bm{b}\|_2},
\qquad
A_{\mathrm{eff}}
=
\bm{n}\cdot\bm{\sigma}.
\end{equation}
The chosen sign makes $\eta_{\mathrm{eff}}\in(0,\pi]$, and \cref{eq:one-qubit-su2-expansion} becomes $V_m=e^{-i\eta_{\mathrm{eff}}A_{\mathrm{eff}}/2}$.
Multiplying back the discarded scalar proves \cref{eq:fixed-one-qubit-axis-angle}; because $A_{\mathrm{eff}}^2=I$, the exact involutory-axis formulas apply without a perturbative approximation.
This is the canonical representation used to compare fixed replacement channels.
For a declared fixed-axis family, retain separately its oriented axis and signed family coordinate $\eta$, because those conventions determine the sign of $b_m$ and the reported coordinate $\eta_{0,m}$ in \cref{eq:finite-strength-zero-set}.

\subsection{Fallback General-Generator Local Perturbation Form}
\label{sec:local-perturbation}
Lie-algebraic analyses study gate generators and their commutators to characterize variational circuits, whose gate parameters are optimized against an objective \cite{ragone2024lie}.
Such analyses examine circuit sensitivity and the concentration of gradients near zero; here we use the same generator viewpoint for one fixed mutant--test pair.
For $\widetilde G_j=e^{-i\eta H_m}G_j$, use the definitions of $S_m^{\mathrm{state}}$ and $S_m^{\mathrm{op}}$ in \cref{eq:core-state-sensitivity,eq:core-operator-sensitivity} and the remainder derivation in \cref{app:local-perturbation-derivation}.
Schatten H\"older duality gives $S_m^{\mathrm{state}}\leq S_m^{\mathrm{op}}$ \cite[Eq.~(1.174)]{watrous2018theory}.
The operator quantity therefore bounds the first-order response across states; the realized gap requires the selected state's expectations and finite-strength effects.
The deterministic interval in \cref{eq:perturbative-gap-interval} remains valid even when first-order dominance fails, although it may then leave the threshold unresolved.
A right-multiplied perturbation $G_je^{-i\eta H_m}$ instead uses the pre-gate state, or equivalently the conjugated generator $G_jH_mG_j^\dagger$ on $\tau_j$.
Keeping the left-multiplication convention is essential for the signed shift and for the state at which expectations are evaluated.
If $[H_m,O_j]=0$ is certified algebraically, exponentiation proves exact invisibility at every strength; $S_m^{\mathrm{state}}=0$ alone does not.

\subsection{Backward Light Cone Criterion}
\label{sec:backward-light-cone-proof}
Two related notions of backward support must be distinguished.
The exact algebraic support at location $j$ is $\supp(O_j)$, the smallest qubit set outside which $O_j$ acts as the identity.
Computing this exact support can itself be expensive, so the scalable filter uses the structural backward light cone obtained from the circuit dependency graph.
Applying the standard circuit support relation \cite[Eq.~(22)]{bravyi2026identity} to the reversed suffix $U_{>j}^\dagger$ gives
$$
\supp(O_j)
\subseteq
\mathrm{BLC}_j(O).
$$
If $\supp(W_m)\cap\mathrm{BLC}_j(O)=\varnothing$, this inclusion implies $\supp(W_m)\cap\supp(O_j)=\varnothing$.
Operators with disjoint support commute, so $[W_m,O_j]=0$ and hence $W_m^\dagger O_jW_m=O_j$.
Substitution into the exact formula in \cref{thm:core-exact-local} gives $\Delta_m(\rho,O)=0$ for every input state.
For the perturbative model, $W_m=e^{-i\eta H_m}$, so disjoint support of $H_m$ and $O_j$ is a readily checked sufficient condition for the same conclusion.
The structural inclusion may be strict because a gate can commute with the current operator or different algebraic terms can cancel, neither of which a connectivity-only traversal detects.
Therefore, membership in the structural cone is necessary but not sufficient for visibility.
Such a mutation may still be semantically real for another observable or input test.

\subsection{Circuit Characteristics That Predict Efficient Mutants}
\label{sec:predict-efficient-mutants}

The preceding quantities suggest several structural features that can be used to prioritize mutants before execution.
Some of these features give exact zero-effect conditions in the ideal circuit model, while others are practical heuristics for estimating which nonzero effects are likely to be detectable or classically tractable to approximate.

\paragraph{1. Light-cone membership.}
A candidate mutation must lie inside the structural backward light cone of the observable.
If its relative mutation support is outside this cone, the propagated observable has no support on the changed subsystem, and the mutation has zero effect on that observable in the ideal circuit model.
Membership is only a necessary condition for visibility; exact propagated support, commutation, or state-dependent cancellation can still make the effect zero.

\paragraph{2. Noncommutation with the propagated observable.}
For a gate at location $j$, compute or approximate the observable $O_j$ propagated backward to the mutation location.
For a Pauli-axis rotation, the exact effect is controlled by the anticommuting component $O_{j,-}$ in \cref{eq:core-exact-pauli-gap}; for a general small perturbation generated by $H_m$, the first-order effect is controlled by $[H_m,O_j]$.
The operator-level sensitivity in \cref{eq:core-operator-sensitivity} bounds the first-order response by its largest possible value over states; evaluating the circuit's actual response requires its particular state.
For a single Pauli propagated through a Clifford suffix, this proxy is binary and should not be expected to resolve candidates that all pass the noncommutation test.

\paragraph{3. State support.}
Noncommutation is not sufficient for detectability.
Even when $[H_m,O_j]\neq 0$, the particular state reaching the mutation location may have no support on the component of the commutator that changes the expectation value.
The corresponding state-dependent first-order sensitivity is given in \cref{eq:core-state-sensitivity}.
For a Pauli-axis rotation, the exact finite-angle response instead uses both state expectations in \cref{eq:exact-pauli-signed-expansion}.

\paragraph{4. Distance to measurement.}
Among mutations that satisfy the preceding visibility conditions, mutations closer to the measured observable are often easier to analyze.
Distance to measurement serves as a tractability heuristic; a zero-effect claim still requires one of the exact conditions above.
As $O$ is propagated backward through additional entangling gates, its support can grow, and $O_j$ can become a sum of many Pauli terms.
The state-side trend need not match the operator-side trend: a later mutation has a shorter suffix but a longer prefix, so its $O_j$ may be simpler while its $\tau_j$ is more entangled.
Consequently, distance to measurement alone does not predict either mutation effect or total estimation cost.
Mutations near the measurement layer often have more localized propagated observables, but the tractability of the state-dependent sensitivity still depends on both $O_j$ and $\tau_j$.

\paragraph{5. Observable locality.}
The locality of $O$ determines the initial size of the backward light cone.
If $O$ is local, then only mutations in its causal past can affect its expectation value, and this causal past may remain small in shallow or geometrically local circuits.
If $O$ is global, then many more mutation sites may be visible because the observable initially has support on many qubits.
However, global observables also tend to make both $O_j$ and the corresponding state-dependent sensitivity harder to reason about, especially after propagation through entangling layers.

An execution-planning algorithm for a fixed test is given in \cref{alg:efficient-mutant-selection}.
Its effect-evaluation subprocedure is given separately in \cref{alg:candidate-effect-evaluation} so that the two stages can flow across pages without compressing the decision logic.
Returned intervals are deterministic enclosures or simultaneous statistical intervals.
It predicts or labels outcomes for that test; adequacy evaluation requires an independently constructed mutant set.
In the algorithm, $C_m$ abbreviates $C(\rho,O,m)$ and $R_O$ is the fixed outcome range defined above.
Its threshold rule assigns \emph{skip--certified zero} only when exact mathematical identities or circuit structure prove that the gap is exactly zero.
It assigns \emph{skip--below test threshold} to a gap below $\gamma$ or a deterministic upper bound strictly below $\gamma$, and \emph{skip--below test threshold at declared confidence} when a simultaneous statistical interval lies wholly below $\gamma$.
A tolerance-compatible numerical zero retains numerical provenance and uses the below-threshold rule; the certified-zero label requires the exact proof specified above.
It assigns \emph{execute--predicted above threshold} when a gap or the lower endpoint of either interval reaches $\gamma$, and \emph{execute--threshold unresolved} otherwise.
An unavailable gap leaves the outcome unresolved; an exact observable-side update still requires state information for a threshold decision.
Every candidate record stores this execution-decision status separately from one of the evidence types in \cref{sec:claim-hierarchy}, and exact evidence also records its certificate or gap-computation provenance.
A \emph{skip--below budget floor} label certifies that the specified equal-prior discrimination success cannot be attained within the declared independent-copy budget; establishing $\Delta_m<\gamma$ requires separate test-specific evidence.

\begin{algorithm}[!ht]
\caption{Test-conditional mutant execution planning}
\label{alg:efficient-mutant-selection}
\small
\KwIn{Circuit $U$, test $(\rho,O,\gamma)$, candidate mutations $\mathcal{M}$, optional discrimination policy with budget $(N_{\mathrm{copy}},\delta_{\mathrm{disc}})$, pilot failure budget $\alpha_{\mathrm{pilot}}$, and pinned equivalence settings}
\KwOut{An execution-decision status, score, evidence type, and any exact-evidence provenance for every candidate}
\SetKw{Continue}{continue}
Initialize the decision records $\mathcal R\gets\emptyset$\;
Initialize $\mathcal M_{\mathrm{active}}\gets\mathcal M$; retain the declared population $\mathcal M$ unchanged\;
Allocate total pilot failure budget $\alpha_{\mathrm{pilot}}/|\mathcal M|$ per declared candidate before filtering (no allocation is needed if $\mathcal M=\varnothing$)\;
Verify pinned transpiler matches by exact channel comparison; record certified active candidates in $\mathcal R$ as \emph{skip--semantically equivalent} with \emph{exact} evidence and \emph{pinned transpiler match verified by exact channel comparison} provenance; remove them from $\mathcal M_{\mathrm{active}}$\;
Record active candidates satisfying \cref{eq:rotation-threshold-gate} in $\mathcal R$ as \emph{skip--below test threshold} with \emph{exact} bound evidence, interval $[0,R_O|\sin(\eta/2)|]$, and \emph{rotation threshold ceiling} provenance; remove them from $\mathcal M_{\mathrm{active}}$\;
If the separate discrimination policy is enabled, record active involutory-axis rotations below the periodic strength floor in \cref{eq:practitioner-minimum-strength} in $\mathcal R$ as \emph{skip--below budget floor} with \emph{exact} evidence and \emph{strength-bound certificate} provenance; remove them from $\mathcal M_{\mathrm{active}}$\;
Compute the structural backward light cone of $O$ by graph traversal\;

\ForEach{mutation $m \in \mathcal M_{\mathrm{active}}$}{
    Let $j$ be the mutation site of $m$\;

    \If{the subsystem changed by $m$ is outside the structural backward light cone at site $j$}{
        record $m$ in $\mathcal R$ as \emph{skip--certified zero} with \emph{exact} evidence and \emph{structural zero certificate} provenance for this test\;
        \Continue\;
    }

    Propagate $O$ backward to the mutation location, obtaining $O_j$\;
    Use \cref{alg:candidate-effect-evaluation} to obtain the available gap evidence, score, evidence type, and any exact-evidence provenance\;
    Apply the preceding threshold rule, record the execution-decision status in $\mathcal R$, and attach the returned score, evidence type, and provenance\;
}

Rank only the candidates labelled for execution, within comparable score types and according to the declared execution objective\;

\Return{$\mathcal R$}\;
\end{algorithm}

\begin{algorithm}[!htp]
\caption{Candidate effect and score evaluation}
\label{alg:candidate-effect-evaluation}
\small
\KwIn{Mutation $m$, propagated observable $O_j$, local test data, cost $C_m$, outcome range $R_O$, and available exact, state, or pilot information}
\KwOut{Gap, interval, or unavailable; score, evidence type, and provenance}

\If{a supplied Hermitian generator satisfies an algebraically certified $[H_m,O_j]=0$}{
    \Return{$\Delta_m=0$, exact zero score, \emph{exact} evidence, and \emph{generator commutation certificate} provenance}\;
}

\uIf{$m$ is a same-axis Pauli rotation change, including removal or sign reversal}{
    Construct $Q_m=D_{A,j}(\eta_m)$ from \cref{eq:structured-observable-difference}\;
}
\uElseIf{the relative mutation $W_m$ is Clifford}{
    Construct the exact observable difference $Q_m$ term by term using \cref{eq:exact-clifford-observable-difference}\;
}
\uElseIf{$m$ has a tractable spectral decomposition}{
    Construct $Q_m=O_j-W_m^\dagger O_jW_m$ using \cref{eq:exact-spectral-conjugation}\;
}
\uElseIf{$m$ has a general small Hermitian generator $H_m$}{
    \eIf{the local propagated state is available}{
        Compute $s_m=|\eta_m|S_m^{\mathrm{state}}$ and $B_m=\eta_m^2\|[H_m,[H_m,O_j]]\|_\infty/2$\;
        Form $I_m^{\mathrm{det}}$ from \cref{eq:perturbative-gap-interval}; attach score $s_m^2/(R_O^2C_m)$ only as a first-order ranking estimate\;
        \Return{the deterministic interval, remainder bound, perturbative score, and \emph{state-dependent perturbative} evidence}\;
    }{
        Compute $S_m^{\mathrm{op}}$ using \cref{eq:core-operator-sensitivity}\;
        \Return{the unavailable gap marker, operator-proxy score, and \emph{operator proxy} evidence}\;
    }
}
\Else{
    \eIf{exact simulation is tractable}{
        Compute $\Delta_m$ and $E_m=\Delta_m^2/(R_O^2C_m)$\;
        \Return{the exact gap, exact score, \emph{exact} evidence, and \emph{simulated gap} provenance}\;
    }{
        \Return{the unavailable gap marker, unavailable score, and \emph{unavailable} evidence}\;
    }
}

\If{an exact structured observable difference $Q_m$ was constructed}{
    \If{$Q_m=0$ was certified algebraically}{
        \Return{$\Delta_m=0$, exact zero score, \emph{exact} evidence, and \emph{algebraic zero certificate} provenance}\;
    }
    \If{$\Tr(Q_m\tau_j)=0$ was certified algebraically for the declared input}{
        \Return{$\Delta_m=0$, exact zero score, \emph{exact} evidence, and \emph{state-dependent cancellation certificate} provenance}\;
    }
    \uIf{the required expectations in $\tau_j$ are available}{
        Compute $\Delta_m=|\Tr(Q_m\tau_j)|$ and $E_m=\Delta_m^2/(R_O^2C_m)$\;
        \Return{the exact gap, exact score, \emph{exact} evidence, and \emph{structured gap} provenance}\;
    }
    \uElseIf{$m$ is a Pauli rotation or relative Clifford replacement with a tractable retained Pauli expansion and an executable pilot}{
        Apply the pilot procedure in \cref{sec:scalable} to obtain the signed estimate, simultaneous confidence interval $I_m=[\underline{\Delta}_m,\overline{\Delta}_m]$, and cancellation diagnostic\;
        \Return{the simultaneous interval, pilot score, and \emph{pilot-estimated} evidence}\;
    }
    \uElseIf{exact simulation is tractable}{
        Compute $\Delta_m$ and $E_m=\Delta_m^2/(R_O^2C_m)$\;
        \Return{the exact gap, exact score, \emph{exact} evidence, and \emph{simulated gap} provenance}\;
    }
    \Else{
        \Return{the unavailable gap marker, unavailable score, \emph{exact observable-side update only} evidence, and \emph{structured observable-difference} provenance}\;
    }
}
\end{algorithm}

\FloatBarrier
A fixed one-qubit replacement not handled by the preceding rotation or Clifford branches enters the tractable spectral branch of \cref{alg:candidate-effect-evaluation}.
The axis--angle reduction in \cref{cor:fixed-one-qubit-replacement,eq:one-qubit-effective-axis-angle} gives an equivalent construction of its exact observable difference through the involutory-axis formula, including non-Pauli axes; a scalar relative unitary gives $Q_m=0$ without assigning an axis.
For a simultaneous Euler-parameter path, this construction is applied separately to each fixed candidate and does not assume a common axis along the path.
Approximate propagation supports pilot intervals only when all discarded coefficient mass is bounded; it never supplies a proof that the full gap is exactly zero.
For small circuits with explicitly exact gates and inputs, a separate symbolic path constructs $U^\dagger OU-U_m^\dagger OU_m$ from the complete circuits and proves either observable invariance or a zero expectation in $\rho$.
This is equivalent to the local zero conditions by \cref{thm:core-exact-local} and provides a check independent of the implemented local-response formulas.
The exact-input path accepts recognized discrete Clifford gates, explicitly exact one-qubit or multiqubit unitary gates and Pauli rotations, and explicitly exact pure or mixed states, including computational-basis inputs.
Angles must be supplied as symbolic expressions such as rational multiples of $\pi$; rounded angles are not promoted to exact declarations.
The default symbolic width limit is four qubits, and supplied floating-point observable coefficients retain their binary-rational values.
Unsupported data or an exceeded resource limit leaves a certificate unavailable, preserving the numerical or pilot fallback.
The semantic filter verifies each pinned transpiler match independently of the selected input and observable.
It first checks common prefix and suffix instructions using exact gate semantics and ordered operands, then proves that the relative unitary on the remaining changed support is a scalar of unit modulus times the identity.
This local proof uses the symbolic width limit on the changed support and can therefore certify supported circuits wider than four qubits.
If it is unavailable, a full-register symbolic comparison is attempted only within both the symbolic and dense width limits.
The implementation propagates discrete Clifford gates directly on Pauli strings and evaluates general local conjugations on their bounded gate supports.
Its relative-Clifford branch still uses numerical recognition, so it expands the actual local matrix rather than substituting a rounded Clifford tableau; exact zero labels require a separate algebraic proof.
A complete numerical Pauli expansion is used for state expectations when it fits the term cap; otherwise the planner attempts bounded dense evaluation, a pilot with a truncation allowance, or bounded direct simulation.
Exhausted resources leave the candidate unresolved, and a truncated observable alone never receives complete-observable evidence.
The built-in pilot samples a simulated mutation-site state separately for each candidate and allocates $\alpha_{\mathrm{pilot}}/|\mathcal M|$ before filtering, without recycling unused allocations.
It estimates the fixed-strength difference; reusable coefficient pairs are outside the current implementation.
The remaining dense gaps and deterministic bounds use ordinary floating-point arithmetic to evaluate ideal formulas, with numerical errors left uncertified.
Results therefore retain numerical provenance and precision; an application requiring rigor at a numerical threshold boundary must use validated arithmetic or leave that comparison unresolved.
The operator score is a state-independent first-order upper-bound proxy; estimating realized efficiency requires the test's state-dependent gap.
The pilot estimators in \cref{sec:scalable} use truncated observable-side updates for Pauli rotations and relative Clifford replacements, together with measured signed state expectations and explicit propagation and sampling errors.
They must not be applied to other replacement classes without a mutation-specific observable construction and error analysis.
When proxy, pilot, and realized scores are mixed, they must remain separately labeled, and rankings should be formed only within comparable score types.

\subsection{Interpretation and Consequence}
\label{sec:framework-interpretation}

In the noiseless setting, mutation testing asks whether the selected tests extract distinguishing information from the quantum program; hardware robustness requires a separate device model.
For structured mutations, a promising candidate lies in the backward light cone of the measured observable, changes the propagated observable under the exact relative-unitary action, has state support on that exact difference, produces an expectation-value shift large enough to be detected under the selected protocol, and does not require expensive state-preparation or measurement changes.
For a general generated perturbation, the commutator and state-dependent sensitivity provide the corresponding first-order fallback.

The central test-specific signal is the observable gap $\Delta_m(\rho,O)$.
When exact expectation evaluation is infeasible, a truncated structured observable difference can be paired with pilot measurements of the required signed expectations; when no exact structured form is available, first-order behavior can be approximated using state-dependent sensitivities, while operator-norm quantities bound the worst-case first-order response across states.
Thus, in the noiseless setting, circuit-aware analysis can reduce the cost of executing a fixed suite and can stratify mutants by expected difficulty.
It does not justify defining the mutant population by the same suite-specific gaps used to score that suite.

\section{Extended Examples and Pilot-Estimator Details}
\label{app:extended-examples}
The examples instantiate the observable gap in \cref{eq:core-observable-gap} and optional estimation-efficiency score in \cref{eq:core-efficiency}; the later sections develop pilot estimators for wider circuits.

\subsection{Concrete Examples of Mutation-Efficiency Scores}
For simplicity, all examples use pure input states and Pauli observables, so the cost term $C_m$ is treated as constant.
Within each fixed observable, the range $R_O$ is also constant, so the displayed proportional efficiency values suppress the common factor $1/(R_O^2C_m)$ and rank mutants by $\Delta_m^2$.
\begin{example}[Angle perturbation]
Consider
$$
U = R_y(\theta),
\qquad \rho = \ket{0}\bra{0},
\qquad O=Z.
$$
The original expectation is $\langle Z\rangle_U = \cos\theta$.
Mutate $R_y(\theta) \rightarrow R_y(\theta+\eta)$.
Then
$$
\Delta_m
=
|\cos\theta-\cos(\theta+\eta)|.
$$
For small $|\eta|$, this becomes $\Delta_m\approx |\eta\sin\theta|$ and $E_m \propto \eta^2\sin^2\theta$.
For $\theta=\pi/4$ and $\eta=0.05$, the exact values are $\Delta_m\approx0.036224$ and $E_m\propto1.312\times10^{-3}$, whereas the first-order values are approximately $0.035355$ and $1.25\times10^{-3}$.
At that small fixed offset the mutant is more efficient when $\theta$ is near $\pi/2$ and inefficient when $\theta$ is near $0$ or $\pi$; the exact finite-strength dependence is periodic and is governed by \cref{eq:finite-strength-gap}.
\end{example}
\begin{example}[Gate removal]
Again consider $U = R_y(\theta)$, $\rho = \ket{0}\bra{0}$, and $O=Z$.
Remove the gate: $R_y(\theta)\rightarrow I$.
This is exactly the Pauli-axis relative rotation in \cref{thm:core-exact-pauli-rotation} with $\eta=-\theta$, so the same exact formulas apply.
The original expectation is $\cos\theta$, whereas the mutant expectation is $1$.
Therefore,
$$
\Delta_m = |\cos\theta-1|.
$$
If $\theta=\pi/4$, then $\Delta_m\approx 0.293$ and $E_m\propto 8.6\times 10^{-2}$.
If $\theta=0.05$, then $\Delta_m\approx 0.00125$, so removing a near-identity gate can be highly inefficient.
\end{example}
\begin{example}[Pauli gate replacement]
Consider
$$
U=X,
\qquad \rho=\ket{0}\bra{0},
\qquad O=Z.
$$
The original circuit gives $\langle Z\rangle_U=-1$.
Mutating $X\rightarrow I$ gives $\langle Z\rangle_{U_m}=1$, so $\Delta_m=2$ and $E_m\propto 4$.
Mutating $X\rightarrow Y$, however, gives $Y\ket{0}=i\ket{1}$, which differs only by a global phase for this input and observable.
Thus $\Delta_m=0$.
For this input, the two output density operators are identical, so the replacement is output-state equivalent and therefore invisible to every observable, even though $X\neq Y$ syntactically.
The relative mutation is $W_m=YX=-iZ$, so the exact Pauli replacement formula also shows that every propagated observable commuting with $Z$ is unchanged for every state at the mutation site.
\end{example}
\begin{example}[Controlled-gate mutation]
Consider $U=\mathrm{CNOT}_{c,t}$, input $\ket{\psi}=\ket{+}_c\ket{0}_t$, and observable $O=Z_t$.
The original output is
$$
\frac{\ket{00}+\ket{11}}{\sqrt{2}},
$$
so $\langle Z_t\rangle_U=0$.
Removing the control gives $I_c\otimes X_t$, which maps the input to $\ket{+}\ket{1}$, so $\langle Z_t\rangle_{U_m}=-1$.
Hence $\Delta_m=1$ and $E_m\propto 1$.
If the input is instead $\ket{1}_c\ket{0}_t$, both the original CNOT and the mutated unconditional $X$ produce $\ket{1}_c\ket{1}_t$, so $\Delta_m=0$.
\end{example}
\begin{example}[Flipped control condition]
Again use $U=\mathrm{CNOT}_{c,t}$, $\ket{\psi}=\ket{+}_c\ket{0}_t$, and $O=Z_t$.
Flipping the control condition produces
$$
\frac{\ket{01}+\ket{10}}{\sqrt{2}}.
$$
Both original and mutant states have $\langle Z_t\rangle=0$, so $\Delta_m=0$ for $O=Z_t$.
But for $O=Z_cZ_t$, the original state has expectation $1$, whereas the mutant state has expectation $-1$.
Thus $\Delta_m=2$ and $E_m\propto 4$.
This example shows that mutant efficiency is observable-dependent.
\end{example}
\begin{example}[Entangling-gate removal]
Consider
$$
U=\mathrm{CNOT}_{c,t}(H_c\otimes I_t),
\qquad
\ket{\psi}=\ket{00},
\qquad
O=Z_cZ_t.
$$
The original circuit prepares the Bell state $(\ket{00}+\ket{11})/\sqrt{2}$, so $\langle Z_cZ_t\rangle_U=1$.
Removing the CNOT gives $\ket{+}_c\ket{0}_t$, for which $\langle Z_cZ_t\rangle=0$.
Hence $\Delta_m=1$.
For the local observable $Z_c$, both states give $\langle Z_c\rangle=0$, so the same mutation is invisible to the local marginal observable.
\end{example}
\begin{example}[Qubit permutation mutation]
Let $Z_1$ denote the Pauli-$Z$ observable on the first tensor factor.
Consider $\ket{\psi}=\ket{01}$ and $O=Z_1$.
Without mutation, $\langle Z_1\rangle=1$.
A swap mutation produces $\ket{10}$, so $\langle Z_1\rangle_{U_m}=-1$, $\Delta_m=2$, and $E_m\propto 4$.
But if the observable is symmetric, $O=Z_1+Z_2$, the expectation is unchanged by the swap and $\Delta_m=0$.
\end{example}
\begin{example}[Global phase mutation]
Let $U_m=e^{i\phi}U$.
Then
$$
U_m\rho U_m^\dagger=U\rho U^\dagger.
$$
Therefore, for every observable $O$, $\Delta_m=0$.
A global phase mutation is always equivalent and should be discarded when that equivalence is certified.
The reference implementation's sufficient transpiler-match filter can miss such a mutant because the compiled comparison retains global phase; the remaining rules then give a conservative test-relative decision rather than a semantic exclusion.
\end{example}
\begin{example}[Relative phase insertion]
Consider $U=I$, input $\ket{\psi}=\ket{+}$, and observable $O=X$.
Insert a relative phase gate $U_m=R_z(\eta)$.
The original expectation is $\langle X\rangle_U=1$, while $\langle X\rangle_{U_m}=\cos\eta$.
Thus
$$
\Delta_m=|1-\cos\eta|\approx \frac{\eta^2}{2}.
$$
For $\eta=0.05$, $\Delta_m\approx 0.00125$.
Under direct $X$-outcome detection, the original state never produces outcome $-1$, while the mutant produces it with probability $\sin^2(\eta/2)=\Theta(\eta^2)$.
After $n$ mutant shots, the probability of missing that outcome is $(1-\sin^2(\eta/2))^n\leq \exp[-n\sin^2(\eta/2)]$.
Thus, for fixed confidence, observing the rare outcome takes $\mathcal{O}(1/\eta^2)=\mathcal{O}(1/\Delta_m)$ shots, illustrating why the uniform $\mathcal{O}(1/\Delta_m^2)$ mean-estimation bound need not be tight.
However, if $O=Y$, then $\langle Y\rangle_U=0$ and $\langle Y\rangle_{U_m}=\sin\eta$, so $\Delta_m\approx |\eta|$.
Thus the same relative-phase insertion is weak under $X$ but stronger under $Y$.
\end{example}
\begin{example}[Subcircuit replacement]
Suppose a subcircuit prepares
$$
U\ket{00}
=
\frac{\ket{00}+\ket{11}}{\sqrt{2}},
$$
and the mutant subcircuit prepares
$$
U_m\ket{00}
=
\frac{\ket{01}+\ket{10}}{\sqrt{2}}.
$$
For $O=Z_1Z_2$, the original expectation is $1$, while the mutant expectation is $-1$.
Therefore, $\Delta_m=2$ and $E_m\propto 4$.
For the local observable $Z_1$, both states have expectation $0$, so $\Delta_m=0$.
Subcircuit replacement can therefore be highly efficient when the observable captures the changed semantic property, but invisible when the observable only measures an unaffected marginal.
\end{example}

\subsection{End-to-End GHZ Planning Walk-Through}
\label{sec:ghz-planning-walkthrough}

Consider the three-qubit GHZ preparation
$$
U
=
\mathrm{CNOT}_{0,2}\mathrm{CNOT}_{0,1}R_y^{(0)}(\pi/2),
$$
with input $\ket{000}$ and the single expectation test $O=Z_1$.
The eight-candidate set in \cref{tab:ghz-planning-walkthrough} combines illustrative gate deletions and insertions, as supported by existing quantum mutation-testing tools, with the structured rotation family analyzed here \cite{fortunato2022qmutpy_tool}.
These author-defined candidates form a hand-checkable planning example; outcome frequencies for tool-generated populations require a separate benchmark census.
Use the illustrative population gap threshold $\gamma=0.1$ and the default planner with the discrimination policy disabled.
The $0.01$-radian offset probes sensitivity; its representativeness as a programming fault remains unestablished.

\begin{table}[!htb]
\caption{End-to-end fixed-test planning for eight declared mutations of the three-qubit GHZ preparation.}
\label{tab:ghz-planning-walkthrough}
\centering
\small
\renewcommand{\arraystretch}{1.12}
\begin{tabular}{@{}clp{0.42\linewidth}l@{}}
\toprule
\textbf{ID} & \textbf{Mutation} & \textbf{First decisive rule} & \textbf{Exact $\Delta_m$} \\
\midrule
$m_1$
& delete $\mathrm{CNOT}_{0,2}$
& outside the backward light cone of $Z_1$
& $0$ \\
$m_2$
& append $X_2$
& outside the backward light cone of $Z_1$
& $0$ \\
$m_3$
& append $Z_1$
& $[Z_1,O]=0$
& $0$ \\
$m_4$
& append $X_1$
& noncommuting but $\Tr(Z_1\rho_{\mathrm{out}})=0$
& $0$ \\
$m_5$
& $R_y^{(0)}(\pi/2)\to R_y^{(0)}(-\pi/2)$
& $a_m=0$, $b_m=1$, and $\eta=-\pi\equiv\eta_{0,m}$
& $0$ \\
$m_6$
& angle offset $\eta=0.01$
& $2|\sin(0.01/2)|<\gamma=0.1$
& $|\sin(0.01)|$ \\
$m_7$
& angle offset $\eta=\pi/2$
& exact predicted gap above threshold; retain for execution
& $1$ \\
$m_8$
& delete $\mathrm{CNOT}_{0,1}$
& no zero-effect certificate; retain for execution
& $1$ \\
\bottomrule
\end{tabular}
\end{table}

\FloatBarrier
The scalar threshold gate removes $m_6$ before circuit analysis, the structural traversal removes $m_1$ and $m_2$, commutation removes $m_3$, and exact state-dependent evaluation labels $m_4$ and $m_5$ as \emph{skip--certified zero} at their generated strengths, leaving only $m_7$ and $m_8$ for execution.
The executable walkthrough supplies exact angles and certifies both cancellations symbolically; the corresponding floating-point inputs retain numerical zero provenance instead.
Thus, this fixed-test planner schedules two of eight declared candidates without treating any test-relative zero as program equivalence, as summarized with population outcomes in \cref{sec:ghz-planning-summary}.
In a suite-adequacy study, all six non-executed but non-equivalent mutants remain in the independently fixed denominator, and the prespecified detection rule determines their survivor or kill labels independently of the planning disposition.

\subsection{Summary of Example Scores}
\Cref{tab:example-scores} collects the representative gaps and their test-dependent interpretations.

\begin{table}[!htb]
\caption{Representative gaps from the worked examples.
Each effect is conditional on the displayed input and observable where specified.}
\label{tab:example-scores}
\centering
\small
\renewcommand{\arraystretch}{1.15}
\begin{tabular}{@{}p{0.27\linewidth}p{0.43\linewidth}p{0.23\linewidth}@{}}
\toprule
\textbf{Mutation} & \textbf{Typical effect $\Delta_m$} & \textbf{Efficiency implication} \\
\midrule
$R_y(\theta)\rightarrow R_y(\theta+\eta)$
& $|\cos\theta-\cos(\theta+\eta)|$
& Large near $\theta=\pi/2$. \\
$R_y(\theta)\rightarrow I$
& $|\cos\theta-1|$
& Large for nontrivial rotations. \\
$X\rightarrow I$
& $2$ under $(\ket{0},Z)$
& Maximal. \\
$X\rightarrow Y$
& $0$ under $(\ket{0},Z)$
& Test-equivalent. \\
$\mathrm{CNOT}\rightarrow I\otimes X$
& $1$ under $(\ket{+}\ket{0},Z_t)$
& Efficient. \\
Flip CNOT control
& $0$ under $Z_t$; $2$ under $Z_cZ_t$
& Observable-dependent. \\
$\mathrm{CNOT}\rightarrow I$
& $1$ under $Z_cZ_t$
& Correlation-sensitive. \\
$\mathrm{SWAP}$
& $2$ under asymmetric $Z_1$
& Symmetry-dependent. \\
$e^{i\phi}U$
& $0$
& Equivalent. \\
$R_z(\eta)$ insertion on $\ket{+}$
& $1-\cos\eta$ under $X$; $|\sin\eta|$ under $Y$
& Observable-dependent. \\
Subcircuit replacement
& $2$ under $Z_1Z_2$; $0$ under $Z_1$
& Property-dependent. \\
\bottomrule
\end{tabular}
\end{table}

\subsection{Simultaneous Parameter Mutations of a General One-Qubit Gate}

Consider the standard three-parameter one-qubit Euler gate, also commonly denoted $U_3$, with the phase convention
\begin{equation*}
U(\theta,\phi,\lambda)
=
e^{i(\phi+\lambda)/2}
R_z(\phi)R_y(\theta)R_z(\lambda).
\end{equation*}
Write $\doteq$ for equality up to global phase.
The absence of an explicit $R_x$ factor is only a consequence of the ZYZ Euler convention:
\begin{equation*}
R_x(\theta)
=
R_z(-\pi/2)R_y(\theta)R_z(\pi/2),
\end{equation*}
so the decomposition remains universal up to global phase.
Let one mutation change two or three parameters simultaneously:
\begin{equation*}
(\theta,\phi,\lambda)
\longrightarrow
(\theta+\delta_\theta,\phi+\delta_\phi,\lambda+\delta_\lambda).
\end{equation*}
Under the left-multiplication convention, the complete relative mutation is
\begin{equation}
\label{eq:simultaneous-u-relative-unitary}
\begin{aligned}
W_m
&=
\widetilde G_jG_j^\dagger
\\
&\doteq
R_z(\phi+\delta_\phi)
R_y(\theta+\delta_\theta)
R_z(\delta_\lambda)
R_y(-\theta)
R_z(-\phi).
\end{aligned}
\end{equation}
The original value of $\lambda$ cancels from \cref{eq:simultaneous-u-relative-unitary} because the adjacent rotations combine as $R_z(\lambda+\delta_\lambda)R_z(-\lambda)=R_z(\delta_\lambda)$; only its offset remains, while the original $\theta$ and $\phi$ determine the orientations of later relative rotations.

The common one-parameter mutation operators are fixed-axis families:
\begin{equation}
\label{eq:u-single-parameter-families}
\begin{aligned}
\delta_\theta\text{ only:}\qquad
W_m
&\doteq
R_z(\phi)R_y(\delta_\theta)R_z(-\phi),
\\
A_\theta
&=
-\sin\phi\,X+\cos\phi\,Y,
\qquad
\eta=\delta_\theta;
\\[0.4em]
\delta_\phi\text{ only:}\qquad
W_m
&\doteq
R_z(\delta_\phi),
\\
A_\phi
&=
Z,
\qquad
\eta=\delta_\phi;
\\[0.4em]
\delta_\lambda\text{ only:}\qquad
W_m
&\doteq
R_z(\phi)R_y(\theta)R_z(\delta_\lambda)R_y(-\theta)R_z(-\phi),
\\
A_\lambda
&=
\sin\theta\cos\phi\,X
+
\sin\theta\sin\phi\,Y
+
\cos\theta\,Z,
\qquad
\eta=\delta_\lambda.
\end{aligned}
\end{equation}
Each axis in \cref{eq:u-single-parameter-families} is independent of its coordinate offset.
Consequently, one coefficient pair $(a_m,b_m)$ gives the complete zero set and maximum along the corresponding one-parameter sweep, with the signed coordinate offset used directly as $\eta$.
For each generated replacement, its canonical effective angle supplies the strength used by the test-independent floor in \cref{cor:test-independent-rotation-ceiling}.

For a simultaneous mutation, the component parameter effects must not be scored and added independently, because their ordered rotations can reinforce or cancel one another.
For each fixed offset vector $\bm{\delta}=(\delta_\theta,\delta_\phi,\delta_\lambda)$, first construct the complete $W_m$ in \cref{eq:simultaneous-u-relative-unitary}, then use \cref{eq:one-qubit-effective-axis-angle} to obtain $(\eta_{\mathrm{eff}},A_{\mathrm{eff}})$, and finally evaluate the exact gap through \cref{eq:core-exact-pauli-gap}.
This reduction also detects coordinate changes that implement only a global phase, for which the mutant is semantically equivalent despite a nonzero Euler-coordinate offset.
Two singular families make these null directions explicit:
\begin{equation}
\label{eq:u-coordinate-singularities}
\begin{aligned}
\theta=0,
\quad
\delta_\theta=0:
&\qquad
W_m
\doteq
R_z(\delta_\phi+\delta_\lambda),
\\
\theta=\pi,
\quad
\delta_\theta=0:
&\qquad
W_m
\doteq
R_z(\delta_\phi-\delta_\lambda).
\end{aligned}
\end{equation}
Thus, modulo $2\pi$, $\delta_\phi=-\delta_\lambda$ is a null direction at $\theta=0$, while $\delta_\phi=\delta_\lambda$ is a null direction at $\theta=\pi$.
An individual offset by an integer multiple of $2\pi$ in any Euler coordinate likewise changes the gate by at most a global phase.
These cases are analytically certifiable as globally equivalent whenever they occur in the chosen mutant population.
Their incidence depends on that population's mutation operators and allowed offsets.

If a family is swept along $\bm{\delta}(t)=t\bm{v}$, then in general
\begin{equation*}
W_m(t)
\doteq
e^{-i\eta_{\mathrm{eff}}(t)A_{\mathrm{eff}}(t)/2}
\end{equation*}
with an axis $A_{\mathrm{eff}}(t)$ that varies with $t$.
Accordingly, one fixed coefficient pair $(a_m,b_m)$ does not generally determine the complete zero set or maximum over that path, even though every fixed candidate on it is covered exactly.
To state the local expansion as a matrix equality, remove the relative scalar phase and define
\begin{equation*}
V_m
=
e^{-i(\delta_\phi+\delta_\lambda)/2}W_m
\in SU(2).
\end{equation*}
In a neighborhood of $\bm{\delta}=0$, this is the near-identity, positive-trace branch of the determinant-normalized representative used in \cref{eq:one-qubit-su2-expansion}.
The local reason is then visible from its first-order generator:
\begin{equation*}
V_m
=
I
-
\frac{i}{2}
\left(
\delta_\theta A_\theta
+
\delta_\phi Z
+
\delta_\lambda A_\lambda
\right)
+
\mathcal{O}(\|\bm{\delta}\|_2^2).
\end{equation*}
The physical relative matrix is $W_m=e^{i(\delta_\phi+\delta_\lambda)/2}V_m$; its scalar phase cancels from conjugation, so the observable gap and canonical effective angle are unchanged.
Let $\bm{n}_\theta$, $\widehat{\bm{z}}$, and $\bm{n}_\lambda$ be the Bloch vectors of $A_\theta$, $Z$, and $A_\lambda$, respectively.
They satisfy $\bm{n}_\theta\cdot\widehat{\bm{z}}=0$, $\bm{n}_\theta\cdot\bm{n}_\lambda=0$, and $\widehat{\bm{z}}\cdot\bm{n}_\lambda=\cos\theta$, so
\begin{equation}
\label{eq:u-local-effective-strength}
\eta_{\mathrm{eff}}^2
=
\delta_\theta^2
+
\delta_\phi^2
+
\delta_\lambda^2
+
2\cos\theta\,\delta_\phi\delta_\lambda
+
\mathcal{O}(\|\bm{\delta}\|_2^3).
\end{equation}
At $\theta=\pi/2$, the three coordinate directions are locally orthogonal and the leading term is the Euclidean squared norm.
At $\theta=0$ and $\theta=\pi$, it instead becomes $\delta_\theta^2+(\delta_\phi+\delta_\lambda)^2$ and $\delta_\theta^2+(\delta_\phi-\delta_\lambda)^2$, respectively, exposing the cancellations in \cref{eq:u-coordinate-singularities}.
The Euclidean norm of the Euler-coordinate offset depends on the representation; the relative unitary up to global phase and its effective rotation angle provide invariant measures of the physical mutation.

For a generic effective axis $A_{\mathrm{eff}}=\bm{n}\cdot\bm{\sigma}$, the projection $O_{j,-}=(O_j-A_{\mathrm{eff}}O_jA_{\mathrm{eff}})/2$ is not a binary Pauli-string commutation filter.
Each retained Pauli string can contribute up to three local Pauli components to the coefficient observables $O_{j,-}$ and $iO_{j,-}A_{\mathrm{eff}}$, after like strings are combined.
This is only a constant-factor expansion at a one-qubit mutation site, but the Pauli-expansion analysis below specializes to Pauli-string axes because they permit the cheaper anticommutation test and single-partner-string update used by the pilot estimator.

\subsection{Pauli-Expansion View}
Expand the backward-propagated observable in a Pauli basis:
$$
O_j
=
\sum_{P\in\mathcal P_n}
c_P P.
$$
For the unnormalized $n$-qubit Pauli basis used here, the expansion is unique with coefficients $c_P=2^{-n}\Tr(PO_j)$, which are real because $P$ and $O_j$ are Hermitian.
Only Pauli terms that anticommute with the mutation axis $A_q$ contribute to the exact observable difference.
Write each Pauli string as
$$
P = B_q\otimes R,
$$
where $B_q\in\{I,X,Y,Z\}$ acts on the mutated qubit and $R$ acts on the remaining qubits.
Then
$$
[A_q,B_q\otimes R]=0
\quad\text{whenever}\quad
B_q=I
\quad\text{or}\quad
B_q=A_q.
$$
Therefore, the exact effect comes only from terms where
$$
B_q \in \{X,Y,Z\}\setminus\{A_q\}.
$$
For each such contributing string, the commutator with the rotation generator gives
$$
\left[\frac{A_q}{2},\, B_q\otimes R\right]
=
i\,\epsilon_{abc}\,C_q\otimes R,
$$
where $C_q$ is the third single-qubit Pauli (the one orthogonal to both $A_q$ and $B_q$) and $\epsilon_{abc}\in\{\pm 1\}$ is the Levi-Civita symbol associated with the ordered triple $(A_q,B_q,C_q)$.
Concretely, $[X,Y]/2 = iZ$, $[Y,Z]/2 = iX$, $[Z,X]/2 = iY$, and the opposite orderings give the corresponding minus signs.
Defining $P^* = C_q\otimes R$ (the same Pauli string with $B_q$ replaced by $C_q$), the exact gap is
$$
\Delta_m(\rho,O)
=
\left|
\sum_{P:\{P_q,A_q\}=0}
c_P
\left[
(1-\cos\eta)\Tr(P\tau_j)
+
\sin\eta\,\epsilon(P)\Tr(P^*\tau_j)
\right]
\right|,
$$
with $\epsilon(P)$ defined by $AP=i\epsilon(P)P^*$ in \cref{sec:exact-structured-mutations}; for this single-qubit factor it equals $\epsilon_{abc}$.
The familiar state-dependent sensitivity is the absolute coefficient of the sine term at $\eta=0$.
The exact expression also retains the $(1-\cos\eta)$ contribution, which is quadratic near zero and can remain nonzero when the first-order coefficient vanishes.
\subsection{How Surrounding Gates Amplify or Wipe Out the Mutation}
\paragraph{1. Wiping out by commutation.}
If the backward-propagated observable commutes with the mutation generator,
$$
[A_q,O_j]=0,
$$
then the left-multiplied rotational mutation has $\Delta_m=0$ exactly.
For example, a $Z$-rotation mutation followed by direct $Z$-measurement is invisible for every mutation strength because $[Z_q,Z_q]=0$.
\paragraph{2. Amplification by basis rotation.}
A later gate can rotate the measured observable into a noncommuting basis.
For example, if $U_{>j}=H_q$ and $O=Z_q$, then
$$
O_j = H_q Z_q H_q = X_q.
$$
A mutation in a $Z$-rotation has generator $Z_q$, and $[Z_q,X_q]\neq 0$.
The Hadamard therefore makes a $Z$-rotation mutation potentially visible.
The exact response depends on both $(1-\cos\eta)\Tr(X_q\tau_j)$ and $\sin\eta\Tr(Y_q\tau_j)$, with the latter giving the first-order term.
\paragraph{3. Spreading through entangling gates.}
Entangling gates can spread a local observable into a multiqubit observable.
For example, let the suffix contain $\mathrm{CNOT}_{q,r}$ and suppose the final observable is $O=Z_r$.
Then
$$
O_j
=
\mathrm{CNOT}_{q,r}^\dagger Z_r \mathrm{CNOT}_{q,r}
=
Z_qZ_r.
$$
Now an $R_y$ angle mutation on qubit $q$ has the exact gap
$$
\Delta_m
=
\left|
\left(1-\cos\eta\right)\Tr(Z_qZ_r\tau_j)
+
\sin\eta\Tr(X_qZ_r\tau_j)
\right|.
$$
Thus the effect of a one-qubit mutation can depend on two distinct two-qubit correlations, with $\Tr(X_qZ_r\tau_j)$ controlling the first-order response.
\paragraph{4. Wiping out by missing correlations.}
Even if $[A_q,O_j]\neq 0$, the mutation can still be invisible if the state has no support on the corresponding correlation.
In the previous example, $\Tr(X_qZ_r\tau_j)=0$ removes the first-order response but does not remove the possible quadratic contribution from $\Tr(Z_qZ_r\tau_j)$.
If both correlations vanish, the mutation is exactly invisible to this expectation-value test for every $\eta$.
\paragraph{5. Cancellation among Pauli components.}
If $O_j=\sum_P c_P P$, several terms may contribute with different signs:
$$
\Delta_m
=
\left|
\sum_{P:\{P_q,A_q\}=0}
c_P
\left[
(1-\cos\eta)\Tr(P\tau_j)
+
\sin\eta\,\epsilon(P)\Tr(P^*\tau_j)
\right]
\right|.
$$
The effect can be large if terms add coherently, or small if they cancel.
Thus destructive interference among propagated contributions can wipe out a mutation's observable effect even inside the light cone.
\subsection{Bounded Amplification}
\label{sec:bounded-amplification}
A suffix circuit can make a mutation visible in new observables and correlations, but it cannot make the mutation effect arbitrarily large.
Set $B_{A,j}=iO_{j,-}A$.
Anticommutation with $A$ makes $B_{A,j}$ Hermitian and gives
\begin{equation*}
\{O_{j,-},B_{A,j}\}=0,
\qquad
B_{A,j}^2=O_{j,-}^2.
\end{equation*}
Therefore, the exact structured difference satisfies
\begin{equation*}
D_{A,j}(\eta)^2
=
\left[(1-\cos\eta)^2+\sin^2\eta\right]O_{j,-}^2
=
4\sin^2(\eta/2)O_{j,-}^2.
\end{equation*}
Because $D_{A,j}(\eta)$ is Hermitian, taking operator norms gives
\begin{equation}
\label{eq:exact-rotation-operator-norm}
\|D_{A,j}(\eta)\|_\infty
=
2|\sin(\eta/2)|\,\|O_{j,-}\|_\infty.
\end{equation}
The exact local identity in \cref{thm:core-exact-local} then proves \cref{eq:exact-rotation-gap-envelope}.
The result recovers linear small-strength scaling and the global ceiling $2\|O\|_\infty$ at $\eta=\pi$.
For a unit-norm Pauli observable, the gap is therefore at most $2|\sin(\eta/2)|$.
Surrounding gates redistribute the mutation effect across observables and correlations within this bound.
\subsection{Restricted Pauli Test-Augmentation Screen}
\label{sec:restricted-pauli-augmentation}
Declare a nonempty admissible output family $\mathcal F_{\mathrm{out}}$ of Hermitian observables with $\|O\|_\infty\leq1$, together with an implementation and cost for each measurement.
For a finite candidate list, its restricted optimum is
\begin{equation}
\label{eq:output-restricted-augmentation}
V_{\mathcal F_{\mathrm{out}}}
=\max_{O\in\mathcal F_{\mathrm{out}}}|\Tr(B_O\Delta\tau_j)|,
\qquad B_O=U_{>j}^\dagger O U_{>j}.
\end{equation}
Cyclicity supplies the original-prefix identity
\begin{equation}
\label{eq:output-restricted-prefix}
\Tr(B_O\Delta\tau_j)
=\Tr\!\left[(B_O-W_m^\dagger B_OW_m)\tau_j\right].
\end{equation}
Thus $V_{\mathcal F_{\mathrm{out}}}\leq\|\Delta\tau_j\|_1$, but tractable screening additionally requires a manageable candidate list, propagated representations, and access to the required expectations.
For output Pauli measurements, $O=P_{\mathrm{out}}$ is measured by local basis changes followed by computational-basis measurements and multiplication of the outcomes on its support.
A non-Clifford suffix may nevertheless make $B_O$ expensive to represent or estimate; inexpensive output measurement does not imply inexpensive analysis.

A different restriction starts with site Pauli strings $P$ and admits output observables $U_{>j}PU_{>j}^\dagger$.
Their scores are $|\Tr(P\Delta\tau_j)|$, but their measurement implementations must be supplied separately.
One implementation applies $U_{>j}^\dagger$ to the output before measuring $P$; its inverse-suffix depth, routing, and compilation costs count in the test budget.
Executing a truncated prefix instead changes the tested interface and is not automatically an output test of the original program.
For a Clifford suffix the full Pauli family is preserved up to signs, so the two unrestricted Pauli screens coincide and local output basis changes suffice.
Weight caps or other restricted Pauli lists must still be mapped explicitly because Clifford propagation can change Pauli weight.
Accordingly, the site screen supplies an inexpensive implementable oracle only when both its analysis and the corresponding output measurement have bounded declared cost.

The Pauli restriction can be exponentially loose even with an identity suffix.
A \emph{Haar-random} orthogonal pair is sampled uniformly over orthogonal pure-state pairs, with no preferred basis.
Applying L\'evy's lemma (a concentration bound on high-dimensional spheres) \cite[Lemma~III.1]{hayden2006generic} and a union bound over the $4^n$ Pauli strings to such an $n$-qubit pair bounds the largest Pauli expectation difference by $\mathcal{O}(\sqrt{n}\,2^{-n/2})$ with high probability, even though the trace norm of the state difference is $2$.
A small retained-Pauli score therefore does not imply that the mutant is intrinsically hard to detect; its distinguishing observable may be non-Pauli or spread over exponentially many Pauli components.

For the running sign-reversal example, $\tau_{j,-}^{(Y)}=(\sin\theta\,X+\cos\theta\,Z)/2$ has trace norm one.
Applying \cref{eq:state-side-pauli-trace-norm} at $\eta=-2\theta$ gives the same optimum $2|\sin\theta|$ obtained from $\Delta\tau_j=\sin\theta\,X$ and its sign observable.
\subsection{Pilot Estimation of Mutation Effect in Wide Circuits}
\label{sec:scalable}
For wide circuits, exact evaluation of the state $\tau_j$ and the propagated observable $O_j$ can require exponential resources in the worst case, although structured circuit families may admit efficient exact methods.
The proposed extension below truncates the observable-side Pauli expansion and measures the required signed state expectations with a shared pilot experiment; its practical benefit remains unvalidated.
It therefore replaces an assumed state-support model with an empirical estimate and an explicit error budget, while unrestricted wide-circuit simulation retains its worst-case resource requirements.

\subsubsection{Computational Growth and Bounded Fallbacks}
\label{sec:planner-growth}
Let $n$ be the register width, $L_{>j}$ the number of suffix gates, $k$ the largest local gate width, $K_0$ the number of supplied observable terms, and $K_{\mathrm{ret}}$ the retained-term cap.
The scalar rotation ceiling requires constant work once the angle and outcome range are supplied.
Structural cone traversal inspects circuit operands without constructing a state or operator; the reference implementation also stores one support set per circuit boundary, requiring up to $\mathcal O(Ln)$ storage for a circuit with $L$ gates.
Compiler comparison has the cost of the pinned compilation procedure, followed by an independent exact proof; no polynomial bound for arbitrary compilation or equivalence checking is assumed.
For a verified factorization $U=VGP$ and $U_m=V\widetilde G P$, channel equivalence reduces to checking $\widetilde G G^\dagger=e^{i\phi}I$ on the changed support.
The reference local certificate obtains this factorization by comparing source instructions with exact semantics, rather than assuming it from mutation metadata.
Unsupported common context can therefore prevent this sufficient proof even when the circuits are equivalent.

An $n$-qubit Pauli string requires $\mathcal O(n)$ explicit storage, so $K$ strings require $\mathcal O(Kn)$ storage.
A Clifford suffix preserves term count, giving $\mathcal O(K_0L_{>j})$ local term updates; full-string copying and sign bookkeeping can add an $n$ factor in an implementation.
Conjugating a Pauli term by a general $k$-qubit gate produces at most $4^k$ local Pauli terms.
After each truncation, the next gate therefore generates at most $K_{\mathrm{ret}}4^k$ terms, in addition to the cost of obtaining its local conjugation rule.
The reference implementation computes and caches these rules within each gate update using $2^k\times2^k$ matrices, merges equal strings, and sorts coefficients for truncation; sorting $T$ coefficients costs $\mathcal O(T\log T)$, rather than the linear-time selection bound discussed below.
It separately retains all $K_0$ terms of a complete discrete-Clifford image for algebraic checks, including coefficients omitted by the pilot approximation.
Without truncation, general propagation can populate all $4^n$ Pauli strings; a term cap bounds the retained representation but does not guarantee a useful error bound.

A dense statevector stores $2^n$ amplitudes, and applying $L$ fixed-width gates costs $\mathcal O(L2^n)$ arithmetic operations under ordinary statevector simulation.
A density matrix or full-register operator stores $4^n$ entries; generic dense matrix multiplication costs $\mathcal O(8^n)$ scalar operations with conventional multiplication.
A pure unitary Choi ket also has $4^n$ amplitudes, while its dense Choi density matrix would have $16^n$ entries; neither is constructed by this implementation's symbolic comparison.
Symbolic unitary matrices have the same entry count as numerical unitary matrices, but exact expression growth and simplification add costs not captured by that count.
The four-qubit symbolic default is a computational safeguard, not a mathematical restriction on the identities.

The implementation declares independent limits: \texttt{local\_max\_qubits}=4 for local numerical matrices, \texttt{dense\_max\_qubits}=10 for full-register numerical operators and density-matrix evolution, and \texttt{symbolic\_max\_qubits}=4 for exact matrix comparisons.
State preparation and direct simulation have separate width limits, both ten by default; lowering the dense limit does not disable permitted statevector calculations or sparse checks.
Observable validation uses Pauli coefficients for Hermiticity and analytical spectral ranges for pairwise anticommuting or independent commuting Pauli families, or diagonalizes only the active support within the dense limit.
An unsupported over-limit range calculation or gate validation produces an explicit input-validation error rather than an unchecked contract or an unbounded allocation.
These limits constrain calculations initiated by the planner, not the storage of already supplied dense states and mutation operators.
The simulated pilot still prepares a numerical state, and its probability cache is capped at 32 measurement settings; the hardware shot bound in \cref{eq:shadow-sample-complexity} does not remove classical state-simulation costs.
The tableau-based polynomial bound in \cref{sec:stabilizer-family} applies under its separate assumptions and is not established for the complete reference implementation by these resource limits.

\subsubsection{Backward Pauli Propagation}
Assume that the final observable is supplied as an exact Pauli expansion or as an initial truncated expansion whose discarded mass is included below.
The backward propagation step follows the Heisenberg-picture Pauli propagation used by Rall et al.\ \cite[Sec.~II.C and Eqs.~(14)--(15)]{rall2019pauli}.
In the \emph{Heisenberg picture}, circuit evolution is accounted for by transforming the observable while holding the state at the chosen reference point fixed.
The coefficient-based truncation used here is closely related to later sparse-Pauli dynamics methods \cite{begusic2024sparse,begusic2025clifford}.
Starting from $O$, propagate each retained Pauli term backward through the suffix:
\begin{equation*}
O_j
=
U_{>j}^\dagger O U_{>j}.
\end{equation*}
Clifford conjugation maps each Pauli string to one Pauli string, whereas a non-Clifford gate can produce a linear combination.
For example,
\begin{equation*}
H Z H = X,
\qquad
\mathrm{CNOT}_{c,t}^\dagger Z_t \mathrm{CNOT}_{c,t} = Z_c Z_t,
\qquad
R_z(\varphi)^\dagger X R_z(\varphi)=\cos\varphi\,X-\sin\varphi\,Y.
\end{equation*}
To limit representation growth, retain the $K_{\mathrm{ret}}$ largest Pauli coefficients after each propagation step.
Rank coefficients by magnitude after combining contributions to identical Pauli strings.
If $T_s$ coefficients are presented for truncation at step $s$, linear-time selection and accumulation of the discarded coefficient mass require $\mathcal{O}(T_s)$ scalar operations, without sorting the full list.
For a list of at most $K_{\mathrm{ret}}$ retained terms, this selection cost is $\mathcal{O}(K_{\mathrm{ret}})$ per step when each term expands into at most a fixed number of terms, as for gates acting on a fixed number of qubits with supplied local conjugation rules.
Generating the expanded terms, combining identical strings, manipulating their representations, and the precision of coefficient arithmetic require separate cost accounting; the selection bound alone is not a bound on the full propagation cost.
If the terms discarded at step $s$ have total absolute coefficient mass $d_s$, then their operator norm is at most $d_s$.
Unitary conjugation preserves the operator norm, and successive discarded contributions add by the triangle inequality, so
\begin{equation}
\label{eq:pauli-propagation-error}
\left\|O_j-\widehat O_j\right\|_\infty
\leq
\varepsilon_j,
\qquad
\varepsilon_j
=
d_{\mathrm{init}}+\sum_{s>j}d_s.
\end{equation}
Here $d_{\mathrm{init}}=0$ when the supplied expansion of $O$ is exact.
For the exact observable-difference map
\begin{equation*}
\mathcal L_{W_m}(X)
=
X-W_m^\dagger XW_m,
\end{equation*}
the retained structured difference obeys
\begin{equation}
\label{eq:structured-truncation-error}
\left\|
\mathcal L_{W_m}(O_j)
-
\mathcal L_{W_m}(\widehat O_j)
\right\|_\infty
\leq
2\varepsilon_j.
\end{equation}
For a Pauli-axis relative rotation, let $\delta_j=O_j-\widehat O_j$ and $\delta_{j,-}=(\delta_j-A\delta_jA)/2$.
The exact norm identity in \cref{eq:exact-rotation-operator-norm} sharpens the general certificate to
\begin{equation}
\label{eq:rotation-truncation-error}
\left\|
\mathcal L_{W_m}(O_j)
-
\mathcal L_{W_m}(\widehat O_j)
\right\|_\infty
=
2|\sin(\eta/2)|\,\|\delta_{j,-}\|_\infty
\leq
2|\sin(\eta/2)|\varepsilon_j.
\end{equation}
Unlike the mutation-independent certificate, this rotation-specific error shrinks at the same linear rate as the structured signal near $\eta=0$.
Exact zero-effect labels must still use the structural light cone because terms discarded from $\widehat O_j$ may carry omitted support.

\subsubsection{Pilot Measurements of Signed State Support}
Prepare $\tau_j$ by executing the prefix $U_{<j}$ followed by the original gate $G_j$, and collect randomized single-qubit Pauli measurements.
Median-of-means classical-shadow post-processing gives simultaneous signed estimates
\begin{equation*}
\widehat r_j(Q)
\approx
\Tr(Q\tau_j)
\end{equation*}
for every retained Pauli string $Q$ required by the candidates at site $j$ \cite[Theorem~1 and Eq.~(S13)]{huang2020classicalshadows}.
For each target string, median-of-means post-processing partitions the independent shots into batches, averages the single-shot estimates within each batch, and takes the median of those averages.
A single-shot classical-shadow estimator is unbiased; the median of independent batch means need not be unbiased and can lie outside the physical interval $[-1,1]$.
Throughout this section, $\widehat r_j(Q)$ denotes that median-of-means estimate projected onto $[-1,1]$ before scoring.
Clipping can alter its bias, but it cannot increase pointwise absolute error because the true Pauli expectation lies in the interval, and it therefore preserves the simultaneous error guarantee below.
Independently of clipping, using the same expectation estimates in the signed sum and magnitude-only envelope makes the triangle inequality ensure $\widehat\kappa_m\in[0,1]$ whenever its denominator is nonzero.
For $M_{\mathrm{P}}$ target Pauli strings of weight at most $w$, substituting the weight-$w$ Pauli shadow norm into the simultaneous-estimation theorem gives \cite[Theorem~1 and Eq.~(S13); Lemma~3 and Eq.~(S50)]{huang2020classicalshadows}
\begin{equation}
\label{eq:shadow-sample-complexity}
N_{\mathrm{pilot}}
=
\mathcal{O}\!\left(
\frac{3^w\log(M_{\mathrm{P}}/\alpha_{\mathrm{pilot},j})}{\zeta^2}
\right)
\end{equation}
samples to estimate all targets to additive error $\zeta$ with simultaneous failure probability at most $\alpha_{\mathrm{pilot},j}$ for this fixed site and test.
The squared \emph{shadow norm} bounds the single-shot estimator's worst-case second moment; for a weight-$w$ Pauli string under these local measurements it is $3^w$, producing the weight dependence above.
The dependence is independent of the total register size $n$ for fixed $w$, but it is exponential in the retained Pauli weight, so truncation controls both classical propagation and pilot-measurement cost.
The same site-$j$ pilot data can be reused across all candidate mutations whose scores use those target strings.
This guarantee covers the declared target set in one pilot; coverage across sites and tests requires the allocation below.
For a study-wide failure target $\alpha_{\mathrm{pilot}}$, prespecify pilot allocations $\alpha_{\mathrm{pilot},j,r}$ with $\sum_{j,r}\alpha_{\mathrm{pilot},j,r}\leq\alpha_{\mathrm{pilot}}$ and include every reused target in its pilot's simultaneous guarantee.
The reference implementation collects a separate pilot per candidate, with its allocation policy documented in the implementation README \cite{quantummutantplanner}.
If $\tau_j$ belongs to an analytically tractable family, such as a known stabilizer or product state, its Pauli expectations may instead be evaluated exactly and no pilot error is incurred.

\subsubsection{Pauli-Rotation Coefficient Estimates}
Let
\begin{equation*}
\widehat O_j
=
\sum_{k=1}^{K_{\mathrm{ret}}}c_kP_k
\end{equation*}
be the retained propagated observable for a relative rotation $W_m=e^{-i\eta A/2}$.
Estimate the two strength-independent coefficients separately:
\begin{equation}
\label{eq:pilot-pauli-coefficients}
\widehat a_m
=
\sum_{\substack{1\leq k\leq K_{\mathrm{ret}}\\\{P_k,A\}=0}}
c_k\widehat r_j(P_k),
\qquad
\widehat b_m
=
\sum_{\substack{1\leq k\leq K_{\mathrm{ret}}\\\{P_k,A\}=0}}
c_k\epsilon(P_k)\widehat r_j(P_k^*),
\end{equation}
where $[A/2,P_k]=i\epsilon(P_k)P_k^*$.
This pair supports every admissible mutation strength at the site without collecting a new pilot sample.
Define
\begin{equation*}
L_m
=
\sum_{\substack{1\leq k\leq K_{\mathrm{ret}}\\\{P_k,A\}=0}}|c_k|.
\end{equation*}
On the simultaneous event $|\widehat r_j(Q)-\Tr(Q\tau_j)|\leq\zeta$ for every target string, projection onto the anticommuting component and \cref{eq:pauli-propagation-error} give
\begin{equation}
\label{eq:pilot-coefficient-errors}
|\widehat a_m-a_m|
\leq
\varepsilon_j+\zeta L_m,
\qquad
|\widehat b_m-b_m|
\leq
\varepsilon_j+\zeta L_m.
\end{equation}
The fixed-strength signed shift, gap estimate, and efficiency estimate are
\begin{equation}
\label{eq:pilot-fixed-strength-score}
\widehat d_m(\eta)
=
(1-\cos\eta)\widehat a_m+\sin\eta\,\widehat b_m,
\qquad
\widehat\Delta_m(\eta)
=
|\widehat d_m(\eta)|,
\qquad
\widehat E_m(\eta)
=
\frac{\widehat\Delta_m(\eta)^2}{R_O^2C_m}.
\end{equation}
Combining the rotation-specific truncation certificate with the pilot error gives
\begin{equation}
\label{eq:pilot-fixed-strength-error}
|\widehat d_m(\eta)-d_m(\eta)|
\leq
2|\sin(\eta/2)|\varepsilon_j
+
\left(
|1-\cos\eta|+|\sin\eta|
\right)\zeta L_m.
\end{equation}
The same right-hand side bounds $|\widehat\Delta_m(\eta)-\Delta_m(\eta)|$.
The strength-independent pilot visibility score and nontrivial zero estimate are
\begin{equation}
\label{eq:pilot-all-strength-diagnostics}
\widehat V_m
=
|\widehat a_m|+\sqrt{\widehat a_m^2+\widehat b_m^2},
\qquad
\widehat\eta_{0,m}
=
2\operatorname{atan2}(-\widehat b_m,\widehat a_m)
\pmod{2\pi}.
\end{equation}
Finite pilot data cannot certify the exact equality $a_m=b_m=0$.
Instead, the joint confidence region implied by \cref{eq:pilot-coefficient-errors} reports whether all-strength invisibility is compatible with the selected tolerance; $\widehat\eta_{0,m}$ should be reported with propagated uncertainty and omitted when that region contains the origin.

The magnitude-only companion
\begin{equation}
\label{eq:pilot-magnitude-envelope}
\widehat G_{m,\mathrm{abs}}^{\mathrm{rot}}(\eta)
=
\sum_{\substack{1\leq k\leq K_{\mathrm{ret}}\\\{P_k,A\}=0}}
|c_k|
\left[
|1-\cos\eta|\,|\widehat r_j(P_k)|
+
|\sin\eta|\,|\widehat r_j(P_k^*)|
\right]
\end{equation}
deliberately removes cancellation to provide an envelope of the estimated contributions.
When $\widehat G_{m,\mathrm{abs}}^{\mathrm{rot}}>0$, define
\begin{equation}
\label{eq:cancellation-risk}
\widehat\kappa_m^{\mathrm{rot}}
=
1-
\frac{\widehat\Delta_m(\eta)}
{\widehat G_{m,\mathrm{abs}}^{\mathrm{rot}}(\eta)}.
\end{equation}
A value near one flags retained contributions that are individually large but nearly cancel in the signed estimate.
The sampling error and the $2|\sin(\eta/2)|\varepsilon_j$ truncation allowance must be included before the magnitude envelope is treated as an upper confidence bound on the true gap.

\subsubsection{Relative Clifford Replacements}
For a relative Clifford mutation, conjugate every retained Pauli string exactly, combine like terms, and write
\begin{equation*}
\widehat Q_m
=
\widehat O_j-W_m^\dagger\widehat O_jW_m
=
\sum_{\ell}d_\ell Q_\ell.
\end{equation*}
The same site-$j$ pilot sample gives
\begin{equation}
\label{eq:pilot-clifford-score}
\widehat d_m^{\mathrm{Cl}}
=
\sum_\ell d_\ell\widehat r_j(Q_\ell),
\qquad
\widehat\Delta_m^{\mathrm{Cl}}
=
\left|\widehat d_m^{\mathrm{Cl}}\right|,
\qquad
\widehat G_{m,\mathrm{abs}}^{\mathrm{Cl}}
=
\sum_\ell|d_\ell|\,|\widehat r_j(Q_\ell)|.
\end{equation}
With $L_m^{\mathrm{Cl}}=\sum_\ell|d_\ell|$, the simultaneous pilot event gives
\begin{equation}
\label{eq:pilot-clifford-error}
\left|
\widehat d_m^{\mathrm{Cl}}
-
\Tr\!\left[
\left(O_j-W_m^\dagger O_jW_m\right)\tau_j
\right]
\right|
\leq
2\varepsilon_j+\zeta L_m^{\mathrm{Cl}}.
\end{equation}
The squared signed gap divided by $R_O^2C_m$ is the pilot efficiency estimate, while the relative difference between the signed and magnitude-only values is the Clifford cancellation diagnostic.
For a Pauli relative mutation, $\widehat Q_m$ is twice the retained component that anticommutes with that Pauli; for a general Clifford, combining permuted strings before measurement preserves observable-side cancellation.

\subsubsection{Cost, Amortization, and Interpretation}
\label{sec:pilot-cost}
Pilot preparation, execution, measurement, and classical post-processing must be charged to the total selection budget.
The weight dependence in \cref{eq:shadow-sample-complexity} supplies a measurement cost, without an additional ad hoc geometric correlation penalty.
The resulting quantities estimate retained signed expectations from pilot data with declared confidence; establishing a candidate's efficiency also requires the declared error and cost analysis.
They become exact only when $\varepsilon_j=0$ and the required state expectations are evaluated exactly.
The strength-independent pair $(\widehat a_m,\widehat b_m)$ is preferable to screening only at one chosen angle because it exposes possible all-strength invisibility, the maximum signal, and the nontrivial cancellation strength from the same pilot data.
This reuse can save work relative to separate candidate executions, but three-point interpolation and targeted coefficient measurements also reuse the response across strengths.
The comparison in \cref{sec:measurement-baselines} is therefore needed before attributing an economic advantage specifically to shared shadows.
At fixed $\eta$ and $\zeta$, the certified rotation error is
\begin{equation*}
2|\sin(\eta/2)|\varepsilon_j(K_{\mathrm{ret}})
+
\left(|1-\cos\eta|+|\sin\eta|\right)\zeta L_m(K_{\mathrm{ret}}).
\end{equation*}
Because
\begin{equation*}
|1-\cos\eta|+|\sin\eta|
=
2|\sin(\eta/2)|
\left(
|\sin(\eta/2)|+|\cos(\eta/2)|
\right),
\end{equation*}
both the truncation and pilot-sampling terms vanish at the same linear rate in $|\sin(\eta/2)|$ near zero.
Thus, decreasing the mutation strength does not by itself improve the relative precision of the certified interval.

A target error specifies estimation precision; a population-boundary decision uses the full interval and can remain unresolved when the gap lies near the detection threshold $\gamma$.
For a fixed test and site $j$, let $\mathcal C_j$ be the rotation candidates and admissible strengths sharing one pilot state, let $M_{\mathrm{P},j}$ be the number of target Pauli strings of maximum weight $w_j$, and define
\begin{equation*}
g(\eta)
=
|1-\cos\eta|+|\sin\eta|,
\qquad
B_j
=
\max_{m\in\mathcal C_j}g(\eta_m)L_m.
\end{equation*}
When $B_j>0$, keeping the pilot-sampling contribution in \cref{eq:pilot-fixed-strength-error} below a fraction $q\in(0,1)$ of a common target estimation error $\varepsilon_{\mathrm{target}}$ for every candidate is ensured by
\begin{equation}
\label{eq:pilot-break-even-accuracy}
\zeta
\leq
\frac{q\varepsilon_{\mathrm{target}}}{B_j}.
\end{equation}
Meeting that error target also requires the retained propagation to satisfy
\begin{equation}
\label{eq:pilot-break-even-truncation}
2|\sin(\eta_m/2)|\varepsilon_j
\leq
(1-q)\varepsilon_{\mathrm{target}}
\qquad
\text{for every }m\in\mathcal C_j;
\end{equation}
otherwise, increasing the pilot sample cannot compensate for the truncation interval.
The resulting pilot and direct paired-comparison shot counts scale as
\begin{equation}
\label{eq:pilot-versus-direct-shots}
N_j^{\mathrm{pilot}}
=
\mathcal{O}\!\left(
\frac{3^{w_j}B_j^2\log(M_{\mathrm{P},j}/\alpha_{\mathrm{pilot},j})}{q^2\varepsilon_{\mathrm{target}}^2}
\right),
\qquad
N^{\mathrm{pair}}
=
\mathcal{O}\!\left(
\frac{R_O^2\log(2/\alpha_{\mathrm{test},m})}{\varepsilon_{\mathrm{target}}^2}
\right).
\end{equation}
If $c_j^{\mathrm{pilot}}$ is the cost of one prefix pilot shot, $c_j^{\mathrm{pair}}$ is the cost of one paired original--mutant sample, and $s_j\in(0,1]$ is the fraction of candidate comparisons that the pilot allows the planner to avoid, the rough break-even scaling can be written
\begin{equation}
\label{eq:pilot-break-even-condition}
s_j|\mathcal C_j|
\geq
\kappa_{\mathrm{be}}
\frac{c_j^{\mathrm{pilot}}}{c_j^{\mathrm{pair}}}
\frac{3^{w_j}B_j^2}{q^2R_O^2}
\frac{\log(M_{\mathrm{P},j}/\alpha_{\mathrm{pilot},j})}{\log(2/\alpha_{\mathrm{test},m})}.
\end{equation}
Here $\kappa_{\mathrm{be}}>0$ absorbs constants suppressed by the big-$\mathcal{O}$ bounds, and $s_j$ is the realized avoided-comparison fraction.
This is a retrospective scaling comparison: a forecast requires calibrated constants and a pre-execution estimate or conservative bound for $s_j$ from disjoint calibration circuits.
The shared precision factor cancels, but coefficient mass, Pauli weight, truncation, and measurement costs remain; changing $K_{\mathrm{ret}}$ can improve one contribution while worsening another.
Strength reuse alone establishes no advantage over the reconstruction baselines below.

\subsubsection{Precision-Matched Analytical Baselines}
\label{sec:measurement-baselines}
Consider one fixed input, output observable, and mutation site with $N_{\mathrm{axis}}$ declared Pauli axes and $M$ declared axis--strength candidates.
All families share the original expectation $f(0)$, and $f_\ell(\eta)$ denotes the mutated expectation for axis $\ell$.
Assume that the full circuits at the interpolation strengths can be executed, the original prefix can be prepared for Pauli measurements, and all target families are fixed before sampling.
For an arbitrary Hermitian involution the interpolation baseline remains valid, but the Pauli-measurement baselines below require the declared tractable Pauli representations.
Make the comparisons at simultaneous additive gap error at most $\epsilon>0$ and a common failure budget $\beta\in(0,1)$, keeping threshold decisions and observed kill counts separate from this estimation criterion.
An estimate used for planning charges its failure probability to $\alpha_{\mathrm{pilot}}$; fresh observed classifications use their separate $\alpha_{\mathrm{test}}$ allocation as in \cref{thm:planner-soundness}.

The established three-point reconstruction \cite{ostaszewski2021structure,nakanishi2020sequential} gives, in the present signed convention,
\begin{equation}
\label{eq:three-point-coefficients}
a_\ell=f(0)-\frac{f_\ell(\pi/2)+f_\ell(-\pi/2)}{2},
\qquad b_\ell=\frac{f_\ell(-\pi/2)-f_\ell(\pi/2)}{2}.
\end{equation}
To propagate errors without allocating another failure probability to each strength, put $h=1-\cos\eta$ and $s=\sin\eta$.
The signed shift is
\begin{equation}
\label{eq:three-point-shift}
d_\ell(\eta)=h f(0)-\frac{h+s}{2}f_\ell(\pi/2)-\frac{h-s}{2}f_\ell(-\pi/2).
\end{equation}
If every measured configuration mean has error at most $e$, simultaneously, then
\begin{equation}
\label{eq:three-point-error}
|\widehat d_\ell(\eta)-d_\ell(\eta)|
\leq [h+\max(h,|s|)]e
\leq4e
\qquad\text{for every }\ell,\eta.
\end{equation}
The same bound holds for the absolute gaps by the reverse triangle inequality.
Thus $2N_{\mathrm{axis}}+1$ configurations, with $e=\epsilon/4$, suffice for every strength in every declared family, including a continuum; each sampled configuration requires enough shots to achieve that precision.
The common original estimate is measured once, and an exactly known original expectation can instead be inserted at zero sampling cost.
The auxiliary strengths need not belong to the mutant population, but their executable circuits and costs must be included.

Use a serial-time baseline in which each preparation, circuit execution, and measurement has cost $c$, and the method's classical analysis, compilation, and post-processing have total cost $T^{\mathrm{set}}$ in the same units.
Parallel execution requires a different scheduling model.
Empirical comparisons should also report quantum duration, classical time, and peak memory separately.
For independent outcomes of range $R$, Hoeffding's inequality permits the explicit mean-estimation shot count
\begin{equation}
\label{eq:baseline-shot-count}
N(R,e,u)=\left\lceil\frac{R^2}{2e^2}\log\frac{2}{u}\right\rceil.
\end{equation}
For direct execution, estimate the original and each of the $M$ mutants to error $\epsilon/2$, reusing the original estimate.
For interpolation, estimate the $2N_{\mathrm{axis}}+1$ configuration means to error $\epsilon/4$.
Let $S$ be the number of distinct nonidentity Pauli targets in the retained coefficient expansions, with maximum weight $w$.
For either prefix method, require the same certified truncation bound from \cref{eq:pilot-break-even-truncation} and put $\zeta=q\epsilon/B_j$ with $B_j>0$ as defined in \cref{sec:pilot-cost}.
Then simultaneous Pauli errors at most $\zeta$ keep every sampling contribution below $q\epsilon$; if $B_j=0$, no retained-coefficient sampling is needed and only the truncation interval remains.
The three elementary shot counts are
\begin{equation}
\label{eq:baseline-shot-allocations}
\begin{aligned}
n_{\mathrm d}&=N(R_O,\epsilon/2,\beta/(M+1)),\\
n_{\mathrm i}&=N(R_O,\epsilon/4,\beta/(2N_{\mathrm{axis}}+1)),\\
n_{\mathrm p}&=N(2,\zeta,\beta/S).
\end{aligned}
\end{equation}
For an explicit conservative shadow count matching the reference implementation, use $k=\lceil4\,3^w/\zeta^2\rceil$ shots per batch and the smallest positive odd integer $b\geq8\log(S/\beta)$ batches, so $N_{\mathrm{sh}}=kb$.
Indeed, a local-Pauli shadow estimator for a weight-at-most-$w$ string has variance at most $3^w$; Chebyshev's inequality bounds a batch's failure probability by $1/4$.
Hoeffding's inequality across independent batches bounds a median failure by $\exp(-b/8)$, and a union bound over $S$ targets gives failure at most $\beta$.
Clipping the medians preserves this guarantee.
For $w=1$, $S=2$, $\zeta=0.25$, and $\beta=0.05$, this gives $k=192$, $b=31$, and $N_{\mathrm{sh}}=5952$; a planner pilot uses its allocated per-candidate failure budget in place of $\beta$.
These counts suffice for the specified estimators; establishing optimal sample-complexity lower bounds requires a separate analysis.

Writing $c_0,c_m,c_{\ell,+},c_{\ell,-}$ for full-circuit shot costs, $c_{j,Q}$ for a targeted prefix Pauli shot, and $c_{j,\mathrm{sh}}$ for a randomized prefix shot, the comparable total costs are
\begin{equation}
\label{eq:four-baseline-costs}
\begin{aligned}
T_{\mathrm{direct}}&=T_{\mathrm{direct}}^{\mathrm{set}}+n_{\mathrm d}\left(c_0+\sum_{m=1}^{M}c_m\right),\\
T_{\mathrm{interp}}&=T_{\mathrm{interp}}^{\mathrm{set}}+n_{\mathrm i}\left(c_0+\sum_{\ell=1}^{N_{\mathrm{axis}}}(c_{\ell,+}+c_{\ell,-})\right),\\
T_{\mathrm{target}}&=T_{\mathrm{target}}^{\mathrm{set}}+n_{\mathrm p}\sum_{Q=1}^{S}c_{j,Q},\\
T_{\mathrm{shadow}}&=T_{\mathrm{shadow}}^{\mathrm{set}}+N_{\mathrm{sh}}c_{j,\mathrm{sh}}.
\end{aligned}
\end{equation}
The prefix setup costs include propagation and every discarded-coefficient certificate; all setup costs include the processing required to return the declared $M$ estimates.
The targeted baseline measures strings separately; grouping compatible measurements can improve it and should be included when available.
High Pauli weight does not introduce a $3^w$ shot penalty for a directly measured Pauli product with outcomes $\pm1$, although it can affect basis-change, readout, and per-shot costs.
With a single propagated Pauli and one noncommuting Pauli axis, $S\leq2$, so targeted measurement is the natural prefix competitor.

The conditional regimes follow directly from \cref{eq:four-baseline-costs}.
Direct execution is competitive when few candidates are requested and the setup or auxiliary-configuration cost exceeds the saved sampling cost.
For many strengths on a few axes, interpolation uses $2N_{\mathrm{axis}}+1$ configurations independently of $M$ and avoids a potentially intractable propagated Pauli expansion.
For a modest target list, especially at large weight, targeted prefix measurements avoid the shadow locality penalty and can also exploit a shorter prefix than the full interpolation circuits.
Shared shadows are attractive only when many targets can reuse low-weight measurements sufficiently to compensate for their sample count and setup cost: their precise comparison against separate targeted measurements is $T_{\mathrm{shadow}}^{\mathrm{set}}-T_{\mathrm{target}}^{\mathrm{set}}+N_{\mathrm{sh}}c_{j,\mathrm{sh}}<n_{\mathrm p}\sum_Q c_{j,Q}$.
Reuse across additional observables can enlarge that target union, but all targets and costs must enter its simultaneous guarantee.
For any proposed deployment, these comparisons must account for calibrated costs and compatible truncation error alongside candidate count.
All four methods can still leave threshold classifications unresolved near $\gamma$.

\subsubsection{Light-Cone Filtering}
Use \cref{eq:core-light-cone-discard} before collecting pilot data; a retained Pauli approximation cannot prove that the full mutation effect is exactly zero.

\subsubsection{Algorithm}
\begin{enumerate}
\item Expand $O$ in the Pauli basis and initialize the retained coefficient--Pauli list:
$$
T
\leftarrow
\{(c_P,P):c_P\neq 0\}.
$$
If this initial list is truncated, record its discarded coefficient mass as $d_{\mathrm{init}}$.
\item For gates from the end of the circuit to the beginning, conjugate each Pauli term through the gate.
If non-Clifford expansion occurs, combine contributions to identical Pauli strings and retain at most $K_{\mathrm{ret}}$ terms with largest coefficient magnitudes.
Record every discarded coefficient mass and compute $\varepsilon_j$ from \cref{eq:pauli-propagation-error}.
\item At each mutation site $j$, collect the union of retained Pauli strings required by its Pauli-rotation and relative Clifford candidates.
\item Execute the original prefix through $G_j$, collect one randomized single-qubit pilot sample, estimate all required signed expectations with a simultaneous accuracy target $(\zeta,\alpha_{\mathrm{pilot},j})$, and clip the raw estimates to $[-1,1]$.
\item For each Pauli-axis candidate, compute $(\widehat a_m,\widehat b_m)$ from \cref{eq:pilot-pauli-coefficients}, the fixed-strength score and error bar from \cref{eq:pilot-fixed-strength-score,eq:pilot-fixed-strength-error}, and the all-strength diagnostics from \cref{eq:pilot-all-strength-diagnostics}.
\item Separately compute the magnitude-only envelope in \cref{eq:pilot-magnitude-envelope} and report $\widehat\kappa_m^{\mathrm{rot}}$ as a cancellation diagnostic.
\item For each relative Clifford candidate, combine the retained observable-difference terms and compute \cref{eq:pilot-clifford-score,eq:pilot-clifford-error} from the same site-level pilot data.
\item Select $K_{\mathrm{ret}}$ against combined truncation, pilot error, and total cost; charge propagation and pilot costs to the execution-planning budget; record the evidence type separately from the execution-decision status; and report the result as a truncated estimate with separate truncation and pilot-sampling uncertainties.
\end{enumerate}
\subsubsection{Interpretation}
The pilot estimators preserve the signs needed to represent state orthogonality, Pauli-term interference, and cancellation between the sine and cosine contributions.
They also expose the actual resource limitation: high-weight retained strings require exponentially more local-Pauli shadow samples even though the bound is independent of the total register size at fixed weight.
For a single Pauli observable propagated through a Clifford suffix, a noncommuting finite-strength rotation generally requires the two signed expectations $\Tr(P\tau_j)$ and $\Tr(iPA\tau_j)$.
One suffices only when a coefficient is known or its strength multiplier vanishes; a known stabilizer tableau permits both to be computed without measurements.
Comparisons of execution-planning benefit should include the precision-matched interpolation and targeted-measurement baselines in \cref{sec:measurement-baselines}, alongside structural and operator-based heuristics and random selection, and must charge pilot costs to the execution budget.
\section{Detailed Pre-Execution Filters}
\label{app:filtering}
The formal score $E_m$ ranks normalized signals per comparison cost, but computing it may itself be costly.
Before estimating $\Delta_m$, inexpensive analysis can remove globally equivalent mutants or label test-relative zero effects without executing them.
Apply the population-accounting rule following \cref{eq:suite-mutation-score} to these exclusions and skips.
\subsection{Backward-Light-Cone Filter}
Apply the structural rule in \cref{eq:core-light-cone-discard}, whose proof is given in \cref{sec:backward-light-cone-proof}.
It uses only exact topology and gate supports, independently of the input state and of any truncated approximation to $O_j$.

\subsection{Transpiler-Based Equivalence Filter}
\label{app:transpiler-equivalence}
A test-independent filter uses circuit optimization.
Let $\mathsf T$ denote a semantics-preserving transpilation or optimization pass.
If a deterministic, semantics-preserving optimization pipeline maps $U$ and $U_m$ to the same canonical circuit representation, then the mutant can be treated as equivalent under that pipeline.
This gives a sufficient practical test:
$$
\mathsf T(U)=\mathsf T(U_m)
\quad\Rightarrow\quad
\text{discard } m.
$$
For example, a mutation that replaces a gate by another gate that is optimized away, merged with a neighbor, or reduced to the same basis-gate sequence need not be executed.
This specializes the established idea of compiler-based equivalent-mutant detection to quantum-circuit compilation \cite{papadakis2015tce,campos2026tbe}.
The implication requires equality of the complete compiled representation, including gate parameters, qubit and classical-bit mappings, and any output convention relevant to the test.
It also requires the same transpiler version, options, target, and random seed, or another mechanism that makes the transformation deterministic.
Most importantly, the filter assumes that the transpiler preserves the implemented channel for both inputs.
If the purpose of the analysis is to test the transpiler itself, this assumption is inappropriate.
Tolerance-based gate removal, numerical synthesis, or approximate circuit comparison does not supply the exact premise, even when synthesis approximation is nominally disabled.
The reference implementation therefore treats Qiskit matches only as candidates for an independent exact channel comparison of the source circuits, with a default four-qubit symbolic limit.
Only a successful proof supplies the semantic-exclusion certificate; unsupported or unproved matches continue through the remaining planning rules, and failure to prove equivalence does not prove non-equivalence.
\subsection{Circuit-Equivalence Filter}
\label{app:circuit-equivalence}
Transpiled text equality guarantees equivalence under the stated assumptions.
Two circuits may have different textual representations but the same physical action on states, possibly up to global phase.
Therefore, stronger filters may compare circuit objects, canonicalized DAGs, decision diagrams, stabilizer representations, or other circuit-equivalence abstractions; decision-diagram methods provide an established exact approach for relevant circuit families \cite{burgholzer2021advanced}.
\Cref{sec:semantic-contract,tab:equivalence-hierarchy} distinguish channel equivalence, state equivalence on a declared input domain, interface-relative observational equivalence, and equality of a selected expectation.
An adequacy population must state which semantics defines its equivalence exclusions; the unrestricted channel results continue to quantify over every circuit input.
For small unitary circuits, comparing the full unitaries up to global phase provides circuit-equivalence ground truth, whereas state-vector comparison for one input provides only output-state ground truth for that input.
For general quantum circuits, the standard promise version of non-identity checking up to global phase is QMA-complete, and the corresponding non-equivalence problem inherits that worst-case hardness \cite[Sec.~4, especially Definition~3, Theorem~1, and the remark following Definition~3 on p.~469]{janzing2005nonidentity}.
In QMA (Quantum Merlin--Arthur), a polynomial-time quantum verifier checks a polynomial-size quantum witness with bounded error.
Canonicalization, stabilizer methods, decision diagrams, and other specialized equivalence tools can nevertheless remove many mutants in tractable circuit families without solving the unrestricted problem.
\subsection{Mutant--Mutant Redundancy Filter}
The same filtering idea applies to both original-versus-mutant and mutant-versus-mutant comparisons.
If two mutants produce identical optimized circuits,
$$
\mathsf T(U_{m_1})=\mathsf T(U_{m_2}),
$$
then they are redundant under the same transpiler assumptions stated above.
For expectation-value behavior, define
$$
\mu_m(\rho,O)
=
\Tr\!\left(OU_m\rho U_m^\dagger\right).
$$
Two mutants are exactly redundant under the fixed expectation-value test when $\mu_{m_1}(\rho,O)=\mu_{m_2}(\rho,O)$.
Equality of the absolute original--mutant gaps, $\Delta_{m_1}=\Delta_{m_2}$, is not sufficient because the signed expectation shifts can point in opposite directions.
For a distribution-level test, redundancy instead requires equality of the relevant outcome distributions, and equality of their means is generally insufficient.
Two mutants that are redundant under one test may differ under another, so the classification must name the input, observable or measurement, and equivalence tolerance.
Certified equality of population detection indicators across the declared suite permits representative classification as specified in \cref{thm:planner-soundness}; transferring full distributions or sample-level outcomes requires the stronger corresponding equivalence.
For a fixed adequacy population, weight resolved class indicators by their original multiplicities: if class $c$ has size $w_c$ and common detection indicator $d_c$, the score is $\sum_c w_cd_c/\sum_c w_c$.
Unresolved classes instead contribute to the bounds in \cref{eq:planner-score-bounds}.
An unweighted score over representatives instead measures a different, class-level population and must be named as such.
Near-equivalence inferred from finite samples or approximate output profiles is only a budgeted clustering heuristic and can discard behavior that a stronger test would distinguish.
This supports a budgeted mutation strategy:
$$
\begin{aligned}
&\text{generate many candidates}
\rightarrow
\text{filter equivalent mutants}
\\
&\rightarrow
\text{cluster redundant mutants}
\rightarrow
\text{sample representatives}.
\end{aligned}
$$
\subsection{Subsumption and Mutation Buckets}
Subsumption is defined relative to a test collection and a detection rule.
Using $K_{\mathcal T}(m)$ from \cref{eq:suite-detection-set}, mutant $m_1$ dynamically subsumes $m_2$ when $K_{\mathcal T}(m_1)$ is nonempty and $K_{\mathcal T}(m_1)\subseteq K_{\mathcal T}(m_2)$, because any test that detects $m_1$ also detects $m_2$.
This framework uses the established test-collection-relative notion of dynamic subsumption \cite{ammann2014minimal,kurtz2016dominator}.
This relation cannot be inferred from a single value of $\Delta_m$ or from syntactic mutation type alone.
The population-level relation uses exact detection sets; subsumption inferred from finite-shot kills is a noisy estimate whose errors are correlated across mutants when original-circuit shots are reused.

Many mutation operators produce similar behavioral effects.
For example, replacing $X$ by $Y$ and replacing $X$ by $Z$ may both strongly disturb the same tests.
Executing both may provide little additional information.
We therefore group candidate mutants into buckets according to expected effect profile.
A bucket may be defined by gate type, mutation type, local topology, estimated $\Delta_m$, or empirical output-distance profile.
Under a fixed budget, we select at most $K_{\mathrm{bucket}}$ representatives per bucket.
For smoke testing or debugging, a practical rule is to prefer diverse high-effect mutants over many similar high-effect mutants.
For comparing suite adequacy, representative selection should instead preserve the declared population or stratification and include difficult mutants near the detection boundary; maximizing the evaluated suite's own gaps would make the assessment circular.
Unlike exact equivalence or established subsumption, bucket membership provides no semantic guarantee and must be evaluated by the coverage or fault-detection behavior lost when representatives are omitted.
\section{Local Perturbation Remainder}
\label{app:local-perturbation-derivation}

The first-order generator derivative is standard \cite[Sec.~III, especially Eqs.~(3), (5), and (6)]{schuld2019evaluating}; this appendix records only its local specialization and the nonasymptotic remainder used in \cref{eq:core-perturbative-bound}.
\begin{proof}[Proof of \Cref{eq:core-perturbative-bound}]
By \cref{eq:core-exact-local}, the signed local expectation difference is $\Tr((O_j-O_j(\eta))\tau_j)$ for
$$
O_j(t)
=
e^{itH_m}O_je^{-itH_m}.
$$
The standard generator derivative and one further differentiation give
$$
O_j'(t)
=
i e^{itH_m}[H_m,O_j]e^{-itH_m}
$$
and
$$
O_j''(t)
=
-e^{itH_m}[H_m,[H_m,O_j]]e^{-itH_m}.
$$
Taylor's theorem with an integral remainder and unitary invariance of the operator norm give
$$
O_j(\eta)
=
O_j+i\eta[H_m,O_j]+R_2(\eta),
\qquad
\|R_2(\eta)\|_\infty
\leq
\frac{\eta^2}{2}
\|[H_m,[H_m,O_j]]\|_\infty.
$$
Because $\|\tau_j\|_1=1$, H\"older's inequality bounds $|\Tr(R_2(\eta)\tau_j)|$ by $\|R_2(\eta)\|_\infty$.
The reverse triangle inequality then yields \cref{eq:core-perturbative-bound} with first-order magnitude $|\eta|S_m^{\mathrm{state}}$.
\end{proof}

\section{Mutation Operators, Test Oracles, and Applicability}
\label{app:mutation-tool-coverage}

\Cref{tab:tool-coverage} records what each tool or study changes, how its tests judge the resulting mutant, and which parts of our analysis can handle the edit.
The final column describes our proposed application of the analysis to the published operators; the cited tools supply the mutations and tests in the preceding columns.
The inventory records applicability by operator family; the fraction of generated mutants covered remains unmeasured.
QuanForge additionally applies gate and parameter mutations to quantum neural networks, parameterized quantum circuits trained as learning models \cite{shao2026quanforge}.

For each unitary edit, form $W_m=\widetilde G_jG_j^\dagger$ on the union of the original and replacement supports before choosing an analysis.
The table uses \emph{rotation} for a fixed involutory axis, \emph{one-qubit} for every fixed one-qubit relative unitary, \emph{Clifford} when $W_m$ is Clifford, and \emph{spectral} for other tractable local replacements.
These cases overlap; deletion uses $\widetilde G_j=I$ and insertion uses $G_j=I$.
A changed control or wire can enlarge the union support even when both operations have small arity.
A tractable observable update still requires state expectations to determine the test's gap.
Large-support non-involutory updates may instead need the bounded perturbative fallback; measurement, reset, and classical-control edits require a different model.

Passing one of these tests is distinct from establishing semantic equivalence.
QMutPy's syntactically equivalent gates and QCRMut's gate classes group compatible signatures; channel equivalence requires a separate semantic check.
QCRMut rejects unchanged circuit objects, while QuMu leaves equivalent-mutant identification to future work.
Our earlier unit-testing study removes matching output statevectors at tolerance $10^{-10}$ for the tested input \cite[Secs.~III-B, IV-A--IV-C]{miranskyy2025feasibility}.
Such input-specific numerical filtering must be distinguished from an algebraic certificate of channel equivalence.
An absence of semantic certification is not evidence that every generated mutant is non-equivalent.

The Rotation Gate Insertion (RGI) operator in that study is $W_m=\exp[-i\theta(\cos\phi\,X+\sin\phi\,Y)/2]$ on the inserted gate's qubit \cite[Sec.~IV-A]{miranskyy2025feasibility}.
For fixed $\phi$, its axis $A=\cos\phi\,X+\sin\phi\,Y$ satisfies $A^2=I$, so \cref{thm:core-exact-pauli-rotation,cor:fixed-one-qubit-replacement} apply even though $A$ need not be a Pauli string.
This establishes applicability to the published rotational operator; evaluating the present planner and establishing the fault realism of our illustrative $0.01$-radian offset require further evidence beyond that study.

\begin{table}[!htbp]
\caption{Published mutation operators and tests, mapped to the analyses available in this paper.
The final column uses the overlapping cases defined in \cref{app:mutation-tool-coverage}; equivalence checks are discussed separately in the text.}
\label{tab:tool-coverage}
\centering\small
\renewcommand{\arraystretch}{1.12}
\begin{tabularx}{\linewidth}{@{}>{\raggedright\arraybackslash}p{0.12\linewidth}*{3}{>{\raggedright\arraybackslash}X}@{}}
\toprule
\textbf{Tool / study} & \textbf{Mutations generated} & \textbf{Test used in the cited work} & \textbf{Analysis available here} \\
\midrule
QMutPy \cite[Sec.~2.2]{fortunato2022qmutpy_tool}
& Gate replacement, deletion, insertion; measurement insertion and deletion.
& Existing Python test assertions.
& Unitary edits: rotation, one-qubit, Clifford, or spectral according to $W_m$; measurement edits outside scope. \\
Muskit \cite{mendiluze2021muskit}
& Add, remove, and replace gates.
& Specified valid outputs and probabilities; wrong-output and statistical oracles.
& Same unitary cases; gate names alone do not establish a rotation family. \\
QCRMut \cite[Secs.~2.3--2.4, 3.2]{garcia2024qcrmut}
& Insert, delete, change gate name, or change attributes (parameters and qubit locations); immutable positions.
& Experiment uses specified outputs and chi-square comparison.
& Parameter changes may be rotations; other changes use one-qubit, Clifford, or spectral analysis on the support union. \\
QuMu \cite[Secs.~4--5]{polo2025mutation_based}
& Initialization errors, gate swaps, control-gate faults, gate duplication and deletion.
& Sampled output-histogram distance with tolerance.
& Unitary gate and control edits use the matching case; a changed input is outside the fixed-input model unless represented as a unitary preparation edit. \\
Unit-testing study \cite[Secs.~III-B, IV-A--IV-C]{miranskyy2025feasibility}
& Gate replacement, deletion, insertion, and Rotation Gate Insertion (RGI) with $\theta=\phi=\pi/180$.
& Statistical, Swap, Statevector, and Inverse tests.
& RGI has one-qubit support and a fixed involutory axis; other gate edits use the case determined by $W_m$. \\
\bottomrule
\end{tabularx}
\end{table}

\end{document}